\newtheorem{theorem}{Theorem}
\newtheorem{lemma}{Lemma}
\newtheorem{claim}{Claim}
\newtheorem{corollary}{Corollary}
\newtheorem{definition}{Definition}
\newtheorem{observation}{Observation}
\newtheorem{fact}{Fact}
\def\E{\mathbb{E}}
\def\Re{\mathbb{R}}
\def\N{\mathbb{N}}
\def\RePart{\mathrm{Re}}
\def\MCRATIO{$0.861$\xspace}
\def \MtwoRatio{$0.875$}
\def\MtwoRatioSD{$0.921$\xspace}
\def\MDCRatio{$0.79$\xspace}
\def\MDCRatioSD{$0.81$\xspace}
\def\slowdown{$1.61$\xspace}
\DeclarePairedDelimiter{\abs}{\lvert}{\rvert}
\DeclarePairedDelimiter{\norm}{\|}{\|}
\DeclarePairedDelimiter{\braket}{\langle}{\rangle}
\DeclarePairedDelimiter{\set}{ \{ }{ \} }
\newcommand{\bkt}[2]{\braket{#1, #2}}
\renewcommand{\L}{\ensuremath{\mathcal{L}}}
\newcommand{\eps}{\varepsilon}
\newcommand{\MC}{\textsc{Max-Cut}\xspace}
\newcommand{\MdC}{\textsc{Max-DiCut}\xspace}
\newcommand{\MtwoSAT}{\textsc{Max-2SAT}\xspace}
\newcommand{\MS}{\textsc{Max-SAT}\xspace}
\newcommand{\MB}{\textsc{Max-Bisection}\xspace}
\newcommand{\cardMC}{\textsc{Max-Cut-SC}\xspace}
\newcommand{\Ito}{It\^{o}\xspace}
\newcommand{\Int}{\textcsc{Int}}
\newcommand{\bd}{\partial}
\newcommand{\OPT}{\textsc{OPT}}
\newcommand{\SDP}{\textsc{SDP}}
\def\bw {{\bf w}}
\def\bX {{\bf X}}
\def\bY {{\bf Y}}
\def\bW {{\bf W}}
\def\bB {{\bf B}}
\def\bb {{\bf b}}
\def\bv {{\bf v}}
\def\by {{\bf y}}
\def\bO {{\bf O}}
\def\bx {{\bf x}}
\def\FF {{\cal F}}
\def \bd {{\partial}}
\def \Int {{\operatorname{Int} }}
\newcommand{\rhombus}{\mathbb{S}}
\def \MtwoRatio{$0.8749$\xspace}
\def \MtwoRatioText{0.8749\xspace}
\newcommand{\mcbasicratio}{\MCRATIO}
\newcommand{\mcslowratio}{0.874\xspace}
\newcommand{\mcexpratio}{0.878\xspace}
\newcommand{\GWratio}{0.878\xspace}
\newcommand{\mcallratio}{0.843\xspace}
\newcommand{\mcslowallratio}{0.858\xspace}
\newcommand{\eqdef}{:=}
\newcommand{\NOCOMMENT}{}
    \newcommand{\ggnote}[1]{}
    \newcommand{\msnote}[1] {}
    \newcommand{\nbnote}[1] {}
    \newcommand{\snote}[1] {}
    \newcommand{\rsnote}[1] {}
    \newcommand{\saznote}[1] {}
\newcounter{note}[section]
\renewcommand{\thenote}{\thesection.\arabic{note}}
\newcommand{\ggnote}[1]{\refstepcounter{note}\textcolor{blue}{$\ll${\bf Guru~\thenote:} {\sf #1}$\gg$\marginpar{\tiny\bf GG~\thenote}}}
\newcommand{\nbnote}[1]{\refstepcounter{note}\textcolor{red}{$\ll${\bf Nikhil~\thenote:} {\sf #1}$\gg$\marginpar{\tiny\bf NB~\thenote}}}
\newcommand{\snote}[1]{\refstepcounter{note}\textcolor{brown}{$\ll${\bf Sasho~\thenote:} {\sf #1}$\gg$\marginpar{\tiny\bf SN~\thenote}}}
\newcommand{\rsnote}[1]{\refstepcounter{note}\textcolor{magenta}{$\ll${\bf Roy~\thenote:} {\sf #1}$\gg$\marginpar{\tiny\bf RS~\thenote}}}
\newcommand{\msnote}[1]{\refstepcounter{note}\textcolor{cyan}{$\ll${\bf Mohit~\thenote:} {\sf #1}$\gg$\marginpar{\tiny\bf MS~\thenote}}}
\newcommand{\saznote}[1]{\refstepcounter{note}\textcolor{orange}{$\ll${\bf Sepehr~\thenote:} {\sf #1}$\gg$\marginpar{\tiny\bf SA~\thenote}}}
\newcommand{\cut}[1]{}
\newcommand*\pFq[6][8]{%
  \begingroup 
  \pFqmuskip=#1mu\relax
  \mathchardef\normalcomma=\mathcode`,
  \mathcode`\,=\string"8000
  \begingroup\lccode`\~=`\,
  \lowercase{\endgroup\let~}\pFqcomma
  {}_{#2}F_{#3}{\left[\genfrac..{0pt}{}{#4}{#5};#6\right]}%
  \endgroup
}
\newcommand{\pFqcomma}{{\normalcomma}\mskip\pFqmuskip}
\date{}
\begin{document}
\title{Sticky Brownian Rounding and its Applications to Constraint Satisfaction Problems}

\author{
 Sepehr Abbasi-Zadeh \thanks{University of Toronto. E-mail: \texttt{sabbasizadeh@gmail.com}.}
 \and
 Nikhil Bansal\thanks{TU Eindhoven, and Centrum Wiskunde \& Informatica. E-mail: \texttt{bansal@gmail.com}.}
 \and
 Guru Guruganesh \thanks{Google Research. Email: \texttt{gurug@google.com}.}
 \and
 Aleksandar Nikolov \thanks{University of Toronto. Email:\texttt{anikolov@cs.toronto.edu}.}
 \and
 Roy Schwartz \thanks{Technion. Email:\texttt{schwartz@cs.technion.ac.il}.}
 \and
 Mohit Singh \thanks{Georgia Institute of Technology. Email:\texttt{mohitsinghr@gmail.com}.}
}

\maketitle
\begin{abstract}
    Semidefinite programming is a powerful tool in the design and analysis of approximation algorithms for combinatorial optimization problems.  In particular, the random hyperplane rounding method of \citet{GW95} has been extensively studied for more than two decades, resulting in various extensions to the original technique and beautiful algorithms for a wide range of applications. Despite the fact that this approach yields tight approximation guarantees for some problems, {\em e.g.}, \MC, for many others, {\em e.g.}, \MS and \MdC, the tight approximation ratio is still unknown. One of the main reasons for this is the fact that very few techniques for rounding semi-definite relaxations are known.

In this work, we present a new general and simple method for rounding semi-definite programs, based on Brownian motion.  Our  approach is inspired by recent results in algorithmic discrepancy theory.  We develop and present tools for analyzing our new rounding algorithms, utilizing mathematical machinery from the theory of Brownian motion, complex analysis, and partial differential equations.  Focusing on constraint satisfaction problems, we apply our method to several classical problems, including \MC, \MtwoSAT, and \MdC, and derive new algorithms that are competitive with the best known results.  To illustrate the versatility and general applicability of our approach, we give new approximation algorithms for the \MC problem with side constraints that crucially utilizes measure concentration results for the Sticky Brownian Motion, a feature missing from hyperplane rounding and its generalizations.

\end{abstract}

\thispagestyle{empty}
\newpage
\setcounter{page}{1}

\section{Introduction}

Semi-definite programming (SDP) is one of the most powerful tools in the design of approximation algorithms for combinatorial optimization problems.
Semi-definite programs can be viewed as relaxed quadratic programs whose variables are allowed to be vectors instead of scalars and scalar multiplication is replaced by inner products between the vectors.
The prominent approach when designing SDP based approximation algorithms is {\em rounding}: $(1)$ an SDP relaxation is formulated for the given problem, $(2)$ the SDP relaxation is solved, and lastly $(3)$ the fractional solution for the SDP relaxation is transformed into a feasible integral solution to the original problem, hence the term {\em rounding}.

In their seminal work, Goemans and Williamson \cite{GW95} presented an elegant and remarkably simple rounding method for SDPs: a uniformly random hyperplane (through the origin) is chosen, and then each variable, which is a vector, is assigned to the side of the hyperplane it belongs to.
This (binary) assignment is used to round the vectors and output an integral solution.
For example, when considering \MC, each side of the hyperplane corresponds to a different side of the cut.
Using the random hyperplane rounding, \cite{GW95} gave the first non-trivial approximation guarantees for fundamental problems such as \MC, \MtwoSAT, and \MdC.
Perhaps the most celebrated result of \cite{GW95} is the $\GWratio$ approximation for \MC, which is known to be tight \cite{KKMO,MOO} assuming Khot's Unique Games Conjecture \cite{Khot02-UG}.
Since then, the random hyperplane method has inspired, for more than two decades now, a large body of research, both in approximation algorithms and in hardness of approximation.
In particular, many extensions and generalizations of the random hyperplane rounding method have been proposed and applied to a wide range of applications
, {\em e.g.}, \MdC and \MtwoSAT \cite{feige1995approximating,LLZ}, \MS \cite{asano2002improved,avidor2005improved}, \textsc{Max-Bisection} \cite{RaghavendraT12,AustrinBG16},  \textsc{Max-Agreement} in correlation clustering \cite{CharikarGW05}, the \textsc{Cut-Norm} of a matrix~\cite{AlonNaor}.

Despite this success and the significant work on variants and extensions of the random hyperplane method, the best possible approximation ratios for many fundamental problems still remain elusive.
Several such examples include \MS, \textsc{Max-Bisection}, \textsc{Max-2CSP}, and \MdC.
Perhaps the most crucial reason for the above failure is the fact that besides the random hyperplane method and its variants, very few methods for rounding SDPs are known.

A sequence of papers by \citet{Austrin10,Raghavendra08,RaghavendraS09} has shown that SDP rounding algorithms that are based on the random hyperplane method and its extensions nearly match the Unique Games hardness of any \textsc{Max-CSP}, as well as the integrality gap of a natural family of SDP relaxations. 
However,  the universal rounding proposed by Raghavendra and Steurer is impractical, as it involves a brute-force search on a large constant-sized instance of the problem.
Moreover, their methods only allow computing an $\varepsilon$ additive approximation to the approximation ratio in time double-exponential in $1/\varepsilon$.

\subsection{Our Results and Techniques.}

Our main contributions are (1) to propose a new SDP rounding technique that is 
based on diffusion processes, and, in particular, on Brownian motion;
(2) to develop the needed tools for analyzing our new SDP rounding
technique by deploying a variety of mathematical techniques from
probability theory, complex analysis and partial differential
equations (PDEs); (3) to show that this rounding technique has useful
concentration of measure properties, not present in random hyperplane based techniques, that can be used
to obtain new approximation algorithms for a version of the \MC
problem with multiple global side constraints.

Our method is inspired by the recent success of Brownian motion based
algorithms for constructive discrepancy minimization, where it was
used to give the first constructive proofs of some of the most
powerful results in discrepancy
theory~\cite{bansal2010constructive,lovett2015constructive,BansalDG16,BansalDGL18}. The
basic idea is to use the solution to the semi-definite program to
define the starting point and the covariance matrix of the
diffusion process, and let the process evolve until it reaches an
integral solution. As the process is forced to stay inside the cube
$[-1,1]^n$ (for \MC) or $[0,1]^n$ (for \MtwoSAT and other problems), and to stick to any
face it reaches, we call the most basic version of our algorithm
(without any enhancements) the \emph{Sticky Brownian Motion}
rounding. The algorithm is defined more formally in Section
\ref{s:sbm-alg}.

\paragraph{Sticky Brownian Motion.}
Using the tools we introduce, we show that
this algorithm is already competitive with the state of the art
results for \MC, \MtwoSAT, and \MdC.

\begin{theorem}\label{thm:maxcut}
  The basic Brownian rounding achieves an approximation ration of \MCRATIO
  for the \MC problem.
  Moreover, when the \MC instance has value $1-\eps$, Sticky Brownian
  Motion achieves value $1- \Omega(\sqrt{\eps})$.
\end{theorem}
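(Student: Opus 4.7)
My plan is to reduce the analysis to a per-edge comparison with the SDP, characterize the resulting edge probability as the solution of a two-dimensional boundary value problem, and then bound the ratio by combining a series representation with an asymptotic analysis near the degenerate endpoints.

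\textbf{Per-edge reduction.} Let $X(t)=(X_1(t),\dots,X_n(t))$ be the Sticky Brownian Motion started at the origin with instantaneous covariance $\langle v_i,v_j\rangle$, where $v_1,\dots,v_n$ are the SDP vectors, and let $X^\infty \in \{-1,+1\}^n$ be its almost sure limit. By linearity of expectation, the expected cut value equals $\sum_{(i,j)\in E} w_{ij}\,P(\rho_{ij})$ with $\rho_{ij}=\langle v_i,v_j\rangle$ and $P(\rho)$ the probability that a two-dimensional Sticky Brownian Motion on $[-1,1]^2$ started at the origin with covariance $\begin{pmatrix}1 & \rho \\ \rho & 1\end{pmatrix}$ exits at one of the opposite corners $(\pm 1,\mp 1)$. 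Since the SDP contribution of edge $(i,j)$ is $w_{ij}(1-\rho_{ij})/2$, the approximation ratio follows once I prove the pointwise inequality $P(\rho)\ge \mcbasicratio\cdot(1-\rho)/2$ for every $\rho\in[-1,1]$.

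\textbf{PDE for $P(\rho)$.} Let $u_\rho(x,y)$ be the cut probability from starting point $(x,y)$. In the interior of $[-1,1]^2$ the process is an unconstrained correlated Brownian motion, so $u_\rho$ satisfies the elliptic equation
\[
u_{xx} + 2\rho\,u_{xy} + u_{yy} = 0,
\]
while on each open edge of the square the sticky behavior freezes one coordinate and the other is a one-dimensional Brownian motion, yielding the linear boundary data $u_\rho(\pm 1,y)=(1\mp y)/2$ and $u_\rho(x,\pm 1)=(1\mp x)/2$. Then $P(\rho)=u_\rho(0,0)$. The linear change of variables $(\xi,\eta)=\bigl((x+y)/\sqrt{2(1+\rho)},(x-y)/\sqrt{2(1-\rho)}\bigr)$ turns the operator into the Laplacian and the square into a rhombus $\rhombus_\rho$ centered at the origin. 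After subtracting a linear harmonic function to kill the antisymmetric part of the data, the residual problem on $\rhombus_\rho$ can be solved either by separation of variables, producing a Fourier series for $P(\rho)$, or by a Schwarz--Christoffel conformal map to the upper half-plane followed by the Poisson integral.

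\textbf{Ratio bound.} Using the series (or integral) from the previous step, I would verify the inequality $P(\rho)/\tfrac{1-\rho}{2}\ge \mcbasicratio$ by combining asymptotic analysis at the endpoints with an interval arithmetic sweep on $\rho\in[-1,1]$. The endpoints are controlled directly: $P(1)=0$ with $P(\rho)\sim(1-\rho)/2$ as $\rho\to 1$, and $P(-1)=1$, both giving boundary ratios $1$; moreover the symmetry $P(-\rho)=1-P(\rho)$, obtained by negating one coordinate, restricts the search to $\rho\in[0,1]$. A truncated series with an explicit tail bound suffices on the interior.

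\textbf{Near-perfect instances and main obstacle.} If the \SDP value is $1-\eps$, then $\sum_{(i,j)\in E} w_{ij}(1+\rho_{ij})/2 \le \eps$. A careful expansion of $u_\rho$ near $\rho=-1$, where the rhombus $\rhombus_\rho$ degenerates into a long thin strip, should yield the endpoint bound $1-P(\rho)\le C\sqrt{1+\rho}$. Plugging this into the per-edge estimate and applying Cauchy--Schwarz to the weighted average of $\sqrt{1+\rho_{ij}}$ then produces a cut of weight at least $1-O(\sqrt{\eps})$, as claimed. The principal difficulty is the ratio bound: turning the series representation of $P(\rho)$ into a provable uniform lower bound of $\mcbasicratio\cdot(1-\rho)/2$, because the ratio appears to attain its minimum at a non-trivial interior $\rho^\ast$ and the bound is not far from tight. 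A related but lesser challenge is extracting the correct $\sqrt{1+\rho}$ rate, rather than a weaker polynomial rate, from the degenerating boundary value problem.
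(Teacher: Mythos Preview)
Your plan is essentially correct and follows the same architecture as the paper: per-edge reduction to a two-dimensional correlated Brownian motion, linear change of variables turning the square into a rhombus $\rhombus$ on which the process becomes standard Brownian motion, then a Schwarz--Christoffel conformal map to compute the harmonic measure on $\partial\rhombus$, and finally an asymptotic analysis near $\rho\to-1$ giving $1-P(\rho)=O(\sqrt{1+\rho})$, which combined with Cauchy--Schwarz over the edges yields the $1-\Omega(\sqrt{\eps})$ statement.

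The execution differs in one substantive way. The paper does not stop at a series or Poisson-integral representation to be handled by interval arithmetic; it pushes the Schwarz--Christoffel integral through incomplete-beta identities to an \emph{exact closed form} for $P(\rho)$ in terms of $\Gamma$ and a single ${}_3F_2$ hypergeometric value (their Theorem~\ref{thm:cut-prob}). The $0.861$ constant then comes from evaluating that formula, and the $1-\frac{4}{\pi}\delta+O(\delta^2)$ asymptotic at $\theta=(1-\delta)\pi$ (their Theorem~\ref{th:pi}) falls out of the same expression, giving exactly the $\sqrt{1+\rho}$ rate you anticipate. Your interval-arithmetic route would also work but is less clean.

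One caution: your alternative of ``separation of variables, producing a Fourier series'' is not a viable path on a general rhombus, since Laplace's equation does not separate in rhombic coordinates with these boundary data; the conformal-map option is the one that actually goes through. Also, the symmetry $P(-\rho)=1-P(\rho)$ reduces the computation of $P$ but not the range over which you must check the ratio $P(\rho)/\tfrac{1-\rho}{2}$, since that ratio is not invariant under $\rho\mapsto-\rho$.
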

In particular, using complex analysis and evaluating various elliptic integrals, we show that the separation probability for any two unit vectors $u$ and $v$ separated by an angle $\theta$, is given by a certain hypergeometric function of $\theta$ (see Theorem \ref{thm:cut-prob} for details). This precise characterization of the separation probability also proves that the Sticky Brownian Motion rounding is different from the random hyperplane rounding. The overview of the analysis is in Section \ref{s:mc-overview} and Section \ref{sec:conformal} has the details.

We can also analytically show the following upper bound for \MtwoSAT.
\begin{theorem}\label{thm:max2sat}
  The Sticky Brownian Motion rounding achieves approximation ratio of at least \MtwoRatio for \MtwoSAT.
\end{theorem}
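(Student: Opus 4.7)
The plan is to mirror the Max-Cut analysis, but now the rounding is carried out in the cube $[0,1]^n$ rather than $[-1,1]^n$, so I first need the right SDP relaxation. I would use a standard relaxation for \MtwoSAT with unit vectors $v_0,v_1,\ldots,v_n$ (with $v_0$ encoding ``false''), and define the starting point of the diffusion by $y_i=(1-\langle v_0,v_i\rangle)/2\in[0,1]$, and the covariance of the Sticky Brownian Motion from the Gram matrix of the $v_i$'s (suitably rescaled to match the $[0,1]$ geometry). The process is then run in $[0,1]^n$, sticking to any face it reaches, until it is absorbed in a vertex of the cube which yields the boolean assignment.

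The key structural observation, inherited from the diffusion viewpoint, is that the joint distribution of any pair of final coordinates $(Y_i,Y_j)\in\{0,1\}^2$ is completely determined by the $2\times 2$ principal submatrix of the SDP solution indexed by $\{i,j\}$ (together with the marginals $y_i,y_j$). Thus the analysis reduces to a two-variable problem: for each clause on variables $x_i,x_j$ (in any of the possible sign patterns, plus unit clauses), express the probability that the clause is satisfied as a function of $y_i$, $y_j$, and $\rho_{ij}=\langle v_i,v_j\rangle$, and compare it to the SDP contribution of that clause. This mirrors Theorem \ref{thm:cut-prob}, but now for the Sticky Brownian Motion absorbed on the boundary of $[0,1]^2$ rather than $[-1,1]^2$.

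Next I would derive the two-dimensional hitting distribution. The standard tool is to solve the associated Dirichlet problem: the probability $p(y_1,y_2)$ of being absorbed at a given vertex satisfies a degenerate elliptic PDE (with the SDP-induced diffusion tensor) on $[0,1]^2$ with boundary conditions enforced by the sticky behaviour on the edges. Following the approach used for \MC, I would apply a conformal / affine change of variables to map this to a canonical domain where the PDE admits an explicit solution, again expressible via a hypergeometric function of the angle between $v_i$ and $v_j$ (now measured through $v_0$). From these explicit formulas one obtains closed-form expressions for the probability that each of the four configurations $(0,0),(0,1),(1,0),(1,1)$ is hit, and hence for the satisfaction probability of every clause type.

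Finally, the approximation ratio is obtained by a factor-revealing step: minimize the ratio of the satisfaction probability to the SDP contribution over the feasible region of $(y_i,y_j,\rho_{ij})$ for each of the finitely many clause sign patterns, and also over unit clauses. I expect this minimization to be the main obstacle, for two reasons. First, the hypergeometric expressions for the hitting probabilities in $[0,1]^2$ are less symmetric than in the \MC case because the starting point $(y_i,y_j)$ need not be at the centre of the cube, so the elliptic integrals that must be evaluated are messier. Second, identifying the worst-case clause configuration requires ruling out local minima over a three-parameter region; I would reduce this via monotonicity in $\rho_{ij}$ and symmetry between the sign patterns, and handle the remaining optimization by a rigorous numerical search, obtaining the bound \MtwoRatio claimed in the theorem.
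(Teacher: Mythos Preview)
Your high-level setup (project to two coordinates, reduce to a 2D diffusion in $[0,1]^2$ started at the marginals, analyze the absorption probability at the three ``good'' corners) matches the paper. The gap is in the analysis method you propose for the 2D process.

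You plan to push the conformal-mapping / Schwarz--Christoffel machinery from the \MC analysis through to \MtwoSAT and obtain explicit hypergeometric formulas for the hitting probabilities. The paper explicitly tries and abandons this route: the \MC calculation depends essentially on the starting point being the center of the rhombus, so that the harmonic measure on the four sides is symmetric and the elliptic integral collapses to a single incomplete beta integral. For \MtwoSAT the walk starts at an arbitrary $(x,y)\in[0,1]^2$, and the resulting Schwarz--Christoffel integrals (now with a nontrivial preimage of the start point) do not simplify; the paper describes them as intractable. Your proposal acknowledges that the integrals will be ``messier'' but does not supply an idea for how to evaluate or bound them, and then hands the final minimization to a ``rigorous numerical search,'' which together does not yield an analytic constant.

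What the paper actually does is replace the conformal-mapping computation by a PDE + maximum-principle argument. It characterizes $u(x,y)=\Pr[\text{clause satisfied}\mid \bX_0=(x,y)]$ as the unique solution of
\[
u_{xx}+u_{yy}+2\cos\theta\,u_{xy}=0 \quad\text{in } (0,1)^2,\qquad u=1-xy \quad\text{on } \partial[0,1]^2,
\]
and then, instead of solving this PDE, writes down explicit low-degree polynomial test functions $g_1,g_2,g_3$ (different ones for different ranges of $\cos\theta$) that match the boundary data and satisfy $\L g_i\ge 0$. The maximum principle gives $g_i\le u$, and the approximation ratio is obtained by comparing $g_i(x,y)$ to the SDP contribution $1-xy-\cos\theta\sqrt{(x-x^2)(y-y^2)}$ over the feasible region for $(x,y,\theta)$ (constrained by the $\ell_2^2$ triangle inequalities). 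Both the verification that $\L g_i\ge 0$ and the final ratio bound are certified by explicit sum-of-squares identities, which is what produces the analytic \MtwoRatio. Your plan is missing this maximum-principle / test-function idea, which is the actual engine of the proof.
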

While the complex analysis methods also give exact results for
\MtwoSAT, the explicit expressions are much harder to obtain as one
has to consider all possible starting points for the diffusion
process, while in the \MC case the process always starts at the
origin. Because of this, in order to prove Theorem~\ref{thm:max2sat}
we introduce another method of analysis based on partial differential
equations (PDEs), and the maximum principle, which allows us to prove
analytic bounds on PDE solutions. Moreover, numerically solving the
PDEs suggests the bound 0.921.  The overview and details of the
\MtwoSAT analysis are, respectively, in Sections \ref{s:ov-max2sat} and
\ref{sec:pde}.  Section~\ref{sec:slowdown} has details about numerical
calculations for various problems.

For comparison, the best known approximation
ratio for \MC is the Goemans-Williamson constant $\alpha_{GW} \approx
\GWratio$, and the best known approximation ratio for \MtwoSAT is
$0.94016$~\cite{LLZ}. The result for \MC instances of value $1-\eps$
is optimal up to constants~\cite{KKMO}, assuming the Unique Games Conjecture.

We emphasize that
our results above are achieved with a single algorithm ``out of the box'', without any
additional engineering. While the analysis uses sophisticated
mathematical tools, the algorithm itself is simple, efficient, and
straightforward to implement.

\paragraph{Extensions.} Next, we consider two different 
modifications of Sticky Brownian Motion that allow us to improve the approximation guarantees
above, and show the flexibility of diffusion based rounding algorithms. The first one is to
smoothly slow down
the process depending on how far it is from the boundaries of the cube.
As a
proof of concept, we show, numerically, that a simple modification of
this kind matches the Goemans-Williamson approximation of \MC up to
the first three digits after the decimal point. We also obtain significant improvements for other problems over the vanilla method.

Second, we propose a variant of Sticky Brownian Motion running
in $n+1$ dimensions rather than $n$ dimensions, and we analyze it for
the \MdC problem. The extra dimension is used to determine whether
the nodes labeled $+1$ or those labeled $-1$ are put on the left
side of the cut. We show that this modification achieves an
approximation ratio of \MDCRatio for \MdC. Slowing down the process further improves this approximation to \MDCRatioSD. We give a summary of the obtained results\footnote{Our numerical results are not obtained via simulating the
random algorithm but solving a discretized version of a PDE that analyzes the
performance of the algorithm. Error analysis of such a discretization can allow
us to prove the correctness of these bounds within a reasonable accuracy.} in Table~\ref{tab:overview-results}.
An overview and details of the extensions are given, respectively, in Sections \ref{s:ov-ext} and \ref{sec:slowdown}.

\begin{table}[!t]
    \centering
    \begin{tabular}{|c|c|c|c|c|}
         \hline
         Algorithm  &  \MC & \MtwoSAT & \MdC$^\star$ \\
         \hline
         Brownian Rounding & $\mcbasicratio$& 0.921 & \MDCRatio\\
         \hline
         Brownian with Slowdown & $\mcexpratio^\dagger$  & 0.929 & \MDCRatioSD \\
         \hline

    \end{tabular}
    \caption{Approximation ratios for  Sticky Brownian Motion rounding and Sticky Brownian Motion with Slowdown.
		$\dagger$ indicates that for \MC, the approximation for the  slowed down walk differs from the GW bound only in the fourth decimal. For \MdC, the ${}^\star$ indicates that we only consider the $n+1$-dimensional walk. }
    \label{tab:overview-results}
\end{table}

\paragraph{Recent Progress.} Very recently, in a beautiful result,
Eldan and Naor \cite{EN16} describe a slowdown process that
exactly achieves the Goemans-Williamson (GW) bound of 0.878 for
Max-Cut, answering an open question posed in an earlier version of
this paper.  This shows that our rounding techniques are at least as powerful as the classical random hyperplane rounding, and are potentially more general and flexible.

In general, given the dearth of techniques for rounding semidefinite
programs, we expect that rounding methods based on diffusion
processes, together with the analysis techniques introduced in this
paper, will find broader use,
and, perhaps lead to improved results for Max-CSP
problems.

\paragraph{Applications.}
To further illustrate the versatility and general applicability of our
approach, we consider the \MC with Side Constraints problem,
abbreviated \cardMC, a generalization of the \MB problem which allows
for multiple global constraints.  In an instance of the \cardMC
problem, we are given an $n$-vertex graph $G=(V,E)$, a collection $\FF
= \{F_1, \ldots, F_k\}$ of subsets of $V$, and cardinality bounds
$b_1, \ldots, b_k \in \N$.  The goal is to find a subset $S \subset V$
that maximizes the weight $|\delta(S)|$ of edges crossing the cut $(S,
V\setminus S)$, subject to having $|S \cap F_i| = b_i$ for all $i \in
[k]$.

Since even checking whether there is a feasible solution is NP-hard~\cite{dischard}, we aim for bi-criteria approximation algorithms.\footnote{We say that a set $S \subset V$ is an $(\alpha,
\eps)$-approximation if $\bigl||S \cap F_i| - b_i \bigr| \le \eps n$
for all $i \in [k]$, and $|\delta(S)| \ge \alpha \cdot |\delta(T)|$ for
all $T\subset V$ such that $|T \cap F_i| = b_i$ for all $i \in [k]$.}
We give the following result for the problem, using the Sticky Brownian Motion as a building tool.
\begin{restatable}{theorem}{mcconstraints}
  \label{thm:mc-constraints}
  There exists a $O(n^{\mathrm{poly}(\log(k)/\eps)})$-time algorithm
  that on input a satisfiable instance $G = (V, E)$, $\FF$, and $b_1,
  \ldots, b_k$, as defined above, outputs a $(\mcallratio-\eps,
  \eps)$-approximation with high probability.
\end{restatable}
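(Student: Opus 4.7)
The plan is to build on the Sticky Brownian Motion underlying Theorem~\ref{thm:maxcut} in three stages. First, formulate an SDP relaxation for \cardMC that encodes each cardinality constraint as a linear equality on inner products, strengthened to a suitable level of the Lasserre/Sherali--Adams hierarchy. Second, run Sticky Brownian Motion from the induced fractional solution and use its martingale structure to show that each cardinality constraint holds in expectation and concentrates sharply around its mean. Third, combine the concentration with a union bound over the $k$ constraints to obtain the bi-criteria guarantee, with the objective bound following from the edgewise analysis of Theorem~\ref{thm:maxcut}.

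Concretely, I would use unit vectors $v_0, v_1, \ldots, v_n$ playing the role of $\pm 1$ variables via $x_i = \langle v_0, v_i\rangle$, and impose $\sum_{i \in F_j}\langle v_0, v_i\rangle = 2 b_j - |F_j|$ for each $j \in [k]$. Taking the $\ell$-th level of the Sherali--Adams-plus-SDP hierarchy for $\ell = \mathrm{poly}(\log k / \eps)$ yields a relaxation solvable in time $n^{O(\ell)} = n^{\mathrm{poly}(\log k/\eps)}$ that supplies consistent ``local'' distributions on any subset of at most $\ell$ variables. Starting the Sticky Brownian Motion at $X_0 = (\langle v_0, v_i\rangle)_{i \in [n]} \in [-1,1]^n$ with infinitesimal covariance given by the restricted Gram matrix of the $v_i$'s, the martingale property yields $\E[X_\infty] = X_0$, and hence $\E[|S \cap F_j|] = b_j$ exactly for every $j$.

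For concentration of $|S \cap F_j|$ around its mean, observe that $Y_j(t) := \sum_{i \in F_j} X_i(t)$ is a bounded continuous martingale whose quadratic variation is controlled by the residual Gram matrix. A Freedman/Azuma-style tail bound, combined with Lasserre conditioning in the spirit of the Raghavendra--Tan treatment of \MB (condition the SDP on a random $\ell$-sized seed set so that the remaining coordinates behave approximately independently along each $F_j$), should give a tail bound of the form $\Pr[\bigl||S \cap F_j| - b_j\bigr| > \eps n] \le \exp(-\Omega(\mathrm{poly}(\eps/\log k) \cdot n))$. A union bound over $j \in [k]$ then delivers the $\eps n$ slack on every cardinality constraint simultaneously with high probability. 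On the objective side, the expected number of cut edges is at least $\mcallratio \cdot \SDP \ge \mcallratio \cdot \OPT$ by applying the \emph{edgewise} analysis behind Theorem~\ref{thm:maxcut} to starting points throughout the cube (not only the origin), which is also what explains the drop from $\mcbasicratio$ to $\mcallratio$.

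The main obstacle I foresee is handling the interplay between the sticky dynamics and the global cardinality constraints. One must control the quadratic variation of each $Y_j(t)$ tightly even as coordinates freeze at the boundary of the cube and the SBM covariance evolves in a data-dependent way, and simultaneously choose the hierarchy level $\ell$ so that conditioning produces sufficiently near-product behaviour on every $F_j$ without noticeably degrading the SDP objective. Balancing these two requirements is precisely what forces $\ell$ to scale polynomially in both $\log k$ and $1/\eps$, and therefore yields the running time $n^{\mathrm{poly}(\log k/\eps)}$ asserted in the statement.
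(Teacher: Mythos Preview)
Your high-level architecture matches the paper's: an SoS/Lasserre relaxation, global-correlation conditioning \`a la Barak--Raghavendra--Steurer / Guruswami--Sinop to drive down average pairwise covariances, Sticky Brownian Motion started at the marginals, a martingale concentration argument for each $\sum_{i\in F_j} X_i(t)$, and the edgewise PDE analysis (over all starting points, hence the drop to $\mcallratio$) for the objective. So the skeleton is right.

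There are, however, two concrete algorithmic ingredients you are missing that the paper relies on, and without them your plan does not close.

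\textbf{Early stopping plus independent rounding.} The paper does \emph{not} run the sticky walk to absorption. It stops at a deterministic time $\tau=\Theta(\log(1/\eps))$ and then rounds each coordinate independently with probability $(\bX_\tau)_i$. This is exactly the device that controls the quadratic variation you flag as the main obstacle: the MGF bound $\E[e^{\lambda \by^\top(\bX_\tau-\bx)}]\le e^{\tau\lambda^2 \by^\top\bW\by/2}$ carries an explicit factor of $\tau$, and only with $\tau$ bounded does the BRS-type bound $\by^\top\bW\by \le 2\eps_0^2 n^2/\eps_1$ yield sub-Gaussian tails of the needed strength. Running to absorption gives a random, coordinate-dependent time horizon and no clean uniform bound. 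The early stop costs you something on the objective (edges incident to unfrozen vertices), which is why the paper separately shows each coordinate is absorbed by time $\tau$ with probability $\ge 1-\mathrm{poly}(\eps)$; this turns the loss into an additive $O(\eps^2)a(E)$.

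\textbf{Baseline algorithm for small $\OPT$.} Because the objective guarantee is only $\mcallratio\cdot\OPT - O(\eps^2)\,a(E)$ (additive, from the early-stop loss), you need a second algorithm that always cuts $\Omega(\eps)\,a(E)$ while approximately satisfying the side constraints; the paper uses plain independent rounding of an LP solution for this and takes the better of the two outputs. Without this, the additive error swamps the multiplicative guarantee when $\OPT \ll \eps\,a(E)$.

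Two smaller points. First, the conditioning does not make coordinates ``approximately independent along each $F_j$''; it only makes $\sum_{i,j}\widetilde\bW_{i,j}^2 \le \eps_0^4 n^2$, and the concentration argument uses this average bound together with Cauchy--Schwarz. Second, because the concentration is phrased via $\bW$ while the conditioning controls $\widetilde\bW$, vertices with $x_i(1-x_i)$ tiny must be handled separately (they are already near-integral, so a Markov/Hoeffding bound suffices); your plan does not account for this split.
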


In the presence of a single side constraint, the problem is closely
related to the \MB problem~\cite{RaghavendraT12,AustrinBG16}, and,
more generally to \MC with a cardinality constraint. While our methods
use the stronger semi-definite programs considered in
\cite{RaghavendraT12}~and~\cite{AustrinBG16}, the main new technical
ingredient is showing that the Sticky Brownian Motion possesses
concentration of measure properties that allow us to approximately
satisfy multiple constraints. By contrast, the hyperplane rounding and
its generalizations that have been applied previously to the \MC and
\MB problems do not seem to allow for such strong concentration
bounds. For this reason, the rounding and analysis used
in~\cite{RaghavendraT12} only give an
$O(n^{\mathrm{poly}(k/\eps)})$ time algorithm for the \cardMC problem,
which is trivial for $k = \Omega(n)$, whereas our algorithm has
non-trivial quasi-polynomial running time even in this regime.  We
expect that this concentration of measure property will find further
applications, in particular to constraint satisfaction problems with
global constraints.

\paragraph{Remark.} 
We can achieve better results using Sticky Brownian Motion with
slowdown. In particular, in time $O(n^{\mathrm{poly}(\log(k)/\eps)})$
we can get a $(\mcslowallratio-\eps, \eps)$-approximation with high
probability for any satisfiable instance.  However, we focus on the
basic Sticky Brownian Motion algorithm to simplify exposition.  Note
that due to the recent work by Austrin and
Stankovi\'c~\cite{austrin2019global}, we know that adding even a
single global cardinality constraint to the \MC problem makes it
harder to approximate. In particular, they show that subject to a
single side constraint, \MC is Unique Games-hard to approximate within
a factor of approximately $\mcslowallratio$.  Thus, assuming the
Unique Games conjecture, our approximation factor for the \cardMC
problem is optimal up to small numerical errors. (We emphasize the
possibility of numerical errors as both our result, and the hardness
result in~\cite{austrin2019global} are based on numerical
calculations.)




%


\subsection{Overview}
\subsubsection{The Sticky Brownian Motion Algorithm.}
\label{s:sbm-alg}
Let us describe our basic algorithm in some detail. Recall that the
Goemans-Williamson SDP for \MC is equivalent to the
following vector program: given a graph $G = (V, E)$, we write
\begin{align*}
\max ~~~ & \sum _{(i,j)\in E} \frac{1-\bw _i \cdot \bw _j}{2} & \\
s.t. ~~~ & \bw _i \cdot \bw _i = 1 & \forall i\in V
\end{align*}
where the variables $\bw_i$ range over $n$ dimensional real vectors
($n = |V|$). The Sticky Brownian Motion rounding algorithm we propose
maintains a sequence of random fractional solutions $\bX _0, \ldots,
\bX _T$ such that $\bX _0 = \mathbf{0}$ and $\bX _T \in \{-1, +1\}^n$ is
integral. Here, a vertex of the hypercube $\{-1,+1\}^n$ is naturally
identified with a cut, with vertices assigned $+1$ forming one side of the
cut, and the ones assigned $-1$ forming the other side.


Let $A_t$ be the random set of coordinates of $\bX _{t-1}$ which are
\emph{not} equal to $-1$ or $+1$; we call these coordinates active.
At each time step $t = 1, \ldots, T$, the algorithm picks
$\mathbf{\Delta X}_t$ sampled from the Gaussian distribution with mean
$\mathbf{0}$ and covariance matrix $\bW_t$, where $(\bW_t)_{ij} = \bw_i\cdot
\bw_j$ if $i,j \in A_{t}$, and $(\bW_{t})_{ij} = 0$ otherwise. The
algorithm then takes a small step in the direction of $\mathbf{\Delta
  X}_t$, i.e.~sets $\bX _t = \bX _{t-1} + \gamma \mathbf{\Delta
  X}_{t}$ for some small real number $\gamma$. If the $i$-th
coordinate of $\bX _t$ is very close to $-1$ or $+1$ for some $i$,
then it is rounded to either $-1$ or $+1$, whichever is closer. The
parameters $\gamma$ and $T$ are chosen so that the fractional
solutions $\bX _t$ never leave the cube $[-1, 1]^n$, and so that the
final solution $\bX _T$ is integral with high probability. As $\gamma$
goes to $0$, the trajectory of the $i$-th coordinate of $\bX _t$
closely approximates a Brownian motion started at $0$, and stopped when
it hits one of the boundary values $\{-1, +1\}$. Importantly, the
trajectories of different coordinates are correlated according to the
SDP solution.
A precise definition of the algorithm is given in Section~\ref{sec:mc-process-defn}.

The algorithm for \MtwoSAT (and \MdC) is
essentially the same, modulo using the covariance matrix from the
appropriate standard SDP relaxation, and starting the process at
the marginals for the corresponding variables. We explain this in greater
detail below.  

\subsubsection{Overview of the Analysis for \MC}
\label{s:mc-overview}
In order to analyze this algorithm, it is sufficient to understand the
probability that an edge $(i,j)$ is cut as a function of the angle
$\theta$ between the vectors $\bw _i$ and $\bw _j$. Thus, we can focus
on the projection $((\bX_t)_i,(\bX_t)_j)$ of $\bX_t$. We observe that
$((\bX_t)_i,(\bX_t)_j)$ behaves like a discretization of correlated
2-dimensional Brownian motion started at $(0,0)$, until the first time
$\tau$ when it hits the boundary of the square $[-1,1]^2$. After
$\tau$, $((\bX_t)_i,(\bX_t)_j)$ behaves like a discretization of a
1-dimensional Brownian motion restricted to one of the sides of the
square. From now on we will treat the process as being continuous, and
ignore the discretization, which only adds an arbitrarily small error
term in our analysis. It is convenient to apply a linear
transformation to the \emph{correlated} Brownian motion $((\bX_t)_i,(\bX_t)_j)$ so that it behaves
 like a \emph{standard} 2-dimensional Brownian motion $\bB_t$
 started at $(0,0)$. We show that this linear
transformation maps the square $[-1,1]^2$ to a rhombus $\mathbb{S}$
centered at $\mathbf{0}$ with internal angle $\theta$; we can then think of $\tau$ as
the first time $\bB _t$ hits the boundary of $\mathbb{S}$. After time
$\tau$, the transformed process is distributed like a 1-dimensional
Brownian motion on the side of the rhombus that was first hit. To
analyze this process, we need to understand the probability
distribution of $\bB _\tau$. The probability measure associated with
this distribution is known as the \emph{harmonic measure} on the
boundary $\bd\rhombus$ of $\rhombus$, with respect to the starting point $\mathbf{0}$.
These transformations and connections are explained in detail in Section~\ref{sec:mc-analysis}.

The harmonic measure has been extensively studied in
probability theory and analysis. The simplest special case is the
harmonic measure on the boundary of a disc centered at $\mathbf{0}$ with
respect to the starting point $\mathbf{0}$. Indeed, the central symmetry of the
disc and the Brownian motion implies that it is just the
uniform measure. A central fact we use is that harmonic measure in 2
dimensions is preserved under conformal (i.e.~angle-preserving)
maps. Moreover, such maps between polygons and the unit disc have been
constructed explicitly using complex analysis, and, in particular, are
given by the Schwarz-Christoffel formula~\cite{A96}.  Thus, the
Schwarz-Christoffel formula gives us an explicit formulation of
sampling from the harmonic measure on the boundary $\bd\rhombus$ of
the rhombus: it is equivalent to sampling a uniformly random point on
the boundary of the unit disc $\mathbb{D}$ centered at the origin, and
mapping this point via a conformal map $F$ that sends $\mathbb{D}$ to
$\mathbb{S}$. Using this formulation, in
Section~\ref{sec:schwarz-christoffel} we show how to write the
probability of cutting the edge $(i,j)$ as an elliptic integral.

Calculating the exact value of elliptic integrals is a challenging
problem.  Nevertheless, by exploiting the symmetry in the \MC
objective, we relate our particular elliptic integral to integrals of
the incomplete beta and hypergeometric functions.  We further simplify
these integrals and bring them into a tractable form using several key identities from
the theory of special functions.  Putting everything together, we get
a precise closed form expression for the probability that the Sticky
Brownian Motion algorithm cuts a given edge in
Theorem~\ref{thm:cut-prob}, and, as a consequence, we obtain the
claimed guarantees for \MC in
Theorems~\ref{thm:maxcut}~and~\ref{th:pi}.

\subsubsection{Overview of the Analysis for \MtwoSAT}
\label{s:ov-max2sat}
The algorithm for \MtwoSAT is almost identical to the \MC algorithm, except
that the SDP solution is asymmetric, in the following sense. We can think of
the SDP as describing  the mean and covariance of a ``pseudo-distribution''
over the assignments to the variables. In the case of \MC, we could assume
that, without loss of generality, the mean of each variable
(i.e.~one-dimensional marginal) is $0$ {{since $S$ and $\overline{S}$ are equivalent solutions. However, this is not the case for \MtwoSAT.}} 
We use this information, and instead of starting the diffusion process at the
center of the cube, we start it at the point given by the marginals. For
convenience, and also respecting standard convention, we work in the cube
$[0,1]^n$ rather than  $[-1,1]^n$. Here, in the final solution
$X_T$, if $(X_T)_i=0$ we set the $i$-th variable to \emph{true}
and if $(X_T)_i=1$, we set it to \emph{false}. We again analyze each clause $C$
separately, which allows us to focus on the diffusion process projected to the
coordinates $((X_t)_i, (X_t)_j)$, where $i$ and $j$ are the variables appearing
in $C$. 
However, the previous approach of using the Schwarz-Christoffel
formula to obtain precise bounds on the probability does not easily go
through, since it relies heavily on the symmetry of the starting point
of the Brownian motion. It is not clear how to extend the analysis
when we change the starting point to a point other than the center, as
the corresponding elliptic integrals appear to be intractable.

Instead, we appeal to a classical connection between diffusion
processes and partial differential equations~\cite[Chapter
9]{Oksendal-SDE}. Recall that we are focusing on a single clause $C$
with variables $i$ and $j$, and the corresponding diffusion process
$((X_t)_i, (X_t)_j)$ in the unit square $[0,1]^2$ starting at a point
given by the marginals and stopped at the first time $\tau$ when it
hits the boundary of the square; after that time the process continues
as a one-dimensional Brownian motion on the side of the square it
first hit. For simplicity let us assume that both variables appear
un-negated in $C$. The probability that $C$ is satisfied then equals
the probability that the process ends at one of the points $(0,1)$,
$(1,0)$ or $(0,0)$.
Let $u:[0,1]^2 \rightarrow [0,1]$ be the function which assigns to
$(x,y)$ the probability that this happens when the process is started
at $(x,y)$. Since on the boundary $\bd [0,1]^2$ of the square our process is a
one-dimensional martingale, the value of $u(x,y)$ is easy to
compute on $\bd [0,1]^2$, and in fact equals $1-xy$.  Then, in
the interior of the square, we have $u(x,y) = \E[u((X_\tau)_i,
(X_\tau)_j)]$. It turns out that this identifies $u$ as the unique
solution to an elliptic partial differential equation (PDE) $\L u = 0$
with the Dirichlet boundary condition $u(x,y) = 1-xy \ \ \forall (x,y)
\in \partial [0,1]^2$.
In our case, the operator $\L$ just corresponds to Laplace's operator $\L[u] = \frac{\partial^2u}{\partial x^2} + \frac{\partial^2u}{\partial y^2} $
after applying a linear transformation to the variables and the domain. This connection between our rounding algorithm and PDEs is explained in Section~\ref{sec:sde2pde}.

Unfortunately, it is still not straightforward to solve the obtained PDE
analytically. We deal with this difficulty using two natural approaches.
First, we use the \emph{maximum principle} of elliptic PDE's \cite{GT15}, which
allows us to bound the function $u$ from below. In particular, if we can
find a function $g$ such that $g(x,y) \le u(x,y) = 1 - xy$ on the boundary of
the square, and $\L g \ge 0$ in the interior, then the maximum principle tells
us that $g(x,y) \le u(x,y)$ for all $x,y$ in the square.  We exhibit simple low-degree polynomials which satisfy the boundary conditions by
design, and use the sum of squares proof system to certify non-negativity under the operator $\L$.
In Section~\ref{sec:max-principle}, we use  this method to show that Sticky Brownian Motion rounding achieves approximation ratio at least  \MtwoRatio.

Our second approach is to solve the PDE  numerically to a high degree of accuracy using
finite element methods. We use this approach in Section~\ref{sec:slowdown} to numerically obtain results showing
a $\MtwoRatioSD$ approximation ratio for \MtwoSAT.

\subsubsection{Extensions of Sticky Brownian Motion.}
\label{s:ov-ext}
\paragraph{Using different slowdown functions.}
Recall that in the Sticky Brownian Motion rounding each increment is proportional to $\mathbf{\Delta X}_t$ sampled from a Gaussian distribution with mean $\mathbf{0}$ and covariance matrix $\bW _t$. The covariance is derived from the SDP: for example, in the case of \MC, it is initially set to be the Gram matrix of the vectors produced by the SDP solution. Then, whenever a coordinate $(\bX _t)_i$ reaches $\{-1, +1\}$, we simply zero-out the corresponding row and column of $\bW _t$. This process can be easily modified by varying how the covariance matrix $\bW_t$ evolves with time. Instead of zeroing out rows and columns of $\bW_t$, we can smoothly scale them based on how far $(\bX_{t-1})_i$ is from the boundary values $\{-1, 1\}$. A simple way to do this, in the case of the \MC problem, is to set
\[
(\bW_t)_{ij} = (1- (\bX_{t-1})_i^2)^{\alpha/2}(1-\bX_{t-1})_j^2)^{\alpha/2} \bw_i\cdot \bw_j
\]
for a constant $0 \le \alpha < 2$. Effectively, this means that the
process is slowed down smoothly as it approaches the boundary of the cube
$[-1,+1]^n$.  This modified diffusion process, which we call Sticky
Brownian Motion with Slowdown, still converges to $\{-1, +1\}^n$ in
finite time. Once again, the probability of cutting an edge $(i,j)$ of
our input graph can be analyzed by focusing on the two-dimensional
projection $((\bX_t)_i, (\bX_t)_j)$ of $\bX_t$. Moreover, we can still
use the general connection between diffusion processes and PDE's
mentioned above. That is, if we write $u(x,y):[-1,1]^2 \to [0,1]$ for the
probability that edge $(i,j)$ is cut if the process is started at
$(x,y)$, then $u$ can be characterized as the
solution of an elliptic PDE with boundary conditions $u(x,y) = \frac{1
  - xy}{2} \ \ \ \forall (x,y) \in \partial [-1,1]^2$. We solve this
PDE numerically using the finite element method to estimate the
approximation ratio for a fixed value of the parameter $\alpha$, and
then we optimize over $\alpha$. At the value $\alpha = \slowdown$ our
numerical solution shows an approximation ratio that matches the
Goemans-Williamson approximation of \MC up to the first three digits
after the decimal point. We also analyze an analogous algorithm for
\MtwoSAT and show that for $\alpha = \slowdown$ it achieves an approximation
ratio of 0.929. The detailed analysis of the slowed down Sticky Brownian Motion rounding is given in Section~\ref{sec:slowdown}. 

\paragraph{A higher-dimensional version.}
We also consider a higher-dimensional version of the Sticky Brownian Motion
rounding, in which the Brownian motion evolves in $n+1$ dimensions rather than
$n$. This rounding is useful for asymmetric problems like \MdC\footnote{The input for \MdC is a directed graph $G=(V,E)$ and the goal is to find a cut $S\subseteq V$ that maximizes the number of edges going from $S$ to $\overline{S}$.} in which the SDP
produces non-uniform marginals, as we discussed above in the context of
\MtwoSAT. Such an SDP has a vector $\bw_0$ in addition to $\bw_1, \ldots,
\bw_n$, and the marginals are given by $\bw_0\cdot \bw_i$.  Now, rather than
using the marginals to obtain a different starting point, we consider the
symmetric Sticky Brownian Motion process starting from the center but using all
the $n+1$ vectors $\bw_0, \ldots, \bw_n$. At the final step $T$ of the process,
in the case of \MdC, the variables whose value is equal to $(\bX_T)_0$ are
assigned to the left side of the cut, and the variables with the opposite value
are assigned to the right side of the cut. Thus, for an edge $i \to j$ to be
cut, it must be the case that  $(\bX_T)_i = (\bX_T)_0$ and $(\bX_T)_j = 1 -
(\bX_T)_0$. While analyzing the probability that this happens is a question
about Brownian motion in three rather than two dimensions, we reduce it to a
two-dimensional question via the inclusion-exclusion principle. After this
reduction, we can calculate the probability that an edge is cut by using
the exact formula proved earlier for the \MC problem.  Our analysis, which is given in Section~\ref{sec:higher-dim}, shows that this
$(n+1)$-dimensional Sticky Brownian Motion achieves an approximation of
\MDCRatio for \MdC. Moreover, combining the two ideas, of changing the
covariance matrix at each step, as well as performing the $n+1$-dimensional Sticky
Brownian Motion, achieves a ratio of \MDCRatioSD.

\subsubsection{Overview of the Analysis for \cardMC.}

The starting point for our algorithm for the \cardMC problem is a
stronger SDP relaxation derived using the Sum of Squares (SoS)
hierarchy. Similar relaxations were previously considered
in~\cite{RaghavendraT12,AustrinBG16} for the \MB problem. In addition
to giving marginal values and a covariance matrix for a
``pseudo-distribution'' over feasible solutions, the SoS SDP makes it
possible to condition on small sets of variables. The global
correlation rounding method~\cite{BarakRS11,GuruswamiS11} allows us to
choose variables to condition on so that, after the conditioning, the
covariance matrix has small entries on average. Differing from
previous works~\cite{RaghavendraT12,AustrinBG16}, we then run the
Sticky Brownian Motion rounding defined by the resulting marginals and
covariance matrix. We can analyze the weight of cut edges using the
PDE approach outlined above. The main new challenge is to bound the
amount by which the side constraints are violated. To do so, we show
that Sticky Brownian Motion concentrates tightly around its mean, and,
in particular, it satisfies sub-Gaussian concentration in directions
corresponding to sets of vertices. Since the mean of the Sticky
Brownian Motion is given by the marginals, which satisfy all side
constraints, we can bound how much constraints are violated via the
concentration and a union bound. To show this key concentration
property, we use the fact that the covariance that defines the
diffusion has small entries, and that Brownian Motion is a
martingale. Then the concentration inequality follows, roughly, from a
continuous analogue of Azuma's inequality. The detailed analysis is
given in Section~\ref{sect:mc-constraints}. We again remark that such
sub-Gaussian concentration bounds are not known to hold for the random
hyperplane rounding method or its generalizations as considered in
\cite{RaghavendraT12,AustrinBG16}. 

\subsection{Related Work}

In their seminal work, \citet{GW95} presented the random hyperplane rounding method which yielded an approximation of \GWratio for \MC.
For the closely related \MdC problem they presented an approximation of $0.796$.
This was subsequently improved in a sequence of papers: Feige and Goemans \cite{feige1995approximating} presented an approximation of $0.859$; Matuura and Matsui improved the factor to $0.863$; and culminating in the work of Lewin {\em et. al.} \cite{LLZ} who present the current best known approximation of $0.874$, getting close to the \GWratio~ approximation of \cite{GW95} for \MC.
Another fundamental and closely related problem is \textsc{Max-Bisection}.
In their classic work \cite{Frieze1997}, Frieze and Jerrum present an approximation of $0.651$ for this problem.
Their result was later improved to $0.699$ by \citet{Ye2001}, to $0.701$ by \citet{HalperinZ02}, and to $0.702$ by \citet{FeigeL01}. Using the sum of squares hierarchy, \citet{RaghavendraT12} gave a further improvement to $0.85$, and finally, Austrin {\em et. al.} \cite{AustrinBG16} presented an almost tight approximation of $0.8776$.
With respect to hardness results, H{\aa}stad \cite{Hastad2001} proved a hardness of $\nicefrac[]{16}{17}$ for \MC (which implies the exact same hardness for \textsc{Max-Bisection}) and a hardness of $\nicefrac[]{11}{12}$ for \MdC (both of these hardness results are assuming $\mathsf{P}\neq \mathsf{NP}$).
If one assumes the Unique Games Conjecture of Khot \cite{Khot02-UG}, then it is known that the random hyperplane rounding algorithm of \cite{GW95} is tight \cite{KKMO,MOO}.
Thus, it is worth noting that though \MC is settled conditional on the Unique Games conjecture, both \MdC and \textsc{Max-Bisection} still remain unresolved, even conditionally.

Another fundamental class of closely related problems are \MS and its special cases \textsc{Max-$k$-SAT}.
For \MtwoSAT Goemans and Williamson \cite{GW95}, using random hyperplane rounding, presented an approximation of \GWratio.
This was subsequently improved in a sequence of works: Feige and Goemans \cite{feige1995approximating} presented an approximation of $0.931$; Matsui and Matuura \cite{MM01} improved the approximation factor to $0.935$; and finally Lewin {\em et. al.} \cite{LLZ} presented the current best known approximation of $0.94016$.
Regarding hardness results for \MtwoSAT, assuming $\mathsf{P}\neq \mathsf{NP}$, H{\aa}stad \cite{Hastad2001} presented a hardness of $\nicefrac[]{21}{22}$.
Assuming the Unique Games Conjecture Austrin \cite{Austrin07} presented a (virtually) tight hardness of $0.94016$, matching the algorithm of \cite{LLZ}.
For \textsc{Max-3SAT}, Karloff and Zwick \cite{karloff19977} and Zwick \cite{Zwick1998} presented an approximation factor of $\nicefrac[]{7}{8}$ based on the random hyperplane method.
The latter is known to be tight by the celebrated hardness result of H{\aa}stad \cite{Hastad2001}.
For \textsc{Max-4SAT} Halperin and Zwick \cite{HZ99} presented an (almost) tight approximation guarantee of $0.8721$.
When considering \MS in its full generality, a sequence of works \cite{asano2002improved,A06,avidor2005improved} slowly improved the known approximation factor, where the current best one is achieved by Avidor {\em et. al.} \cite{avidor2005improved} and equals $0.797$.\footnote{Avidor {\em et. al.} also present an algorithm with a {\em conjectured} approximation of $0.8434$, refer to \cite{avidor2005improved} for the exact details.}
For the general case of \textsc{Max-CSP} a sequence of works \cite{Austrin10,Raghavendra08} culminated in the work of Raghavendra and Steurer \cite{RaghavendraS09} who presented an algorithm that assuming the Unique Games Conjecture matches the hardness result for any constraint satisfaction problem.
However, as previously mentioned, this universal rounding is impractical as it involves a brute-force solution to a large constant instance of the problem.
Moreover, it only allows computing an $\varepsilon$ additive approximation to the approximation ratio in time double-exponential in $\nicefrac[]{1}{\varepsilon}$.

Many additional applications of random hyperplane rounding and its extensions exist.
Some well known examples include: \textsc{3-Coloring} \cite{ACC06,BK97,KMS98}, \textsc{Max-Agreement} in correlation clustering \cite{CharikarGW05,Swamy04}, the maximization of quadratic forms \cite{Charikar04}, and the computation of the \textsc{Cut-Norm} \cite{AlonNaor}.

Let us now briefly focus on the extensions and generalizations of random hyperplane rounding.
The vast majority of the above mentioned works use different extensions of the basic random hyperplane rounding.
Some notable examples include: rotation of the vectors \cite{AlonNaor,N98,Ye2001,zwick1999outward}, projections \cite{FeigeL01,OW08,Charikar04}, and combining projections with clustering \cite{Austrin10,Raghavendra08,RaghavendraS09}.
It is worth noting that the above extensions and generalizations of the basic random hyperplane method are not the only approaches known for rounding SDPs.
The most notable example of the latter is the seminal work of Arora {\em et. al.} \cite{ARV09} for the \textsc{Sparsest-CUT} problem.
Though the approach of \cite{ARV09} uses random projections, it is  based on different mathematical tools, {\em e.g.}, L\'{e}vy's isoperimetric inequality.
Moreover, the algorithmic machinery that was developed since the work of \cite{ARV09} has found uses for {\em minimization} problems, and in particular it is useful for minimization problems that relate to graph cuts and clustering.

Brownian motion was first used for rounding SDPs in Bansal \cite{bansal2010constructive} in the context of constructive discrepancy minimization. This approach has since proved itself very successful in this area, and has led to new constructive proofs of several major results~\cite{lovett2015constructive,BansalDG16,BansalDGL18}.  However, this line of work has largely focused on improving logarithmic factors, and its methods are not precise enough to analyze constant factor approximation ratios.

\section{Brownian Rounding for \MC via Conformal Mappings}\label{sec:conformal}
In this section, we use \MC as a case study for the method of rounding a
semi-definite relaxation via Sticky Brownian Motion. Recall, in an instance
of the \MC problem we are given a graph $G=(V,E)$ with edge weights
$a:E\rightarrow \Re_+$ and the goal is to find a subset $S\subset V$ that
maximizes the total weight of edges crossing the cut $(S,V\setminus S)$, i.e.,
$a(\delta(S))\eqdef\sum_{\{u,v\}\in E: u\in S, v\notin S} a_{uv}$. We first
introduce the standard semi-definite relaxation for the problem and introduce
the sticky Brownian rounding algorithm. To analyze the algorithm, we
use the invariance of Brownian motion with respect to conformal
maps, along with several identities of special functions.

\subsection{SDP Relaxation and Sticky Brownian Rounding Algorithm}
\label{sec:mc-process-defn}

Before we proceed, we recall again the SDP formulation for the \MC
problem,  famously studied by \citet{GW95}.
\begin{align*}
\max ~~~~~ & \sum_{e=(i,j)\in E} a(e)\frac{\left( 1-\bw _i \cdot \bw _j\right)}{2} & \\
s.t. ~~~~~ & \bw _i \cdot \bw _i = 1 & \forall i=1,...,n
\end{align*}

We now describe the Sticky Brownian Motion rounding algorithm
specialized to the \MC problem.  Let $\bW$ denote the positive
semi-definite correlation matrix defined by the vectors $\bw_1,\ldots,
\bw_n$, {\em i.e.}, for every $1\leq i,j\leq n$ we have that: $\bW
_{i,j} = \bw _i \cdot \bw _j$.  Given a solution $\bW$ to the
semi-definite program, we perform the following rounding process:
start at the origin and perform a Brownian motion inside the
$[-1,1]^n$ hypercube whose correlations are governed by $\bW$.
Additionally, the random walk is {\em sticky}: once a coordinate
reaches either $-1$ or $+1$ it is fixed and does not change anymore.

Formally, we define a random process $\left\{ \bX_t\right\} _{t\geq 0}$ as
follows. We fix $\bX_0 = \mathbf{0}$. Let $\left\{ \bB
_t\right\} _{t\geq 0}$ be standard Brownian motion in $\mathbb{R}^n$
started at the origin,\footnote{We will always assume that a standard Brownian motion
starts at the origin. See Appendix~\ref{app:brownian-defn} for a
precise definition.} and let $\tau_1 = \inf\{t:
\mathbf{x}_0 +\bW^{\nicefrac[]{1}{2}} \bB_t \not \in [-1,1]^n\}$ be the first time
$\mathbf{x}_0 + \bW^{\nicefrac[]{1}{2}} \bB_t$ exits the cube. With
probability $1$, you can assume that $\tau_1$ is also the first time
that the process lies on the boundary of the cube. Here $\bW
^{\nicefrac[]{1}{2}}$ is the principle square root of $\bW$. Then, for
all $0 \le t \le \tau_1$ we define
\[
\bX _t = \mathbf{x}_0 +\bW ^{\nicefrac[]{1}{2}}\bB _t.
\]
This defines the process until the first time it hits a face of the
cube. From this point on, we will force it to stick to this face. Let
$A_t = \{i: (\bX _t)_i \neq \pm 1\}$ be the active coordinates of the
process at time $t$, and let $F_t = \{\mathbf{x} \in [-1,1]^n: x_i =
(\bX_t)_i \forall i \in A_t\}$ be the face of the cube on which
$\bX_t$ lies at time $t$. With probability $1$, $F_{\tau_1}$ has
dimension $n-1$. We define the covariance matrix $(\bW_t)_{ij} = \bW_{ij}$
when $i,j \in A_t$, and $(\bW_t)_{ij} = 0$ otherwise. Then we
take $\tau_2 = \inf\{t\ge \tau_1: \bX_{\tau_1} +
\bW_{\tau_1}^{\nicefrac[]{1}{2}} (\bB_t - \bB_{\tau_1}) \not \in F_{\tau_1}\}$ to
be the first time that Brownian motion started at $\bX_{\tau_1}$ with
covariance given by $\bW_{\tau_1}$ exits
the face $F_{\tau_1}$. Again, with probability $1$, we can assume that
this is also the first time the process lies on the boundary of
$F_{\tau_1}$. For all $\tau_1<  t \le \tau_2$ we define
\[
\bX _t = \bX_{\tau_1} + \bW_{\tau_1} ^{\nicefrac[]{1}{2}} (\bB_t - \bB_{\tau_1}).
\]
Again, with probability $1$, $\mathrm{dim}\ F_{\tau_2} = n-2$.
The process is defined analogously from here on. In general, $\tau_i =
\inf\{t \ge \tau_{i-1}: \bX_{\tau_{i-1}} + \bW_{\tau_{i-1}}^{\nicefrac[]{1}{2}}
(\bB_t - \bB_{\tau_{i-1}})\not \in F_{\tau_{i-1}}\}$ is (with
probability $1$) the first time that the process
hits a face of the cube of dimension $n-i$. Then for $\tau_{i-1} < t
\le \tau_i$ we have $\bX _t = \bX_{\tau_{i-1}} + \bW_{\tau_{i-1}}
^{\nicefrac[]{1}{2}}(\bB_t - \bB_{\tau_{i-1}})$. At time $\tau_n$, $\bX_{\tau_n} \in \{-1,
1\}^n$, so the process remains fixed, i.e.~for any $t \ge \tau_n$,
$\bX_t  = \bX_{\tau_n}$.
The output of the algorithm then corresponds to a cut $S \subseteq V$
defined as follows:
\[ S=\left\{ i\in V:\left( \bX _{\tau_n}\right)_i = 1\right\}.\]
We say that a pair of nodes $\{i,j\}$ is \emph{separated} when $|S
\cap \{i,j\}|= 1$.

\noindent\textbf{Remark:} While we have defined the algorithm as a continuous
diffusion process, driven by Brownian motion, a standard
discretization will yield a polynomial time algorithm that achieves
the same guarantee up to an error that is polynomially small. Such a
discretization was outlined in the Introduction. An analysis of the
error incurred by discretizing a continuous diffusion process in this
way can be found, for example, in \cite{Gobet00} or the book
\cite{Gobet16}. More sophisticated discrete simulations of such
diffusion processes are also available, and can lead to better time
complexity as a function of the error. One example is the Walk on
Spheres algorithm analyzed by~\citet{BinderBraverman}.  This algorithm
allows us to draw a sample $\bX_{\tau}$ from the continuous diffusion
process, stopped at a random time $\tau$, such that $\bX_\tau$ is
within distance $\varepsilon$ from the boundary of the cube $[-1,
1]^n$. The time necessary to sample $\bX_\tau$ is polynomial in $n$
and $\log(1/\varepsilon)$. We can then round $\bX_\tau$ to the
nearest point on the boundary of the cube, and continue the simulation
starting from this rounded point. It is straightforward to show, using
the fact that the probability to cut an edge is continuous in the
starting point of our process, that if we set $\varepsilon =
o(n^{-1})$, then the approximation ratio achieved by this simulation
is within an $o(1)$ factor from the one achieved by the continuous
process.  In the rest of the paper, we focus on the continuous process
since our methods of analysis are naturally amenable to it.

\subsection{Analysis of the Algorithm}\label{sec:mc-analysis}

Our aim is to analyze the expected value of the cut output by the
Sticky Brownian Motion
rounding algorithm. Following Goemans and Williamson~\cite{GW95},  we aim to
bound the probability an edge is cut as compared to its contribution to the SDP
objective.  Theorem~\ref{thm:cut-prob} below gives an \emph{exact}
characterization of the probability of separating a pair of vertices
$\{i,j\}$ in terms of the gamma function and hypergeometric
functions. We refer to Appendix~\ref{sec:primer} for the definitions of these
functions and a detailed exposition of their basic properties.

\cut{We give a more detailed exposition of basic facts about
special functions in the appendix, and for now just state the main
definitions.

\begin{restatable}[Gamma Function]{definition}{gammadef}
    \label{def:gamma}
    The gamma function is defined as
    \[ \Gamma(z) \eqdef \int_0^\infty x^{z-1} e^{-x} dx. \]
    for all complex numbers $z$ with non-negative real part, and
    analytically extended to all $z \neq 0,-1,-2,...$.
\end{restatable}

\begin{restatable}[Hypergeometric Function]{definition}{hypergeomdef}
    \label{def:hypergeom}
       The {\em hypergeometric function} ${}_pF_q(a_1,\ldots,a_p,b_1;\ldots,b_q;z)$ is defined as
\[ {}_pF_q(a_1,\ldots,a_p,b_1,\ldots,b_q;z)  \eqdef \sum_{n=0}^\infty \frac{ (a_1)_n \cdots (a_p)_n}{(b_1)_n \cdot (b_q)_n} \frac{z^n}{n!} \]
where the  {\em Pochhammer symbol} (rising factorial) is defined
inductively as  \[(a)_n :=  a(a+1) \cdots(a+n-1) \text{ and }
(a)_0=1.\]
\end{restatable}

We now give the main theorem of the section.}

\begin{restatable}{theorem}{cutprobthm}
 \label{thm:cut-prob}
    The probability that the Sticky Brownian Motion rounding algorithm
    will separate a pair $\{i,j\}$ of vertices for which
    $\theta=cos^{-1}(\bw_i \cdot \bw_j)$ equals

    \[ 1- \frac{\Gamma( \frac{a+1}{2})}{\Gamma(\frac{1-a}{2})\Gamma(\frac{a}{2}+1)^2} \cdot
    \pFq{3}{2}{\frac{1+a}{2},\frac{1+a}{2},\frac{a}{2}}{\frac{a}{2},\frac{a}{2}+1}{1} \]

    where $a = \theta/\pi$, $\Gamma$ is the gamma function, and
    ${}_3F_2$ is the hypergeometric function.
\end{restatable}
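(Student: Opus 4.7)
The plan is to reduce the question to a two-dimensional diffusion problem. Since whether the edge $\{i,j\}$ is cut depends only on $(\bX_{\tau_n})_i$ and $(\bX_{\tau_n})_j$, I would first observe that the projected process $\bY_t \eqdef ((\bX_t)_i,(\bX_t)_j)$ evolves, in the continuous-time limit, as a correlated two-dimensional Brownian motion started at $\mathbf{0}$ inside the square $[-1,1]^2$, with covariance $\bigl(\begin{smallmatrix} 1 & \cos\theta\\ \cos\theta & 1\end{smallmatrix}\bigr)$, together with the stickiness rule: once a coordinate hits $\pm 1$ it freezes, and the other continues as a one-dimensional standard Brownian motion on the corresponding edge until it reaches a corner. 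The separation probability is then exactly the probability that $\bY_\infty$ ends at one of the two corners $(+1,-1)$ or $(-1,+1)$.

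Next, I would apply the linear map $L$ that whitens the covariance, so that $\bB_t \eqdef L\bY_t$ is a standard two-dimensional Brownian motion, while $L$ carries $[-1,1]^2$ onto a rhombus $\rhombus$ centered at $\mathbf{0}$ with internal angles $\theta$ at two opposite vertices and $\pi-\theta$ at the other two. Since one-dimensional Brownian motion on each side of the square is a martingale stopped at its two endpoints, optional stopping tells us that, conditional on the first-hit point $\bB_\tau$ lying on a side $s$ of $\rhombus$, the eventual exit corner is affine in $\bB_\tau$. Consequently, the separation probability can be written as the integral over $\partial \rhombus$ of an explicit piecewise-affine function $\phi$ (equal to $1$ at the $L$-images of $(\pm 1,\mp 1)$ and to $0$ at the $L$-images of $(\pm 1,\pm 1)$, and interpolating linearly along each side) against the harmonic measure $\omega_0$ on $\partial\rhombus$ with respect to the base point $\mathbf{0}$.

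The third step is conformal invariance. Let $F:\mathbb{D}\to \rhombus$ be the Schwarz--Christoffel map with $F(0)=\mathbf{0}$ sending the four fourth-roots of unity to the four vertices of $\rhombus$; this is pinned down uniquely by the $\mathbb{Z}_4$ symmetry of the rhombus and the base point. Then $F'(z) = c\prod_{k=0}^{3}(z-i^{k})^{\alpha_k-1}$ with exponents alternating through $\{a-1,-a\}$ for $a=\theta/\pi$, and the harmonic measure on $\partial\rhombus$ from $\mathbf{0}$ is the pushforward of the uniform measure on $\partial\mathbb{D}$. Hence
\[
\Pr[\,\{i,j\}\text{ separated}\,] \;=\; \frac{1}{2\pi}\int_0^{2\pi}\phi\!\left(F(e^{i\varphi})\right)d\varphi,
\]
and parameterizing the position along each side of $\rhombus$ by the incomplete Schwarz--Christoffel integral $\int_0^{e^{i\varphi}} F'$ turns the right-hand side into an elliptic integral in the parameter $a$.

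The hardest step, and the one I expect to absorb most of the work, is simplifying this elliptic integral into the closed form of the theorem. I would first use the four-fold symmetry of $\phi$ and of $F$ to collapse the four arc-contributions into a single integral of the form $\int_0^1 t^{\alpha-1}(1-t)^{\beta-1} I(t)\,dt$, where $I(t)$ is itself an incomplete beta integral encoding position along one side of $\rhombus$. Expanding $I(t)$ via the Euler integral representation of ${}_2F_1$ and swapping the order of integration leaves an inner beta integral and an outer series that should be identifiable termwise with the stated ${}_3F_2(\cdots;1)$ (the seemingly redundant parameter $\tfrac{a}{2}$ appearing in both numerator and denominator of ${}_3F_2$ is precisely the trace of this double-integral manipulation, performed before any Pochhammer cancellation). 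The gamma-function prefactor $\Gamma(\tfrac{a+1}{2})/\bigl(\Gamma(\tfrac{1-a}{2})\Gamma(\tfrac{a}{2}+1)^2\bigr)$ I would then identify by bookkeeping on the Schwarz--Christoffel normalization constant $c$ (itself expressible as a beta function in $a$), together with the $1/(2\pi)$ and the reflection identity $\Gamma(z)\Gamma(1-z)=\pi/\sin\pi z$. The principal risk is not the existence of a closed form but rather tracking constants precisely and selecting the correct Thomae/contiguous transformation so that the final expression matches the exact ${}_3F_2$ in the statement rather than one of its many equivalent avatars.
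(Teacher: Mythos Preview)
Your plan is correct and matches the paper's proof essentially step for step: the projection to a two-dimensional correlated walk, the linear whitening sending $[-1,1]^2$ to a rhombus with interior angle $\theta$, the optional-stopping reduction to an integral of an affine function against harmonic measure, the Schwarz--Christoffel map from the disc with prevertices at $\{1,i,-1,-i\}$, the identification of arc length along a side with an incomplete beta function $\beta_{\sin^2\varphi}(a/2,b/2)$, and finally the expansion of that incomplete beta via the Euler representation of ${}_2F_1$, term-by-term integration, and recognition of the resulting series as the stated ${}_3F_2$. The only quibble is that the rhombus has dihedral (Klein four) rather than $\mathbb{Z}_4$ symmetry, but the paper in fact uses only the two reflections across the coordinate axes, which is exactly what you need to collapse the four arcs to one.
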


Theorem~\ref{thm:maxcut} will now follow from the following corollary of Theorem~\ref{thm:cut-prob}. The corollary follows from numerical estimates of the gamma and hypergeometric functions.

\begin{restatable}[]{corollary}{paircutcor}
\label{corl:pair-cut}
For any pair $\{i,j\}$, the probability that the pair $\{i,j\}$ is separated is at least $\MCRATIO \cdot \frac{1-\bw_i\cdot \bw_j}{2}$.
\end{restatable}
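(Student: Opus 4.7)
The plan is to treat the separation probability as a single-variable function of the angle $\theta = \cos^{-1}(\bw_i \cdot \bw_j)$, then verify that its ratio to the SDP contribution $\tfrac{1-\cos\theta}{2}$ stays above $\MCRATIO$ on all of $[0,\pi]$. Set $a = \theta/\pi \in [0,1]$, let
\[
F(a) \;=\; \frac{\Gamma(\tfrac{a+1}{2})}{\Gamma(\tfrac{1-a}{2})\,\Gamma(\tfrac{a}{2}+1)^2}\cdot \pFq{3}{2}{\tfrac{1+a}{2},\tfrac{1+a}{2},\tfrac{a}{2}}{\tfrac{a}{2},\tfrac{a}{2}+1}{1}
\]
denote the closed-form expression from Theorem~\ref{thm:cut-prob}, so that the separation probability is $P(a) = 1 - F(a)$. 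It suffices to show $R(a) := P(a)/\bigl(\tfrac{1-\cos(\pi a)}{2}\bigr) \ge \MCRATIO$ for every $a \in [0,1]$.

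First I would dispose of the boundary. At $a=1$ we have $\theta = \pi$: the factor $\Gamma(\tfrac{1-a}{2}) = \Gamma(0)$ is an infinite pole of the denominator, so $F(1)=0$, and since $\tfrac{1-\cos\pi}{2}=1$, we get $R(1)=1$. At $a \to 0^+$ both numerator and denominator vanish, so I would compute the leading-order Taylor expansion of $F(a)$ around $a=0$ using the reflection identity $\Gamma(1-z)\Gamma(z) = \pi/\sin(\pi z)$ to expand the prefactor and the power-series definition of ${}_3F_2$ to expand the hypergeometric factor at $a=0$. This yields $1 - F(a) = c a + O(a^2)$ for an explicit constant $c$, while $\tfrac{1-\cos(\pi a)}{2} = \tfrac{\pi^2 a^2}{4} + O(a^4)$; one then verifies that $R(a) \to \infty$ (indeed $P(\theta) \sim \theta/\pi$ as $\theta \to 0$, which dominates the quadratic SDP contribution), so the ratio is comfortably large near $a=0$.

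For the interior, since ${}_3F_2\bigl(\cdots; 1\bigr)$ has no useful closed form at the $a$-dependent parameters, I would resort to rigorous numerical evaluation. Differentiating the formula for $F(a)$ term by term, the derivative of the $\Gamma$-prefactor is controlled by digamma values and the derivative of ${}_3F_2$ at argument $1$ is controlled by differentiating the defining series coefficient-wise (convergence at $z=1$ is preserved because the parameter excess $\sum b_j - \sum a_i = -\tfrac12 < 0$ remains bounded uniformly on compact subintervals away from $a=0,1$). This yields an explicit Lipschitz bound $L$ on $R(a)$ on any compact subinterval of $(0,1)$. Then a grid $\{a_k\}$ of spacing $\eps = (R(a_k) - \MCRATIO)/L$ certifies $R(a) \ge \MCRATIO$ between grid points, and standard interval arithmetic evaluating $\Gamma$ and ${}_3F_2$ (truncating the ${}_3F_2$ series with an explicit tail bound using the ratio test) produces certified lower bounds on $R(a_k)$.

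The hardest part will be handling the neighborhoods of $a=0$ and $a=1$, where both $\Gamma$ factors and the ratio $R$ behave singularly and direct evaluation is unstable. I would cut off small intervals $[0,\eps]$ and $[1-\eps,1]$ and handle them analytically via the Taylor expansions described above (using the reflection identity to regularize the poles), while the remaining compact subinterval is handled purely by the certified grid search. Putting these three pieces together yields $\min_{a \in [0,1]} R(a) \ge \MCRATIO$, which is exactly the statement of Corollary~\ref{corl:pair-cut}.
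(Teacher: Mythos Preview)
Your approach is essentially the same as the paper's: the paper disposes of this corollary in one sentence, saying it ``follows from numerical estimates of the gamma and hypergeometric functions,'' and you have sketched how to make such a numerical verification rigorous via Lipschitz bounds, grid search, and asymptotic control at the endpoints.

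One technical slip: you justify convergence of the ${}_3F_2$ at $z=1$ by saying the parameter excess $\sum b_j - \sum a_i = -\tfrac12 < 0$. This is the wrong direction. For ${}_{q+1}F_q$ evaluated at $z=1$, absolute convergence requires the excess to be \emph{positive}, not negative; a negative excess means divergence, which would wreck both Theorem~\ref{thm:cut-prob} and your tail bounds. (If you read off the parameters from the displayed formula in Theorem~\ref{thm:cut-prob}, note that the top $\tfrac{a}{2}$ and bottom $\tfrac{a}{2}$ cancel, collapsing it to a ${}_2F_1$ with excess $-\tfrac{a}{2}$, which would indeed diverge; the derivation in Claim~\ref{lem:rphiint} actually produces bottom parameters $\tfrac{a}{2}+1,\tfrac{a}{2}+1$, giving excess $1-\tfrac{a}{2}>0$, so the series does converge.) Fix the sign of the convergence criterion and work from the form of the ${}_3F_2$ in Claim~\ref{lem:rphiint}, and your Lipschitz/tail argument goes through.
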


We now give an outline of the proof of Theorem~\ref{thm:cut-prob}. The
plan is to first show that the desired probability can be obtained by
analyzing the two-dimensional standard Brownian motion starting at the
center of a rhombus. Moreover, the probability of separating $i$ and
$j$ can be computed using the distribution of the first point on the
boundary that is hit by the Brownian motion. Conformal mapping and, in
particular, the Schwarz-Christoffel formula, allows us to obtain a
precise expression for such a distribution and thus for the separation
probability, as claimed in the theorem. We now expand on the above
plan.

First observe that to obtain the probability $i$ and $j$ are separated, it
is enough to consider the 2-dimensional process obtained by projecting to the
$i^{th}$ and $j^{th}$ coordinates of the vector $\bX_t$. Projecting the process
onto these coordinates, we obtain a process $\tilde{\bX}_t\in \mathbb{R}^2$ that
can be equivalently defined as follows. Let
\[ \tilde{\bW}=\begin{pmatrix} 1 & \cos (\theta)
            \\ \cos (\theta) & 1 \end{pmatrix},
\] where $\theta$ is the angle between $\bw _i$ and
$\bw _j$.
Let $\bB _t$ be standard Brownian motion in $\Re^2$ started at $0$, and
let $\tau = \inf\{t: \tilde{\bW}^{\nicefrac[]{1}{2}} \bB_t \not \in
[-1,1]^t\}$ be the first time the process hits the boundary of the square.
Then for all $0 \le t \le \tau$ we define $\tilde{\bX}_t =
\tilde{\bW}^{\nicefrac[]{1}{2}} \bB_t$. Any coordinate $k$ for
which $(\tilde{\bX}_\tau)_k \in \{\pm 1\}$ remains fixed from then on,
i.e.~for all $t > \tau$, $(\tilde{\bX}_t)_k =
(\tilde{\bX}_\tau)_k$. The coordinate $\ell$ that is not fixed at time
$\tau$ (one exists with probability $1$)
continues to perform one-dimensional Brownian motion started from
$(\tilde{\bX}_\tau)_\ell$ until it hits $-1$ or $+1$, at which point it
also becomes fixed. Let $\sigma$  be the time this happens; it is easy to
show that $\sigma < \infty$ with probability $1$, and, moreover, $\E[\sigma] <
\infty$. We say that the process $\set{\tilde{\bX}_t}_{t \ge 0}$ is \emph{absorbed} at
the vertex $\tilde{\bX}_\sigma \in \{-1,1\}^2$.

\begin{observation}
    The probability that the algorithm separates vertices $i$ and $j$ equals
    \[ \Pr \Big[ \set{\tilde{\bX_t}}_t \text{ is absorbed in } \set{(+1,-1),(-1,+1)}\Big]. \]
\end{observation}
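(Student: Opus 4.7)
The plan is to establish an equality in distribution between the two-dimensional marginal $((\bX_t)_i, (\bX_t)_j)$ of the full $n$-dimensional Sticky Brownian Motion and the two-dimensional process $\tilde{\bX}_t$ introduced above. Once this coupling is in place, the observation becomes nearly tautological: the cut $S$ is defined via the coordinates of $\bX_{\tau_n} \in \set{-1,+1}^n$, so the event $|S\cap\set{i,j}|=1$ is precisely the event that the pair $((\bX_{\tau_n})_i,(\bX_{\tau_n})_j)$ equals $(+1,-1)$ or $(-1,+1)$, which under the coupling is the absorption event for $\tilde{\bX}_t$.

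To prove the distributional identity, I would argue stage by stage using the piecewise definition $\bX_t = \bX_{\tau_{k-1}} + \bW_{\tau_{k-1}}^{\nicefrac{1}{2}}(\bB_t - \bB_{\tau_{k-1}})$ on each interval $[\tau_{k-1},\tau_k]$. The key observation is that as long as both $i$ and $j$ lie in the active set $A_{\tau_{k-1}}$, the principal $2\times 2$ submatrix of $\bW_{\tau_{k-1}}$ on rows $i,j$ equals $\tilde{\bW}$, since stickiness only zeros out rows and columns of coordinates already absorbed at $\pm 1$. Consequently, the projection onto the $(i,j)$-coordinates is a mean-zero continuous Gaussian process whose covariance kernel is $\min(s,t)\,\tilde{\bW}$, the same law as $\tilde{\bW}^{\nicefrac{1}{2}} \bB'_t$ for a standard 2-d Brownian motion $\bB'$. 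This evolution continues until the first time $\tau$ that the projection exits $[-1,1]^2$, which by construction coincides with the stage $\tau_k$ at which $(\bX_{\cdot})_i$ or $(\bX_{\cdot})_j$ first reaches $\pm 1$.

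After the absorption time $\tau$, one of the two coordinates (say $i$) is fixed at $\pm 1$ and thereafter contributes a zero row and column to $\bW_t$; the remaining coordinate $j$ has $(\bW_t)_{jj} = \bW_{jj} = 1$ (and the $(i,j)$-block reduces to a rank-one matrix), so $(\bX_t)_j$ evolves as a one-dimensional standard Brownian motion on the appropriate side of the square until it also reaches $\pm 1$. This exactly matches the post-$\tau$ behavior specified in the definition of $\tilde{\bX}_t$.

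The main (mild) obstacle is a bookkeeping one: to formalize the stage-by-stage matching of laws so that the identification carries through all of $\tau_1,\ldots,\tau_n$, one needs to invoke the strong Markov property at each $\tau_k$ and verify that absorption of coordinates other than $i,j$ leaves the $(i,j)$-block of the instantaneous covariance unchanged, so that the two-dimensional projection of $\bX_t$ never ``notices'' what the other coordinates are doing. Modulo this routine verification, the two processes can be coupled to agree pathwise on the $(i,j)$-plane up to absorption, and the observation follows immediately by reading off the labels of the limiting configuration.
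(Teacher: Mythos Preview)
Your proposal is correct and in fact more detailed than the paper, which gives no proof of this observation: the paper simply asserts in the preceding paragraph that projecting the $n$-dimensional process onto coordinates $i,j$ yields a process that ``can be equivalently defined'' as $\tilde{\bX}_t$, and then states the observation as immediate. Your stage-by-stage argument---that freezing coordinates other than $i,j$ leaves the $(i,j)$-block of the covariance equal to $\tilde{\bW}$, together with the strong Markov property at each $\tau_k$---is precisely the rigorous content behind that assertion, and the final identification of separation with absorption at $(+1,-1)$ or $(-1,+1)$ is indeed tautological once the coupling is in place.
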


With an abuse of notation, we denote $\tilde{\bX_t}$ by $\bX_t$ and
$\tilde{\bW}$ by $\bW$ for the rest of the section which is aimed at
analyzing the above probability.  We also denote by $\rho = \cos
(\theta)$ the correlation between the two coordinates of the random
walk, and call the two-dimensional process just described a
$\rho$-correlated walk.  It is easier to bound the probability that
$i$ and $j$ are separated by transforming the
$\rho$-correlated walk inside $[-1,1]^2$ into a
standard Brownian motion inside an appropriately scaled rhombus. We do
this by transforming $\set{\bX_t}_{t \ge 0}$ linearly into an
auxiliary random process $\set{\bY_t}_{t \ge 0}$ which will be sticky
inside a rhombus (see Figures~\eqref{fig:A}-\eqref{fig:B}).  Formally,
given the random process $\left\{ \bX _t\right\}_{t\geq 0}$, we
consider the process $\bY_t = \bO \cdot \bW^{-\nicefrac[]{1}{2}}\cdot
\bX_t$, where $\bO$ is a rotation matrix to be chosen
shortly. Recalling that for $0 \le t \le \tau$ the process $\left\{
  \bX _t\right\}_{0 \le t\le \tau}$ is distributed as
$\left\{\bW^{\nicefrac[]{1}{2}} \bB _t\right\}_{0 \le t\le \tau}$, we
have that, for all $0 \le t \le \tau$,
\[
\bY_t = \bO\cdot \bB_t \equiv \bB_t.
\]
Above $\equiv$ denotes equality in distribution, and follows from the
invariance of Brownian motion under rotation.
Applying $\bO \bW^{-\nicefrac[]{1}{2}}$ to the points inside $[-1,1]^2$, we
get a rhombus $\rhombus$ with vertices $b_1,\dots,b_4$, which are the images of the
points $(+1,-1),(+1,+1),(-1,+1),(-1,-1)$, respectively. We choose
$\bO$ so that $b_1$ lies on the positive $x$-axis and $b_2$ on the
positive $y$-axis.
Since $\bO \bW^{-\nicefrac[]{1}{2}}$ is a linear
transformation, it maps the interior of $[-1,1]^2$ to the interior of
$\rhombus$ and the sides of $[-1,1]^2$ to the sides of $\rhombus$. We have then that $\tau$ is
the first time $\bY_t$ hits the boundary of $\rhombus$, and that after
this time $\bY_t$ sticks to the side of $\rhombus$ that it first hit
and evolves as (a scaling of) one-dimensional Brownian motion
restricted to this side, and started at $\bY_\tau$. The process then
stops evolving at the
time $\sigma$ when $\bY_\sigma \in \{b_1, \ldots, b_4\}$. We say that
$\set{\bY_t}_{t\ge 0}$ is absorbed at $\bY_\sigma$.

The following lemma, whose proof appears  in the appendix, formalizes the main facts we use about this
transformation.
\begin{restatable}{lemma}{basictransform} \label{lem:basic-transform}
    Applying the transformation $\bO \bW^{-\nicefrac[]{1}{2}}$  to
    $\set{\bX_t}_{t\ge 0}$ we
    get a new random process $\set{\bY_t}_{t\ge 0}$ which has the following properties:
    \begin{enumerate}
        \item If $\bX_t$ is in the interior/boundary/vertex of $[-1,1]^2$ then
            $\bY_t$ is in the interior/boundary/vertex of $\rhombus$,
            respectively.
        \item $\rhombus$ is  a rhombus whose internal angles at $b_1$
          and $b_3$ are $\theta$, and at $b_2$ and $b_4$ are $\pi-
          \theta$. The vertex $b_1$ lies on the positive $x$-axis, and $b_2, b_3,
          b_4$ are arranged counter-clockwise.
        \item The probability that the algorithm will separate the pair $\{i,j\}$ is
    exactly $\Pr[\bY_t \text{ is absorbed in } b_1 \text{ or } b_3]. $
    \end{enumerate}
\end{restatable}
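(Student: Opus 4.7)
}

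The plan is to exploit the fact that $\bO\bW^{-\nicefrac{1}{2}}$ is an invertible \emph{linear} map (since $\bW$ is positive definite when $\theta \in (0,\pi)$), and then carry out an explicit eigendecomposition of $\bW^{-\nicefrac{1}{2}}$ to pin down the image of $[-1,1]^2$. For part (1), since a linear invertible map sends $[-1,1]^2$ bijectively onto its image, and is a homeomorphism of $\Re^2$, it takes the interior, the boundary, and the vertex set of the square into the interior, boundary, and vertex set, respectively, of the image. So part (1) is essentially automatic once we verify that the image is the claimed rhombus.

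For part (2), I would diagonalize $\bW$: its eigenvectors are $\tfrac{1}{\sqrt2}(1,1)$ with eigenvalue $1+\cos\theta = 2\cos^2(\theta/2)$ and $\tfrac{1}{\sqrt2}(1,-1)$ with eigenvalue $1-\cos\theta = 2\sin^2(\theta/2)$. Thus $\bW^{-\nicefrac{1}{2}}$ has the same eigenvectors with reciprocal square-root eigenvalues, giving
\[
\bW^{-\nicefrac{1}{2}}(1,1)^T = \tfrac{1}{\sqrt{1+\cos\theta}}(1,1)^T, \qquad
\bW^{-\nicefrac{1}{2}}(1,-1)^T = \tfrac{1}{\sqrt{1-\cos\theta}}(1,-1)^T.
\]
Applied to the four vertices of $[-1,1]^2$, this produces four points lying on the two perpendicular diagonals of directions $(1,1)$ and $(1,-1)$, of half-lengths $1/\cos(\theta/2)$ and $1/\sin(\theta/2)$. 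I then choose $\bO$ to be the rotation by $\pi/4$ that sends these diagonals onto the coordinate axes so that the image of $(1,-1)$ lies on the positive $x$-axis and the image of $(1,1)$ on the positive $y$-axis. This yields a quadrilateral $\rhombus$ with vertices
\[
b_1 = \bigl(\tfrac{1}{\sin(\theta/2)},0\bigr),\ \ b_2 = \bigl(0,\tfrac{1}{\cos(\theta/2)}\bigr),\ \ b_3 = -b_1,\ \ b_4 = -b_2,
\]
listed counter-clockwise. Since the diagonals are perpendicular and meet at their midpoints, $\rhombus$ is a rhombus. To compute the interior angle at $b_1$, I would take the unit vectors along $b_2 - b_1$ and $b_4 - b_1$; by symmetry these have equal norm, and a direct inner-product computation gives
\[
\cos(\angle b_2 b_1 b_4) = \frac{\cos^2(\theta/2) - \sin^2(\theta/2)}{\cos^2(\theta/2) + \sin^2(\theta/2)} = \cos\theta,
\]
so the angle at $b_1$ (and similarly at $b_3$) is $\theta$, while the remaining two angles are $\pi-\theta$.

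For part (3), the pair $\{i,j\}$ is separated precisely when $(\bX_\sigma)_i \neq (\bX_\sigma)_j$, i.e.\ when the two-dimensional process $\{\bX_t\}$ is absorbed in $\{(1,-1),(-1,1)\}$. By construction, $\bO\bW^{-\nicefrac{1}{2}}$ sends $(1,-1)\mapsto b_1$ and $(-1,1)\mapsto b_3$, and by part (1) it bijects the vertex set of $[-1,1]^2$ onto $\{b_1,\dots,b_4\}$; hence $\bX_\sigma \in \{(1,-1),(-1,1)\}$ if and only if $\bY_\sigma \in \{b_1,b_3\}$, and part (3) follows.

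The only subtle step is verifying that the \emph{dynamics} on the boundary also match, i.e.\ that the transformed process $\{\bY_t\}$ is really sticky Brownian motion inside $\rhombus$ that becomes one-dimensional Brownian motion along whichever side it first hits. For $0\le t\le \tau$ this is immediate from $\bY_t = \bO \bB_t \equiv \bB_t$, and for $t > \tau$ it follows because a linear map carries the one-dimensional Brownian motion along a side of $[-1,1]^2$ to a (rescaled) one-dimensional Brownian motion along the corresponding image side of $\rhombus$, with absorption times preserved. I expect the main obstacle to be merely bookkeeping: namely, checking the orientation of the vertices under the rotation $\bO$ and confirming that the rescaling of the one-dimensional Brownian motion on each side does not affect which vertex is hit first (which it does not, as a time change does not alter the exit distribution).
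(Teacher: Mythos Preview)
Your proposal is correct and follows essentially the same approach as the paper: the paper also proves part~1 by linearity and invertibility, proves part~2 by writing $\bW^{-1/2}$ explicitly (which amounts to your eigendecomposition, since the matrix the paper writes down has exactly the eigenvectors $(1,\pm 1)/\sqrt2$ with eigenvalues $(\sqrt2\cos(\theta/2))^{\pm 1}$ and $(\sqrt2\sin(\theta/2))^{\pm 1}$), checks that the image has orthogonal diagonals, and computes the interior angle at the image of $(1,-1)$; and part~3 follows from the vertex correspondence, just as you argue. Your additional remark about the boundary dynamics being preserved under the linear map is a nice clarification that the paper leaves implicit.
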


In the following useful lemma we show that, in order to compute the
probability that the process $\set{\bY_t}_{t\ge 0}$ is absorbed in
$b_1$ or $b_3$, it suffices to determine the distribution of the first
point $\bY_\tau$ on the boundary $\bd \rhombus$ that the process
$\set{\bY_t}_{t\ge 0}$ hits. This distribution is a probability
measure on $\bd \rhombus$ known in the literature as \emph{the
  harmonic measure} (with respect to the starting point $0$). We
denote it by $\mu_{\bd \rhombus}$. The statement of the lemma follows.

\begin{lemma} \label{lem:rhombus-symmetry}
  \[ \Pr[\bY_t \text{ is absorbed in } b_1 \text{ or } b_3]=
  4 \cdot \int_{b_1}^{b_2} 1- \frac{ \norm{p-b_1}}{ \norm{b_2- b_1}} d\mu_{\partial \rhombus}(p) .
  \]
\end{lemma}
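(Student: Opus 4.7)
The plan is to split the event of absorption at $b_1$ or $b_3$ according to the first boundary-hit point $\bY_\tau$ and then sum over the four sides using the symmetries of the rhombus. By the strong Markov property, conditioned on $\bY_\tau=p$ lying on one of the four open sides of $\rhombus$, the subsequent process is a one-dimensional Brownian motion along that side (up to a deterministic time change coming from the scaling in the definition of $\bY_t$), started at $p$ and stopped at the two endpoints of the side. The time change does not affect the endpoint distribution, so the gambler's ruin formula for one-dimensional Brownian motion applies: if $p$ lies on side $[b_k,b_{k+1}]$ then the probability of eventual absorption at $b_k$ is exactly $1-\|p-b_k\|/\|b_{k+1}-b_k\|$, and at $b_{k+1}$ is $\|p-b_k\|/\|b_{k+1}-b_k\|$.

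Define $f:\bd\rhombus\to[0,1]$ by $f(p)=\Pr[\bY_t\text{ absorbed in }\{b_1,b_3\}\mid \bY_\tau=p]$. By the previous paragraph, on side $[b_1,b_2]$ we have $f(p)=1-\|p-b_1\|/\|b_2-b_1\|$; on $[b_2,b_3]$, $f(p)=\|p-b_2\|/\|b_3-b_2\|$; on $[b_3,b_4]$, $f(p)=1-\|p-b_3\|/\|b_4-b_3\|$; and on $[b_4,b_1]$, $f(p)=\|p-b_4\|/\|b_1-b_4\|$. The total probability is then
\[
\Pr[\bY_t\text{ absorbed in }b_1\text{ or }b_3]=\int_{\bd\rhombus} f(p)\,d\mu_{\bd\rhombus}(p).
\]

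The next step is to exploit symmetry. Since standard planar Brownian motion started at the origin is invariant in distribution under orthogonal transformations, the harmonic measure $\mu_{\bd\rhombus}$ is invariant under any isometry of $\rhombus$ fixing the origin. The rhombus is symmetric under the reflections through the $x$-axis (which fixes $b_1,b_3$ and swaps $b_2\leftrightarrow b_4$) and the $y$-axis (which fixes $b_2,b_4$ and swaps $b_1\leftrightarrow b_3$); both fix the origin. Under the $y$-axis reflection, the side $[b_1,b_2]$ maps isometrically onto $[b_3,b_2]$, with $b_1\mapsto b_3$ and $b_2\mapsto b_2$, so if $p\in[b_1,b_2]$ maps to $p'\in[b_2,b_3]$, then $\|p'-b_2\|=\|p-b_2\|$, hence $f(p')=\|p-b_2\|/\|b_2-b_1\|=1-\|p-b_1\|/\|b_2-b_1\|=f(p)$. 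An identical computation using the $x$-axis reflection matches $[b_1,b_2]$ to $[b_1,b_4]$, and composing the two reflections matches $[b_1,b_2]$ to $[b_3,b_4]$, with $f$ taking the same value at corresponding points.

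Combining these four isometries with the invariance of $\mu_{\bd\rhombus}$ yields
\[
\int_{\bd\rhombus} f(p)\,d\mu_{\bd\rhombus}(p)=4\int_{[b_1,b_2]}\left(1-\frac{\|p-b_1\|}{\|b_2-b_1\|}\right)d\mu_{\bd\rhombus}(p),
\]
which is the claimed identity. The main subtlety I anticipate is verifying that after $\tau$ the projected process on the side of the rhombus really is (a time-changed) one-dimensional Brownian motion starting at $\bY_\tau$ — this follows from the sticky definition of $\bX_t$ on the face it first hits, transported through the linear map $\bO\bW^{-1/2}$, together with the fact that affine rescalings of Brownian motion are again Brownian motions up to a deterministic time change, which preserves the hitting distribution on the two endpoints. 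Once this is in hand, the rest is a direct symmetry computation.
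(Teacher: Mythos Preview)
Your proof is correct and follows essentially the same approach as the paper's: both arguments combine (i) the invariance of the harmonic measure $\mu_{\bd\rhombus}$ under the coordinate-axis reflections of the rhombus, and (ii) the gambler's ruin / optional stopping computation for the one-dimensional process on a side after time $\tau$. Your write-up is somewhat more explicit than the paper's in checking that $f$ is preserved by each reflection and in noting that the post-$\tau$ process on a side is only a time-changed Brownian motion (which does not affect the endpoint distribution), but these are elaborations of the same argument rather than a different route.
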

\begin{proof}
  Since both $\rhombus$ and Brownian motion are symmetric
  with respect to reflection
  around the coordinate axes, we see that
  $\mu_{\bd \rhombus}$ is the same as we go from $b_1$ to $b_2$ or
  $b_4$, and as we go from $b_3$ to $b_2$ or $b_4$. Therefore,
 \[ \Pr[\text{pair $\{i,j\}$ is separated}] = 4\cdot \Pr[\text{pair $\{i,j\}$ is separated} \mid
 \bY_\tau \text{ lies on the segment } [b_1,b_2]]. \]
 The process $\set{\bY_t}_{\tau \le t \le \sigma}$ is a one-dimensional
 martingale, so $\E[\bY_\sigma|\bY_\tau] = \bY_\tau$ by the optional
 stopping theorem~\cite[Proposition 2.4.2]{MP10}. If we also condition on $\bY_\tau \in [b_1,b_2]$, we
 have that $\bY_\sigma \in \{b_1,b_2\}$.  An easy calculation then shows
 that the probability of being absorbed in $b_1$ conditional on
 $\bY_\tau$ and on the event $\bY_\tau \in [b_1, b_2]$ is exactly $
 \frac{ \norm{\bY_\tau-b_2}}{ \norm{b_2- b_1}} = 1 - \frac{ \norm{\bY_\tau-b_1}}{ \norm{b_2- b_1}}$. Then,
 \[
 \Pr[\text{pair $\{i,j\}$ is separated} \mid \bY_\tau \in [b_1,b_2]]
 = \E\left[1 - \frac{ \norm{\bY_\tau-b_1}}{ \norm{b_2- b_1}}\right]
 = \int_{b_1}^{b_2}
   1- \frac{ \norm{p-b_1}}{ \norm{b_2- b_1}} d\mu_{\partial \rhombus}(p) .
 \]
This proves the lemma.
\end{proof}

\begin{figure}
    \centering
    \begin{subfigure}[b]{0.31\textwidth}
        \includegraphics[width=\textwidth]{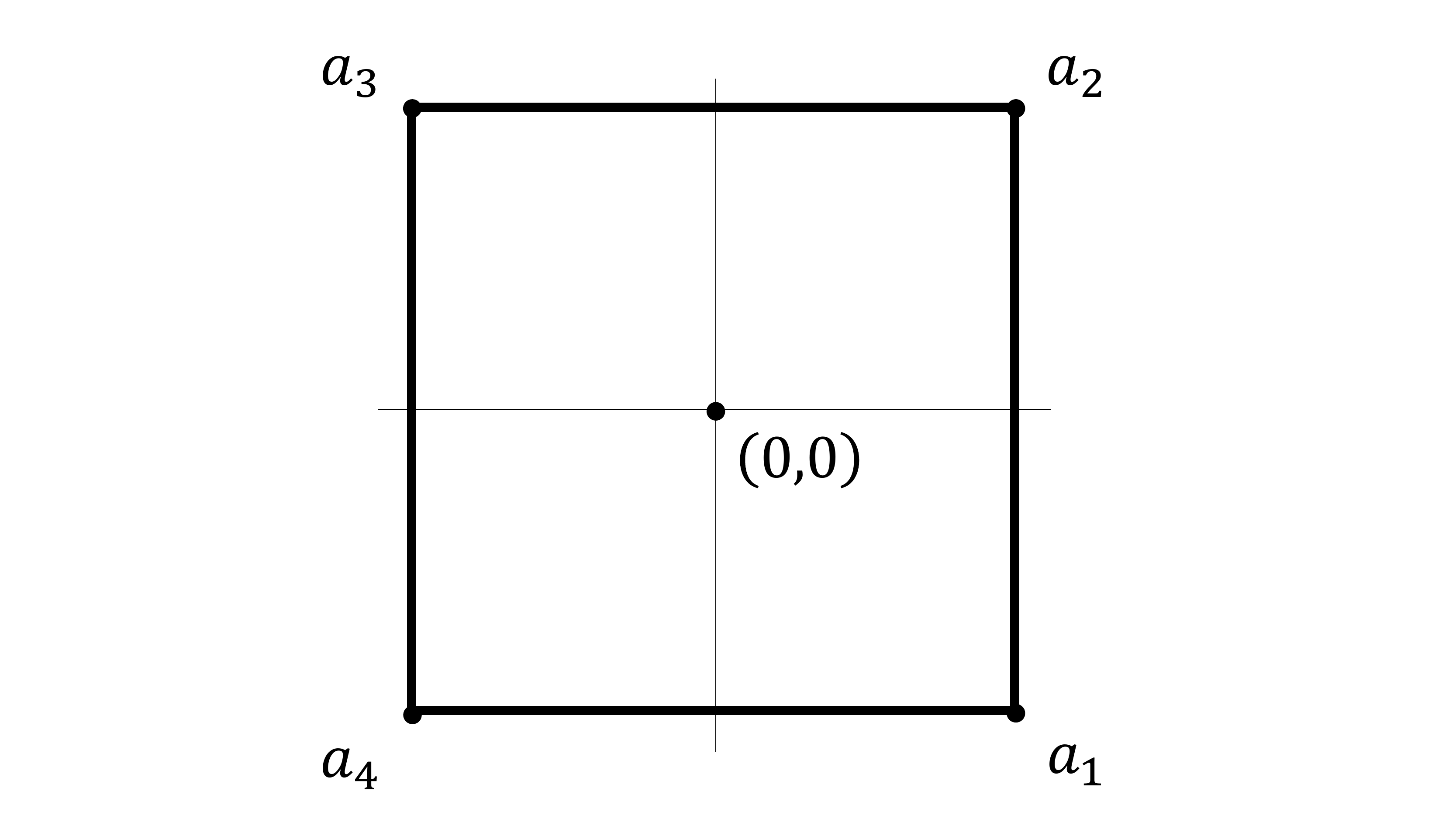}
        \caption{$\left\{\bX_t\right\} _{t\geq 0}$ in $[-1,1]^2$ square}
        \label{fig:A}
    \end{subfigure}
    ~ 
    \begin{subfigure}[b]{0.31\textwidth}
        \hfil
        \includegraphics[width=0.5\textwidth]{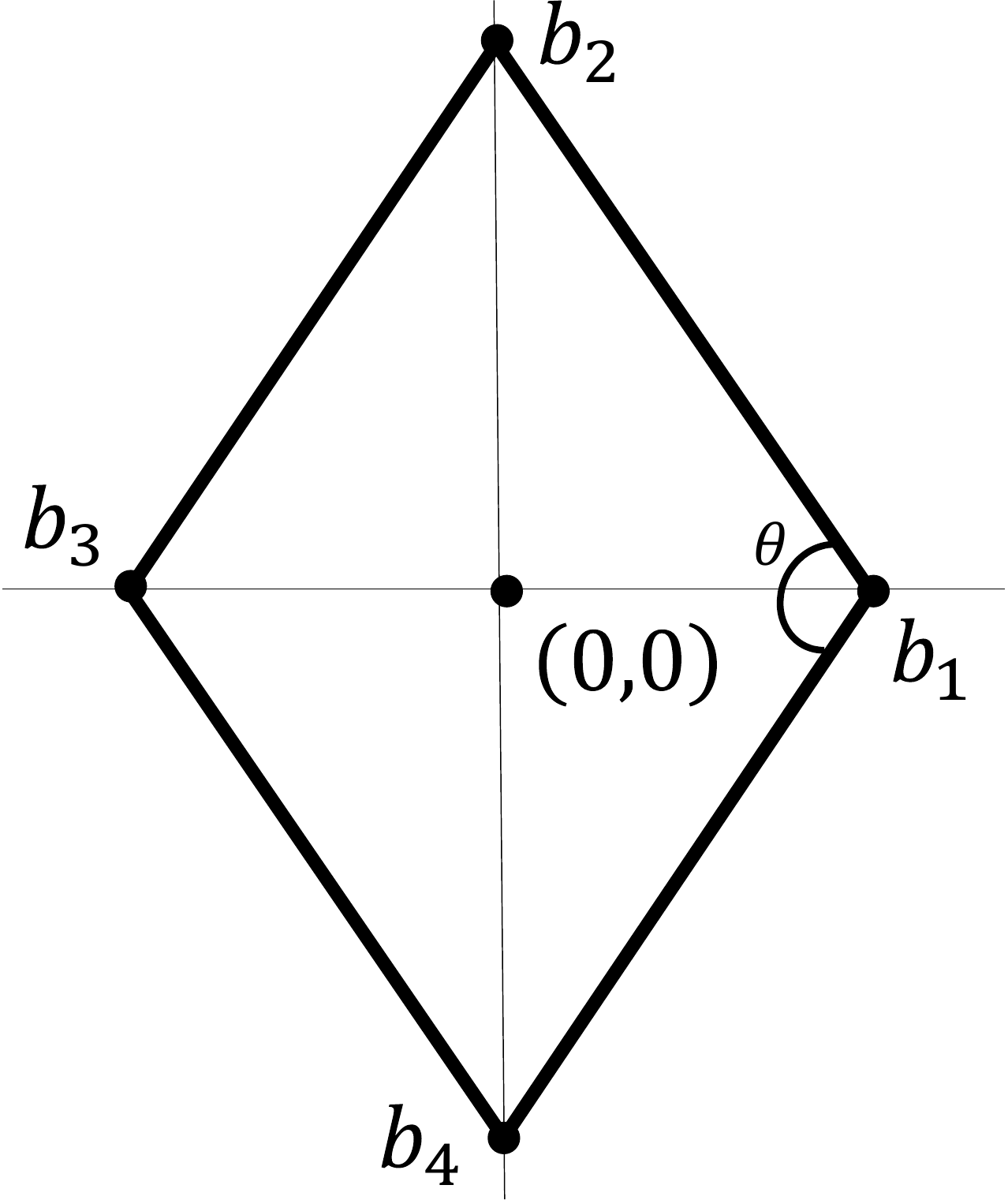}
        \caption{$\left\{\bY _t\right\} _{t\geq 0}$ in $\rhombus$}
        \label{fig:B}
    \end{subfigure}
    ~ 
    \begin{subfigure}[b]{0.31\textwidth}
        \includegraphics[width=\textwidth]{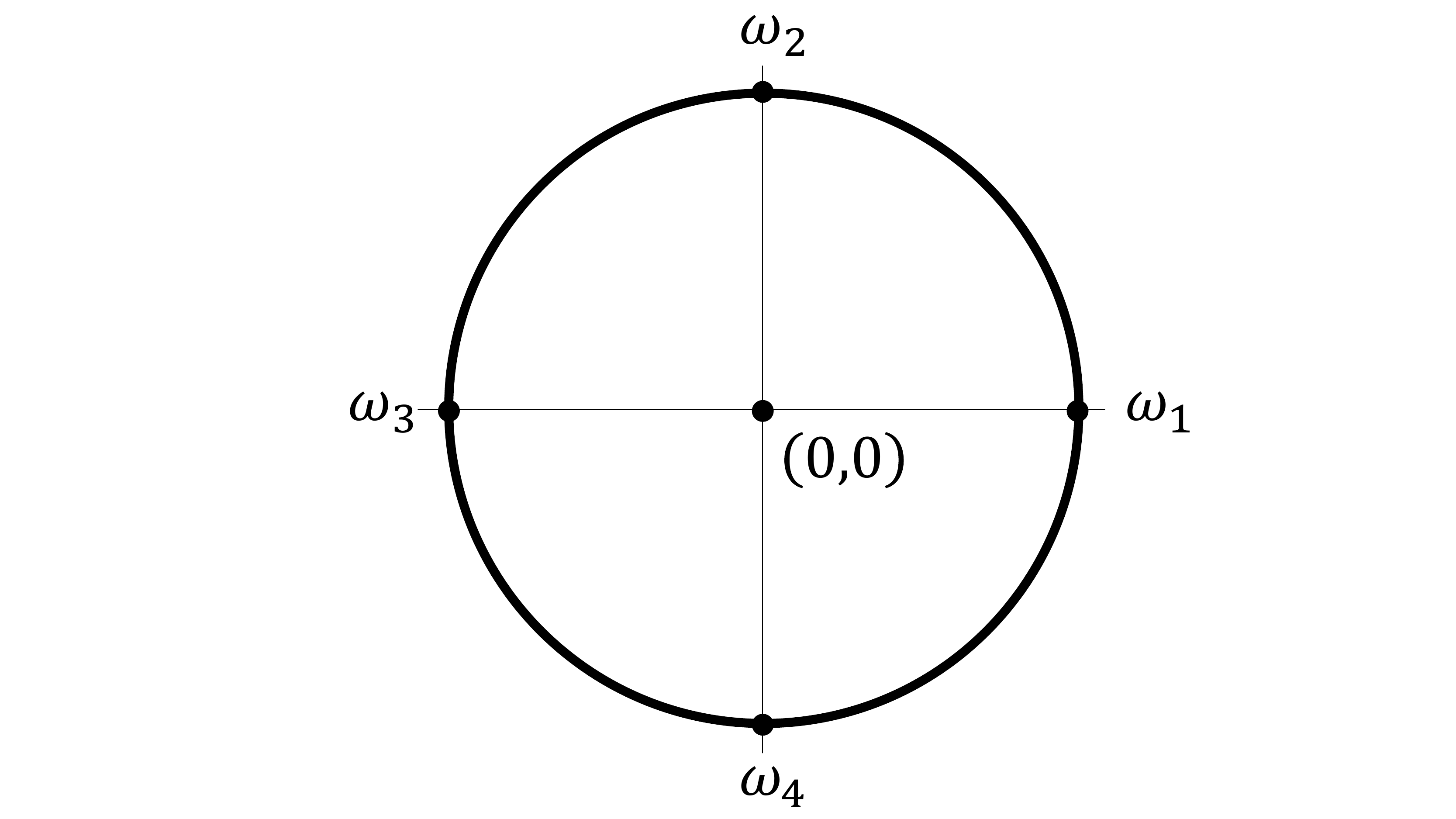}
        \caption{$\left\{ \bB_t\right\} _{t\geq 0}$ in $\mathbb{D}$}
        \label{fig:C}
    \end{subfigure}
    \caption{Figure (a) depicts $\left\{ \bX_t\right\}_{t\geq 0}$ in the $[-1,1]^2$ square, Figure (b) depicts $\left\{ \bY_t\right\}_{t\geq 0}$ in the rhombus $ \rhombus$, and Figure (c) depicts $\left\{ \bB_t\right\}_{t\geq 0}$ in the unit disc $\mathbb{D}$. The linear transformation $\bW^{-\nicefrac[]{1}{2}}$ transforms the $[-1,1]^2$ square to $\rhombus$ (Figure (a) to Figure (b)), whereas the conformal mapping $F_{\theta}$ transforms $\mathbb{D}$ to $\rhombus$ (Figure (c) to Figure (b)).}\label{fig:Proof}
\end{figure}

To obtain the harmonic measure directly for the rhombus $\rhombus$ we
appeal to conformal mappings. We use the fact that the harmonic
measure can be defined for any simply connected region $U$ in the
plane with $0$ in its interior. More precisely, let $\bB_t$ be
standard $2$-dimensional Brownian motion started at $0$, and $\tau(U)
= \inf\{t: \bB_t \not \in U\}$ be the first time it hits the boundary
of $U$. Then $\mu_{\bd U}$ denotes the probability measure induced by
the distribution of $\bB_{\tau(U)}$, and is called the
harmonic measure on $\bd U$ (with respect to $0$). When $U$ is the
unit disc centered at $0$, the harmonic measure is uniform on its
boundary because Brownian motion is invariant under rotation. Then the main idea is to use conformal maps to relate harmonic measures
on the different domains, namely the disc and our rhombus $\rhombus$.

\subsection{Conformal Mapping}\label{sec:schwarz-christoffel}

Before we proceed further, it is best to transition to the language
of complex numbers and identify $\mathbb{R}^2$ with the complex plane
$\mathbb{C}$.  A complex function $F: U \to V$ where $U,V \subseteq
\mathbb{C}$ is conformal if it is holomorphic (i.e.~complex differentiable) and its derivative
$f'(x) \neq 0$ for all $x \in U$.  The key fact we use about conformal maps
is that they preserve harmonic measure.
 Below we
present this theorem from~\citet{MP10} specialized to our setting. In
what follows, $\mathbb{D}$ will be the unit disc in $\mathbb{C}$
centered at $0$.
\begin{theorem}\label{thm:conformal-invariant} \cite[p.~204, Theorem 7.23]{MP10}.
    Suppose $F_{\theta}$ is a conformal map from the unit disk
    $\mathbb{D}$  to  $\rhombus$.
    Let $\mu_{\partial \mathbb{D}}$ and $\mu_{\partial \rhombus}$ be the harmonic measures
    with respect to $0$.
    Then $\mu_{\partial \mathbb{D}} \circ F_{\theta}^{-1} = \mu_{\partial \rhombus}$.
\end{theorem}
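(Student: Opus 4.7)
The plan is to reduce the statement to L\'evy's theorem on the conformal invariance of planar Brownian motion. Concretely, I would argue that applying a conformal map $F_\theta$ to a standard Brownian motion in $\mathbb{D}$ started at $0$ produces, up to a (random) time change, a standard Brownian motion in $\rhombus$ started at $F_\theta(0)$. Since harmonic measure is defined as the exit distribution of Brownian motion from the domain, and a time change does not alter the exit \emph{position}, the identity $\mu_{\partial \mathbb{D}} \circ F_\theta^{-1} = \mu_{\partial \rhombus}$ should fall out immediately. Note that for this to make sense with ``harmonic measure with respect to $0$'' on both sides, one needs $F_\theta(0) = 0$; if this is not already assumed one may precompose with a M\"obius automorphism of $\mathbb{D}$ sending $F_\theta^{-1}(0)$ to $0$, which does not change the pushforward measure on $\partial \mathbb{D}$.

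The main step is L\'evy's theorem itself. To establish it, I would write $F_\theta = u + iv$ with $u, v: \mathbb{D} \to \mathbb{R}$ harmonic (satisfying the Cauchy-Riemann equations because $F_\theta$ is holomorphic). Applying It\^o's formula to $u(\bB_t)$ and $v(\bB_t)$, the second-order terms vanish since $u$ and $v$ are harmonic, so both coordinates of $F_\theta(\bB_t)$ are local martingales. A direct computation of the quadratic variation and covariation gives
\[
\langle u(\bB)\rangle_t = \langle v(\bB)\rangle_t = \int_0^t \lvert F_\theta'(\bB_s)\rvert^2\, ds, \qquad \langle u(\bB), v(\bB)\rangle_t = 0,
\]
using the Cauchy-Riemann relations $u_x = v_y$, $u_y = -v_x$. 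Defining the time change $\sigma(t) = \int_0^t \lvert F_\theta'(\bB_s)\rvert^2\, ds$ and applying the Dambis-Dubins-Schwarz theorem coordinate-wise, the process $\tilde{\bB}_{\sigma(t)} := F_\theta(\bB_t)$ is a standard $2$-dimensional Brownian motion started at $F_\theta(0) = 0$.

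Given this, the conclusion is short. Let $\tau(\mathbb{D}) = \inf\{t : \bB_t \notin \mathbb{D}\}$ and $\tau(\rhombus) = \inf\{s : \tilde{\bB}_s \notin \rhombus\}$. Because $F_\theta$ is a homeomorphism of closures (Carath\'eodory's theorem, using that $\rhombus$ is a Jordan domain), $F_\theta$ maps $\partial \mathbb{D}$ bijectively to $\partial \rhombus$, and $\bB_t \in \mathbb{D}$ if and only if $F_\theta(\bB_t) \in \rhombus$. Hence $\sigma(\tau(\mathbb{D})) = \tau(\rhombus)$ and $F_\theta(\bB_{\tau(\mathbb{D})}) = \tilde{\bB}_{\tau(\rhombus)}$. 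Taking distributions of both sides, the left-hand side has law $\mu_{\partial\mathbb{D}} \circ F_\theta^{-1}$ and the right-hand side has law $\mu_{\partial \rhombus}$, completing the argument.

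The hardest part is really the invocation of It\^o's formula and Dambis-Dubins-Schwarz: it is the one place where stochastic calculus is genuinely used. A minor but genuine technical nuisance is showing that $\sigma$ reaches arbitrarily large values almost surely (so the time-changed process is defined for all $s < \tau(\rhombus)$), which requires a mild non-degeneracy argument on $\lvert F_\theta'\rvert$ near $\partial \mathbb{D}$; standard references handle this by truncating to compact subdomains of $\mathbb{D}$ and passing to the limit. The boundary correspondence via Carath\'eodory is a black-box input, standard for domains bounded by Jordan curves such as $\rhombus$.
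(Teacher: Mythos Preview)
The paper does not prove this theorem at all; it is quoted verbatim from M\"orters--Peres and used as a black box. Your proposal is essentially the standard textbook argument (and indeed the one in that reference): L\'evy's conformal invariance of planar Brownian motion via It\^o's formula plus Dambis--Dubins--Schwarz, followed by Carath\'eodory extension to match up exit points. That argument is correct and there is nothing to compare against on the paper's side.

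One small quibble with your parenthetical about precomposing with a M\"obius automorphism: this would in fact change the pushforward measure, since a M\"obius self-map of $\mathbb{D}$ that moves the origin does \emph{not} preserve normalized arc length on $\partial\mathbb{D}$ (it sends uniform measure to the harmonic measure from the image of $0$). The theorem as stated genuinely needs $F_\theta(0)=0$. In the paper's setting this is guaranteed by the explicit Schwarz--Christoffel formula (the integral defining $F_\theta$ has lower limit $0$), so no fix is required; but the fix you sketched would not work in general.
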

Thus the above theorem implies that in our setting,
the probability that a standard Brownian motion will first hit any segment $S$ of the boundary
of $\mathbb{D}$ is the same as the probability of the standard
Brownian motion first hitting
its image under $F_{\theta}$, i.e.~$F_{\theta}(S)$ in $\bd \rhombus$.

To complete the picture, the Schwarz-Christoffel formula gives a conformal
mapping from the unit disc $\mathbb{D}$ to $\rhombus$ that we utilize.

\begin{theorem}\cite[Theorem 5, Section 2.2.2]{A96}
  \label{thm:conformal-a96}
  Define the function $F_{\theta}(\omega)$ by
  \[ F_\theta(\omega) = \int_{s=0}^\omega f_\theta(s) ds
  = \int_{s=0}^{\omega} (1-s)^{-(1-\theta/\pi)}
  (1+s)^{-(1-\theta/\pi)} (s-i)^{-\theta/\pi} (s+i)^{-\theta/\pi}
  ds.\]
  Then, for some real number $c > 0$, $c F_\theta(\omega)$ is a
  conformal map from the unit-disk $\mathbb{D}$ to the rhombus $\rhombus$.
\end{theorem}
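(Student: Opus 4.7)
The plan is to invoke the general Schwarz–Christoffel formula and specialize it to our symmetric rhombus. Recall that the general formula asserts that for a simple polygon with vertices $w_1, \ldots, w_n$ and interior angles $\alpha_k \pi$ at $w_k$ (with $\sum_k \alpha_k = n-2$), any conformal map $F:\mathbb{D} \to \mathrm{Polygon}$ has the form
\[
F(\omega) = A + C \int_0^\omega \prod_{k=1}^n (1 - s/\zeta_k)^{\alpha_k - 1} \, ds,
\]
where $\zeta_1, \ldots, \zeta_n$ are the preimages of the vertices on $\partial\mathbb{D}$. Our job reduces to identifying these preimages and angle parameters correctly.

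First I would fix the angle data. By Lemma~\ref{lem:basic-transform}, the rhombus $\rhombus$ has interior angle $\theta$ at $b_1, b_3$ and $\pi - \theta$ at $b_2, b_4$; thus $\alpha_1 = \alpha_3 = \theta/\pi$ and $\alpha_2 = \alpha_4 = 1 - \theta/\pi$, and indeed $\sum \alpha_k = 2 = n-2$ as required. Next I would locate the preimages $\zeta_k$. Here the key observation is that $\rhombus$ is invariant under reflection across both coordinate axes (and hence under the group generated by $z \mapsto -z$ and $z \mapsto \bar z$), while $\mathbb{D}$ enjoys the same symmetries. By the uniqueness part of the Riemann mapping theorem (after normalizing so that $F$ maps $0$ to $0$ and fixes the real axis), this forces the preimage set $\{\zeta_k\}$ to be invariant under the same symmetries. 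Combined with the cyclic ordering of vertices, this pins down $\zeta_1 = 1$, $\zeta_2 = i$, $\zeta_3 = -1$, $\zeta_4 = -i$.

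Substituting into Schwarz–Christoffel gives
\[
F(\omega) = C \int_0^\omega (1 - s)^{\theta/\pi - 1}(1 + s)^{\theta/\pi - 1}(1 - s/i)^{-\theta/\pi}(1 + s/i)^{-\theta/\pi}\, ds.
\]
Rewriting $(1 - s/i) = i^{-1}(i - s) = -i^{-1}(s - i)$ and similarly $(1 + s/i) = -i(s + i)/(-1) = i^{-1}(s+i) \cdot(-1)$ up to a unit complex factor, and absorbing all unit constants into the overall multiplicative constant, one obtains
\[
F_\theta(\omega) = \int_0^\omega (1-s)^{-(1-\theta/\pi)}(1+s)^{-(1-\theta/\pi)}(s-i)^{-\theta/\pi}(s+i)^{-\theta/\pi}\, ds,
\]
so that $c F_\theta$ is conformal from $\mathbb{D}$ to $\rhombus$ for an appropriate real $c > 0$, chosen so that $F_\theta(1)$ lands at $b_1$ on the positive real axis. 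The scale $c$ is determined uniquely by the size of $\rhombus$; we don't need its explicit value since the downstream argument (Lemma~\ref{lem:rhombus-symmetry}) only uses ratios of distances along $\partial\rhombus$, which are invariant under scaling.

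The main obstacle, and the step I would have to be careful with, is justifying the identification of preimages: a priori the Schwarz–Christoffel formula allows the four $\zeta_k$'s to be any four points on $\partial\mathbb{D}$ in cyclic order, and specifying them is what determines the conformal modulus of the target quadrilateral. It is only because of the fourfold dihedral symmetry of $\rhombus$ (shared with $\mathbb{D}$) and the normalization $F(0) = 0$ that the preimages must be the fourth roots of unity. Once this is established, the remainder is a mechanical simplification of the product in the integrand, and the convergence of the integral at the four boundary singularities follows from the fact that each exponent exceeds $-1$ (since $0 < \theta < \pi$ gives $\theta/\pi, 1 - \theta/\pi \in (0,1)$).
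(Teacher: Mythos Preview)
The paper does not supply its own proof of this theorem: it is quoted directly from Ahlfors~\cite{A96} as a black-box application of the Schwarz--Christoffel formula. Your derivation is the standard one and is correct; in particular, the symmetry argument you flag as the ``main obstacle'' is exactly the right way to pin down the prevertices at the fourth roots of unity, and the remaining algebraic simplification (absorbing unit complex factors into the constant, checking that $c$ comes out real and positive because $f_\theta$ is real-valued on $(0,1)$ with the principal branch) is routine.
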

The conformal map has some important properties which will aid us in
calculating the probabilities. We collect them in the following lemma,
which follows from standard properties of the Schwarz-Christoffel
integral~\cite{A96}, and is easily verified.
\begin{lemma}
\label{lem:conformal-properties}
The conformal map $c F_{\theta}(\omega)$  has the following properties:
\begin{enumerate}
    \item The four points located at $\set{1,i,-1,-i}$ map to the four
      vertices $\{b_1,\dots,b_4\}$ of the rhombus $\rhombus$, respectively.
    \item The origin maps to the origin.
    \item The boundary of the unit-disk $\mathbb{D}$ maps to the boundary of $\rhombus$. Furthermore,
        the points in the arc from $1$ to $i$ map to the segment $[b_1,b_2]$.
\end{enumerate}
\end{lemma}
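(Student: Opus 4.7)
The plan is to verify the three listed properties directly from the definition of $F_\theta$, appealing to the general Schwarz-Christoffel theorem (Theorem~\ref{thm:conformal-a96}) for the global mapping structure, and using the explicit symmetries of the integrand $f_\theta(s) = (1-s)^{-(1-\theta/\pi)}(1+s)^{-(1-\theta/\pi)}(s-i)^{-\theta/\pi}(s+i)^{-\theta/\pi}$ to pin down which prevertex corresponds to which rhombus vertex.

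Property 2 is immediate from the definition: $F_\theta(0) = \int_0^0 f_\theta(s)\, ds = 0$, so the origin of $\mathbb{D}$ goes to the origin of $\rhombus$. For property 1, the Schwarz-Christoffel theorem guarantees that the prevertices $\{1,i,-1,-i\}$ on $\partial\mathbb{D}$ are mapped to the four vertices of a polygon whose interior angles equal $\pi$ times one minus the corresponding exponents in $f_\theta$. Reading off the exponents, the exterior angles at $1$ and $-1$ are $1-\theta/\pi$ (giving interior angle $\theta$) and at $i$ and $-i$ are $\theta/\pi$ (giving interior angle $\pi-\theta$). Since opposite angles are equal and the four angles sum to $2\pi$, the image is a parallelogram, and by the Schwarz-Christoffel construction the side-lengths between consecutive prevertices (computed as $\int_{1}^{i} f_\theta$ etc.) are all equal by the symmetries below, so the image is a rhombus with the claimed internal angles.

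To identify $b_1,\dots,b_4$ explicitly, I would exploit two symmetries of the integrand. First, the pair of factors $(s-i)^{-\theta/\pi}(s+i)^{-\theta/\pi}$ and the pair $(1-s)^{-(1-\theta/\pi)}(1+s)^{-(1-\theta/\pi)}$ are each invariant under complex conjugation, so $\overline{f_\theta(\bar s)} = f_\theta(s)$, which upon integration yields $\overline{F_\theta(\bar\omega)} = F_\theta(\omega)$. Therefore $F_\theta$ maps the real segment $(-1,1)$ into the real axis, and in particular $F_\theta(1) \in \Re$; choosing the scaling $c>0$ places the image of $1$ on the positive real axis, so $cF_\theta(1) = b_1$. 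Second, the simultaneous change $s \mapsto -s$ swaps the factors $(1-s) \leftrightarrow (1+s)$ and $(s-i)\leftrightarrow -(s+i)$, which when unpacked gives $f_\theta(-s) = e^{i\phi} f_\theta(s)$ for a phase that integrates to $F_\theta(-\omega) = -F_\theta(\omega)$ after fixing branches consistently. Hence $F_\theta(-1) = -b_1/c$, placing $-1 \mapsto b_3$ on the negative real axis. Combining with $\overline{F_\theta(\bar\omega)} = F_\theta(\omega)$ and the orientation-preserving property of a conformal map, $i \mapsto b_2$ on the positive imaginary axis and $-i \mapsto b_4$ on the negative imaginary axis, matching the counterclockwise ordering in Lemma~\ref{lem:basic-transform}.

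For property 3, since $F_\theta$ is a conformal map of the Jordan domain $\mathbb{D}$ onto the Jordan domain $\rhombus$, Carath\'eodory's extension theorem gives a homeomorphism $\partial \mathbb{D} \to \partial \rhombus$. Because $F_\theta$ is orientation-preserving and sends the ordered prevertices $1, i, -1, -i$ (counterclockwise along $\partial\mathbb{D}$) to $b_1, b_2, b_3, b_4$ (counterclockwise along $\partial\rhombus$), the arc of $\partial\mathbb{D}$ from $1$ to $i$ must map onto the side $[b_1, b_2]$. The main thing that needs care is the consistent choice of branches for the fractional powers in $f_\theta$ so that the two symmetry identities above hold globally on $\overline{\mathbb{D}}$; the standard choice is branch cuts on the rays $\{t\cdot w : t \ge 1, w\in\{1,-1,i,-i\}\}$ lying outside $\mathbb{D}$, after which the calculations are routine.
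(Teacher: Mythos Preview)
Your proposal is correct. The paper does not actually prove this lemma; it simply states that it ``follows from standard properties of the Schwarz-Christoffel integral~\cite{A96}, and is easily verified.'' What you have written is precisely that standard verification: reading the interior angles off the exponents in $f_\theta$, using the conjugation symmetry $\overline{f_\theta(\bar s)}=f_\theta(s)$ together with positivity of $f_\theta$ on $(0,1)$ to place $cF_\theta(1)$ on the positive real axis, and then invoking orientation preservation and Carath\'eodory extension for the boundary correspondence.

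One small remark: the $s\mapsto -s$ step is the shakiest part of your write-up, since the phase you pick up depends on branch choices and does not obviously collapse to $F_\theta(-\omega)=-F_\theta(\omega)$. Fortunately you do not need it. Once $1\mapsto b_1$, the conjugation symmetry already forces $F_\theta(i)$ and $F_\theta(-i)$ to be complex conjugates of each other, hence $\{F_\theta(i),F_\theta(-i)\}=\{b_2,b_4\}$; the counterclockwise ordering (which you invoke anyway) then pins down $i\mapsto b_2$, $-1\mapsto b_3$, $-i\mapsto b_4$ without any appeal to the $s\mapsto -s$ identity.
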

Define the function $r:[0,\pi/2]\rightarrow \mathbb{R}$ as
     $r(\phi) \eqdef \abs{F_{\theta}(e^{i \phi}) - F_{\theta}(1)}$.
\begin{lemma}\label{lem:prob-cut}
    The probability that vertices $\{i,j\}$ are separated, given that
    the angle between $\bw_i$ and $\bw_j$ is $\theta$, is
    \[ \frac{2}{\pi} \int_{0}^{\pi/2} 1 - \frac{r(\phi)}{r(\pi/2)} d\phi \]
\end{lemma}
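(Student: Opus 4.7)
The plan is to combine Lemma~\ref{lem:rhombus-symmetry}, which expresses the separation probability as an integral of $1 - \|p - b_1\|/\|b_2-b_1\|$ against the harmonic measure $\mu_{\partial \rhombus}$ over the side $[b_1,b_2]$, with the conformal invariance of harmonic measure from Theorem~\ref{thm:conformal-invariant} applied to the Schwarz--Christoffel map $F_\theta$ of Theorem~\ref{thm:conformal-a96}. The key point is that the harmonic measure on $\partial \mathbb{D}$ with respect to $0$ is just the uniform probability measure on the unit circle (by rotational invariance of planar Brownian motion), so once we push everything back to the disk, the measure becomes trivial and the integrand takes the simple form appearing in the lemma.

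Concretely, first I would invoke Lemma~\ref{lem:rhombus-symmetry} to reduce the problem to evaluating
\[
4 \int_{b_1}^{b_2} \Bigl(1 - \tfrac{\|p-b_1\|}{\|b_2-b_1\|}\Bigr)\, d\mu_{\partial \rhombus}(p).
\]
Next, using Theorem~\ref{thm:conformal-invariant}, I would change variables via $p = F_\theta(e^{i\phi})$. By Lemma~\ref{lem:conformal-properties}, the arc $\{e^{i\phi}: \phi \in [0,\pi/2]\}$ of $\partial\mathbb{D}$ maps bijectively onto the segment $[b_1,b_2]$, with $F_\theta(1)=b_1$ and $F_\theta(i)=b_2$. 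Since $\mu_{\partial \mathbb{D}}$ is uniform, the pullback of $d\mu_{\partial \rhombus}$ along $F_\theta$ on this arc is simply $d\phi/(2\pi)$.

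Substituting and using the definitions $r(\phi)=|F_\theta(e^{i\phi})-F_\theta(1)|=\|p-b_1\|$ and $r(\pi/2)=|F_\theta(i)-F_\theta(1)|=\|b_2-b_1\|$, the integral becomes
\[
4 \int_0^{\pi/2} \Bigl(1 - \tfrac{r(\phi)}{r(\pi/2)}\Bigr)\,\frac{d\phi}{2\pi} = \frac{2}{\pi}\int_0^{\pi/2}\Bigl(1-\tfrac{r(\phi)}{r(\pi/2)}\Bigr)\,d\phi,
\]
which is the claimed expression. There is essentially no hard step here: the proof is a bookkeeping exercise, and the only thing to be careful about is tracking the factor $1/(2\pi)$ from the uniform harmonic measure on $\partial\mathbb{D}$ and confirming via Lemma~\ref{lem:conformal-properties} that the quarter-arc really does land on the full side $[b_1,b_2]$ with the correct orientation, so that $\phi$ parameterizes $[b_1,b_2]$ monotonically. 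The substantive analytic content has already been done in Theorems~\ref{thm:conformal-invariant}~and~\ref{thm:conformal-a96}.
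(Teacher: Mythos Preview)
Your proposal is correct and follows essentially the same approach as the paper's proof: invoke Lemma~\ref{lem:rhombus-symmetry}, push the harmonic measure back to the disk via conformal invariance (Theorem~\ref{thm:conformal-invariant}), use that $\mu_{\partial\mathbb{D}}$ is uniform so $d\mu_{\partial\mathbb{D}}(e^{i\phi})=d\phi/(2\pi)$, and identify the integrand with $r(\phi)/r(\pi/2)$ via Lemma~\ref{lem:conformal-properties}. The only tiny wrinkle is that the conformal map to $\rhombus$ is $cF_\theta$ rather than $F_\theta$ itself, so strictly $\|p-b_1\|=c\,r(\phi)$ and $\|b_2-b_1\|=c\,r(\pi/2)$; the paper tracks this constant explicitly and notes it cancels in the ratio, which is the conclusion you reach as well.
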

\begin{proof}

  Rewriting the expression in~\Cref{lem:rhombus-symmetry} in complex
  number notation, we have
  \[
  \Pr[\{i,j\} \text{ separated}] =
  4 \cdot \int_{b_1}^{b_2}1- \frac{ \abs{z-b_1}}{ \abs{b_2- b_1}}
  d\mu_{\partial \rhombus}(z)
  =
  4 \cdot \int_{b_1}^{b_2} 1- \frac{ \abs{z-cF_{\theta}(1)}}{
  c\abs{F_\theta(i)- F_{\theta}(1)}}  d\mu_{\partial \mathbb{S}}(z).
  \]

  Since the conformal map $F_{\theta}$ preserves the harmonic measure
  between the rhombus $\mathbb{S}$ and the unit-disk $\mathbb{D}$ (see~Theorem~\ref{thm:conformal-invariant})
  and by \Cref{lem:conformal-properties}, the segment from $b_1$ to $b_2$ is
  the image of the arc from $1$ to $i$ under $cF_\theta$, we can rewrite
  the above as 
  \[ = 4 \cdot \int_{0}^{\pi/2} 1- \frac{ \abs{cF_{\theta}(e^{i\phi})-cF_{\theta}(1)}}{
  c\abs{F_\theta(i)- F_{\theta}(1)}}  d\mu_{\partial \mathbb{D}}(e^{i\phi}). \] 
  
  The harmonic measure $\mu_{\bd \mathbb{D}}$ on the unit-disk is
  uniform due to the rotational symmetry of Brownian
  motion. 
   \[ = 4 \cdot \int_{0}^{\pi/2} 1- \frac{ \abs{cF_{\theta}(e^{i\phi})-cF_{\theta}(1)}}{
   c\abs{F_\theta(i)- F_{\theta}(1)}} \frac{ d \phi}{2\pi} . \]

  Simplifying the above, we see that the right hand side above equals
  \[
  \frac{2}{\pi} \cdot \int_{0}^{\pi/2}
  1-\frac{\abs{F_\theta(e^{i\phi})-F(1)}}{\abs{F_\theta(i)- F(1)}} d\phi
  = \frac{2}{\pi} \cdot
  \int_{0}^{\pi/2} 1- \frac{r(\phi)}{r(\pi/2)}  d\phi.
  \]
  This completes the proof.
\end{proof}

To calculate the approximation ratio exactly, we will make use of the theory
of special functions. While these calculations are technical, they are not trivial. To
aid the reader, we give a brief primer in Appendix~\ref{sec:primer} and refer them to
the work of \citet{BW10,andrews1999special} for a more thorough introduction.

The proof of~\Cref{thm:cut-prob}, will follow from the following key claims
whose proofs appear in the appendix. Letting $a = \theta/\pi$ and $b=1-a$, we have

\begin{restatable}{claim}{rphiclm}
\label{l:rphi}
\[ r(\phi) = \frac{1}{4} \beta_{\sin^2 \phi} (a/2,b/2)\] when $\phi \in [0,\pi/2]$.
\end{restatable}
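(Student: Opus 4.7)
The plan is to evaluate $F_\theta(e^{i\phi}) - F_\theta(1)$ by parametrizing the arc from $1$ to $e^{i\phi}$ in the unit disk, and then recognize the resulting real integral as an incomplete beta integral via a trigonometric substitution. Since $F_\theta(\omega)$ is defined as a contour integral from $0$ to $\omega$ of a holomorphic integrand, deformation of the path lets us compute $F_\theta(e^{i\phi}) - F_\theta(1)$ as the integral along the arc $s = e^{i\psi}$, $\psi \in [0,\phi]$. First I would rewrite the integrand as $(1-s^2)^{-b}(1+s^2)^{-a}$ (grouping the two factors with exponent $-b=-(1-\theta/\pi)$ and the two with exponent $-a=-\theta/\pi$) and plug in $s=e^{i\psi}$. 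The two key polar identities
\[
1-s^2 = -2ie^{i\psi}\sin\psi, \qquad 1+s^2 = 2e^{i\psi}\cos\psi,
\]
together with $ds = ie^{i\psi}\,d\psi$ and the crucial relation $a+b=1$, make the factors $e^{-i(a+b)\psi}$ and $e^{i\psi}$ cancel exactly. After collecting the constant prefactors one obtains
\[
F_\theta(e^{i\phi}) - F_\theta(1) \;=\; \tfrac{1}{2}\, e^{i(1+b)\pi/2} \int_0^\phi (\sin\psi)^{-b}(\cos\psi)^{-a}\, d\psi.
\]

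Next I would apply the substitution $t = \sin^2\psi$, so $d\psi = \tfrac{1}{2}t^{-1/2}(1-t)^{-1/2}\,dt$, $\sin\psi=\sqrt{t}$, $\cos\psi=\sqrt{1-t}$. Using $a+b=1$ one checks algebraically that the combined exponents $-\tfrac{b+1}{2}$ and $-\tfrac{a+1}{2}$ coincide with $\tfrac{a}{2}-1$ and $\tfrac{b}{2}-1$ respectively, so
\[
(\sin\psi)^{-b}(\cos\psi)^{-a}\, d\psi \;=\; \tfrac{1}{2}\, t^{a/2 - 1}(1-t)^{b/2 - 1}\, dt,
\]
and the $\psi$-integral from $0$ to $\phi$ becomes $\tfrac{1}{2}\beta_{\sin^2\phi}(a/2, b/2)$. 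Taking absolute values of the displayed formula for $F_\theta(e^{i\phi})-F_\theta(1)$ then finishes the job: the prefactor $\tfrac{1}{2} e^{i(1+b)\pi/2}$ has modulus $1/2$, and the incomplete beta integral is a positive real number (since $a,b\in(0,1)$ means the integrand is positive on $(0,\sin^2\phi)$). Multiplying $\tfrac{1}{2}\cdot\tfrac{1}{2}$ yields the desired factor of $\tfrac{1}{4}$.

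The main obstacle is branch-cut bookkeeping. The Schwarz-Christoffel integrand of Theorem~\ref{thm:conformal-a96} uses four separately-cut factors, and one needs to verify that grouping them as $(1-s^2)^{-b}(1+s^2)^{-a}$ and then rewriting these as $\bigl(2e^{i(\psi-\pi/2)}\sin\psi\bigr)^{-b}$ and $\bigl(2e^{i\psi}\cos\psi\bigr)^{-a}$ is consistent with the branch choice normalized by $F_\theta(0)=0$ (equivalently, by requiring the integrand to equal $1$ at $s=0$). Once the correct principal branches are pinned down on the upper arc $\{e^{i\psi}:\psi\in[0,\pi/2]\}$, the phase that appears is the single factor $e^{i(1+b)\pi/2}$ coming from $(-i)^{-b}\cdot i$, and the rest of the derivation is routine calculation. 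Nothing in the argument depends on the normalization constant $c$ from Lemma~\ref{lem:conformal-properties}, which is fine because $r(\phi)$ is defined using the unnormalized map $F_\theta$ and only appears in the ratio $r(\phi)/r(\pi/2)$ of Lemma~\ref{lem:prob-cut}.
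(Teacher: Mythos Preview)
Your proposal is correct. The route differs from the paper's in one meaningful way. The paper first invokes the geometry of Lemma~\ref{lem:conformal-properties}: since the arc $\{e^{i\psi}:0\le\psi\le\phi\}$ is sent to a straight side of the rhombus, the displacement $r(\phi)$ equals the arc length of the image, so $r(\phi)=\int_0^\phi|f_\theta(e^{i\psi})|\,d\psi$. Taking moduli factor-by-factor then kills every phase at once and sidesteps all branch bookkeeping; one is left immediately with $\tfrac12\int_0^\phi(\sin\psi)^{a-1}(\cos\psi)^{b-1}\,d\psi$, which the paper recognizes as $\tfrac14\beta_{\sin^2\phi}(a/2,b/2)$ via Lemma~\ref{l:incomp_trig}. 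You instead evaluate the full complex contour integral, obtain the constant phase $e^{i(1+b)\pi/2}$, and take modulus at the end. This is self-contained (it does not use the straight-segment property; indeed, the constant-phase conclusion \emph{re-proves} it), at the cost of the branch verification you correctly flag as the main obstacle. After that first step, both arguments land on the same trigonometric integral and the same $t=\sin^2\psi$ substitution, so the remainder is identical.
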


\begin{restatable}{claim}{rphiintclm}
    \label{lem:rphiint}
    \[ 4 \cdot \int_0^{\pi/2} r(\phi) d\phi
            = \frac{\beta(a/2+1/2,1/2)}{2a}  \cdot
    \pFq{3}{2}{\frac{1+a}{2},\frac{1+a}{2},\frac{a}{2}}{\frac{a}{2},\frac{a}{2}+1}{1} \]
\end{restatable}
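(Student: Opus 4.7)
The plan is to reduce the integral on the left to a power series and recognize it as a ${}_3F_2(1)$ hypergeometric. By Claim~\ref{l:rphi} we have $r(\phi) = \tfrac{1}{4}\beta_{\sin^2\phi}(a/2,b/2)$, so the claim amounts to evaluating
\[
\int_0^{\pi/2}\beta_{\sin^2\phi}(a/2,b/2)\,d\phi.
\]

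The key step is to expand the integrand as a power series in $\sin\phi$. I would use the standard series representation of the incomplete beta function,
\[
\beta_x(p,q)=\frac{x^p}{p}\,{}_2F_1(p,\,1-q;\,p+1;\,x),
\]
specialized to $p=a/2$ and $q=b/2=(1-a)/2$ (so that $1-q=(1+a)/2$) and evaluated at $x=\sin^2\phi$. This produces the series
\[
\beta_{\sin^2\phi}(a/2,b/2)=\frac{2}{a}\sum_{n=0}^{\infty}\frac{(a/2)_n\,((1+a)/2)_n}{(a/2+1)_n\,n!}(\sin\phi)^{a+2n}.
\]

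I would then swap sum and integral (allowed by Tonelli's theorem, as all terms are non-negative) and evaluate the resulting Wallis integrals
\[
\int_0^{\pi/2}(\sin\phi)^{a+2n}\,d\phi=\tfrac{1}{2}\beta\!\bigl(n+(a+1)/2,\,1/2\bigr).
\]
Using the Pochhammer identity $\Gamma(p+n)=\Gamma(p)(p)_n$ factors a common $\beta((a+1)/2,1/2)$ out of every term and leaves behind the shift $((a+1)/2)_n/(a/2+1)_n$. Combining this shift with the coefficient already present from the expansion of $\beta_x$ produces a ${}_3F_2(1)$ with numerator parameters $(1+a)/2, (1+a)/2, a/2$ and the appropriate denominator parameters, multiplied by the prefactor involving $\beta((a+1)/2,1/2)$ and $a$ that arises from the $2/a$ in the $\beta_x$ series and the $1/2$ from each Wallis integral.

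The main technical obstacle is the careful bookkeeping of Pochhammer ratios and numerical prefactors required to land on exactly the form stated in the claim. A secondary point is the convergence of the limiting ${}_3F_2(1)$: by the standard Gauss criterion, the series at $z=1$ converges when the sum of denominator parameters exceeds the sum of numerator parameters, and here this excess equals $1-a/2$, which is positive for all $a=\theta/\pi\in(0,1)$, so the series manipulations and term-by-term evaluations are rigorous throughout the range of interest.
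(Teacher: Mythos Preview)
Your proposal is correct and follows essentially the same approach as the paper: expand $\beta_{\sin^2\phi}(a/2,b/2)$ via the ${}_2F_1$ series representation, integrate term by term using the Wallis integrals $\int_0^{\pi/2}(\sin\phi)^{a+2n}\,d\phi=\tfrac12\beta(n+(a+1)/2,1/2)$, and reassemble the resulting Pochhammer ratios into a ${}_3F_2(1)$. Your added justifications (Tonelli for the interchange, the Gauss criterion for convergence at $z=1$) are not spelled out in the paper's proof and are welcome additions.
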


\subsection{Asymptotic Calculation for \texorpdfstring{$\theta$}{theta} close to \texorpdfstring{$\pi$}{pi}.}

We consider the case when the angle $\theta=(1-\epsilon)\cdot \pi$ as $\epsilon \to 0$.
The hyperplane-rounding algorithm separates such an edge by $\theta/\pi$, and
hence has a separation probability of $1-\epsilon$.  We show a similar
asymptotic behaviour for the Brownian rounding algorithm, albeit with slightly worse constants. We defer the proof to the appendix.

\begin{restatable}{theorem}{piepsthm}
\label{th:pi}
    Given an edge $\{i,j\}$ with
    $cos^{-1}(\bw_i^T \bw_j)= \theta = (1-\epsilon) \pi$,
    the Sticky Brownian Motion rounding will cut the edge
    with probability at least
    $1-\left(\frac{4}{\pi} \epsilon +O(\epsilon^2)\right)$.
\end{restatable}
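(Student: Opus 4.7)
The plan is to evaluate the separation probability asymptotically as $\epsilon \to 0^+$ (with $a = \theta/\pi = 1-\epsilon$ and $b = \epsilon$) by working directly from the integral representation in Lemma~\ref{lem:prob-cut} and Claim~\ref{l:rphi}, rather than from the closed-form ${}_3F_2$ expression in Theorem~\ref{thm:cut-prob} (whose underlying series is divergent in this regime). Using $r(\phi)/r(\pi/2) = I_{\sin^2\phi}(a/2,b/2)$, where $I$ is the regularized incomplete beta function, and swapping the order of integration by Fubini, I would derive
\begin{equation*}
1 - \Pr[\{i,j\}\text{ separated}] = \frac{2}{\pi\, B(a/2, b/2)}\int_0^1 t^{a/2-1}(1-t)^{b/2-1}\arccos\sqrt{t}\, dt,
\end{equation*}
recognizing that $\arccos\sqrt{t}$ is exactly the Lebesgue measure of $\{\phi \in [0,\pi/2] : \sin^2\phi \le t\}$.

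Next I would handle the two factors separately. Using the standard small-argument expansions $\Gamma(\epsilon/2) = 2/\epsilon - \gamma + O(\epsilon)$ and $\Gamma((1-\epsilon)/2) = \sqrt{\pi}+O(\epsilon)$, the prefactor satisfies $B(a/2, b/2) = \Gamma((1-\epsilon)/2)\Gamma(\epsilon/2)/\sqrt{\pi} = 2/\epsilon + O(1)$, so $1/B(a/2, b/2) = \epsilon/2 + O(\epsilon^2)$. The remaining integral is well-behaved in the limit: the pointwise limit integrand $t^{-1/2}(1-t)^{-1}\arccos\sqrt{t}$ is integrable on $(0,1)$ because the $t^{-1/2}$ singularity is integrable at the left endpoint while at the right endpoint the $(1-t)^{-1}$ singularity is killed by $\arccos\sqrt{t} = \Theta(\sqrt{1-t})$. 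Applying the substitution $t = \cos^2\phi$ converts the limit integral into a classical Catalan-constant evaluation:
\begin{equation*}
\int_0^1 t^{-1/2}(1-t)^{-1}\arccos\sqrt{t}\, dt = 2\int_0^{\pi/2}\frac{\phi}{\sin\phi}\, d\phi = 4G.
\end{equation*}

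Combining the two asymptotics yields $1 - \Pr[\text{separated}] = (2/\pi)(\epsilon/2)(4G) + O(\epsilon^2) = (4G/\pi)\epsilon + O(\epsilon^2)$, hence $\Pr[\text{separated}] = 1 - (4G/\pi)\epsilon + O(\epsilon^2)$. Since Catalan's constant satisfies $G = 0.9159\ldots < 1$, this immediately implies the theorem's (slightly looser) bound $\Pr[\text{separated}] \ge 1 - (4\epsilon/\pi + O(\epsilon^2))$. The main technical step requiring care is upgrading pointwise convergence to the $O(\epsilon)$ quantitative statement $\int_0^1 t^{a/2-1}(1-t)^{b/2-1}\arccos\sqrt{t}\, dt = 4G + O(\epsilon)$; this I would obtain by dominated convergence applied to the first-order Taylor expansion of $t^{-\epsilon/2}(1-t)^{\epsilon/2} = 1 + (\epsilon/2)(\ln(1-t)-\ln t) + O(\epsilon^2)$, using the uniform envelope $\arccos\sqrt{t} \le (\pi/2)\sqrt{1-t}$ (easily verified by showing the difference has a single interior maximum) to guarantee that the residual logarithmic moments against $t^{-1/2}(1-t)^{-1}\arccos\sqrt{t}$ remain integrable.
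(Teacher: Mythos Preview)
Your approach is correct and runs closely parallel to the paper's. Both proofs start from Lemma~\ref{lem:prob-cut} and Claim~\ref{l:rphi}, and both transform the double integral into a single integral of the form $\int_0^{\pi/2}\phi\,(\sin\phi)^{b-1}(\cos\phi)^{a-1}\,d\phi$: the paper reaches this via integration by parts (Claim~\ref{l:ga-alt-form}), while your Fubini swap followed by the substitution $t=\cos^2\phi$ lands on exactly the same expression. Both then compute $B(a/2,b/2)=2/\epsilon+O(1)$ identically. The one substantive difference is in the treatment of the remaining integral at $\epsilon\to 0$: the paper bounds $\int_0^{\pi/2}\frac{\phi}{\sin\phi}(\tan\phi)^\epsilon\,d\phi$ by $2+O(\epsilon\log(1/\epsilon))$ via the crude inequality $\phi/\sin\phi\le 2$, whereas you evaluate the limit exactly as $2G$ (twice Catalan's constant) and thereby obtain the sharper leading coefficient $4G/\pi\approx 1.166$ in place of $4/\pi\approx 1.273$. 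Since $G<1$, your bound immediately implies the theorem's stated inequality.

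Two small points. First, a slip in your Fubini justification: $\arccos\sqrt{t}$ is the measure of $\{\phi\in[0,\pi/2]:\sin^2\phi\ge t\}$, not $\le t$; the formula you wrote is nevertheless correct. Second, the $O(\epsilon)$ control on the integral is a bit thin as sketched: the Taylor remainder in $t^{-\epsilon/2}(1-t)^{\epsilon/2}=1+\tfrac{\epsilon}{2}(\ln(1-t)-\ln t)+\cdots$ is not uniform in $t$ near the endpoints, so you should either split off a neighborhood of $t\in\{0,1\}$ (as the paper does near $\phi=\pi/2$) or use the bound $|e^x-1|\le |x|e^{|x|}$ together with your envelope $\arccos\sqrt t\le\tfrac{\pi}{2}\sqrt{1-t}$ to dominate globally; either route works and yields the $O(\epsilon)$ claim (in fact, slightly better than the paper's $O(\epsilon\log(1/\epsilon))$).
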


\section{Brownian Rounding for \MtwoSAT via Partial Differential Equations}\label{sec:pde}

In this section we use \MtwoSAT as a case study for extending the
Sticky Brownian Motion rounding method to other constraint satisfaction problems besides \MC.
In the \MtwoSAT problem we are given $n$ variables $z_1,\ldots,z_n$ and $m$ clauses $C_1,\ldots,C_m$, where the $j$\textsuperscript{th} clause is of the form $y_{j_1}\vee y_{j_2}$ ($y_j$ is a literal of $z_j$, {\em i.e.}, $z_j$ or $\overline{z}_j$).
The goal is to assign to each variable $z_i$ a value of true or false
so as to maximize the number of satisfied clauses.

\subsection{Semi-definite Relaxation and Brownian Rounding Algorithm}
The standard SDP relaxation used for \MtwoSAT is the following:
\begin{align}
\max ~~~& \sum _{j=1}^m \left( 1-\bv _{j_1}\cdot \bv _{j_2}\right)& \nonumber \\
s.t. ~~~& \bv _0 \cdot \bv _0 = 1 & \label{2SAT:Const1}\\
& \bv _0 \cdot \bv _i = \bv _i \cdot  \bv _i & \forall i=-n,\ldots,n \label{2SAT:Const2}\\
& \bv _i \cdot  \bv _{-i} = 0 & \forall i=1,\ldots,n \label{2SAT:Const3}\\
& \bv _0 \cdot (\bv _i +\bv _{-i}) = 1 & \forall i=1,\ldots,n \label{2SAT:Const4}\\
& 1 \geq \bv _0 \cdot \bv _i + \bv _j \cdot \bv _0 - \bv _i \cdot \bv _j & \forall i,j=-n,\ldots,n \label{2SAT:Const5}\\
& \bv _i \cdot \bv _0 \geq \bv _i \cdot \bv _j & \forall i,j=-n,\ldots,n \label{2SAT:Const6}\\
& \bv _i \cdot \bv_j \geq 0 & \forall i,j=-n.\ldots,n \label{2SAT:Const7}
\end{align}
In the above $\bv _0$ is a unit vector that denotes the {\em false} assignment (constraint \ref{2SAT:Const1}), whereas a zero vector denotes the {\em true} assignment.
We use the standard notation that $\bv_i$ denotes the literal $z_i$ and $\bv _{-i}$ denotes the literal $\overline{z}_i$.
Therefore, $\bv _i\cdot \bv _{-i} =0$ for every $i=1,\ldots.n$ (constraints \ref{2SAT:Const3} and \ref{2SAT:Const4}) since $z_i$ needs to be either true or false.
The remainder of the constraints (constraints \ref{2SAT:Const5},
\ref{2SAT:Const6} and \ref{2SAT:Const7}) are equivalent to the $\ell
_2^2$ triangle inequalities over all triples of vectors that include $\bv _0$.

When trying to generalize the Brownian rounding algorithm for \MC
presented in Section \ref{sec:conformal} to \MtwoSAT, there is a
problem: unlike \MC the \MtwoSAT problem is {\em not} symmetric.
Specifically, for \MC both $S$ and $\overline{S}$ are equivalent solutions having the same objective value.
However, for \MtwoSAT an assignment to the variables $z_1=\alpha_1,\ldots,z_n=\alpha_n$ is {\em not} equivalent to the assignment $z_1=\overline{\alpha}_1,\ldots,z_n=\overline{\alpha}_n$ (here $\alpha_i\in \{ 0,1\}$ and $\overline{\alpha}_i=1\oplus \alpha _i$).
For example, if $\bv _i\cdot \bv _0 = 1$ then we would like the
Brownian rounding algorithm to always assign $z_i$ to false.
The Brownian rounding for \MC cannot handle such a requirement.
In order to tackle the above problem we incorporate $\bv_0$ into both the starting point of the Brownian motion and the covariance matrix.

Let us now formally define the Brownian rounding algorithm for \MtwoSAT.
For simplicity of presentation denote for every $i=1,\ldots,n$ by
$x_i$ the {\em marginal value} of $z_i$, formally: $ x_i \eqdef \bv _i \cdot \bv _0$.
Additionally, let $\bw _i$ be the (unique) unit vector in the direction of
the projection of $\bv_i$ to the subspace orthogonal  to $\bv _0$, {\em i.e.}, $\bw _i$ satisfies $\bv _i = x_i \bv _0 + \sqrt{x_i-x_i^2}\bw _i$.\footnote{It is easy to see that $x_{-i}=1-x_i$ and $\bw _{-i}=-\bw _i$ for every $i=1,\ldots,n$.}
Similarly to \MC, our Sticky Brownian Motion rounding algorithm performs a random walk in $\mathbb{R}^n$, where the $i$\textsuperscript{th} coordinate corresponds to the variable $z_i$.
For simplicity of presentation, the random walk is defined in $[0,1]^n$ as opposed to $[\pm 1]^n$, where $1$ denotes false and $0$ denotes true.\footnote{We note that the Brownian rounding algorithm for \MtwoSAT can be equivalently defined in $[-1,1]^n$, however, this will incur some overhead in the notations which we would like to avoid.}
Unlike \MC, the starting point $\bX_0$ is not the center of the
cube. Instead, we use the marginals, and set $(\bX_0)_i \eqdef x_i$.
The covariance matrix $\bW$ is defined by $\bW _{i,j}\eqdef \bw _i
\cdot \bw _j$ for every $i,j=1,\ldots,n$, and similarly to \MC, let
$\bW ^{\nicefrac[]{1}{2}}$ be the principle square root of $\bW$.
Letting $ \{ \bB _t\} _{t\geq 0}$ denote standard Brownian motion in
$\mathbb{R}^n$, we define $\tau_1 = \inf\{t: \bW ^{\nicefrac[]{1}{2}}\bB
_t + \bX_0 \not \in [0,1]^n\}$ to be the first time the process hits
the boundary of $[0,1]^n$. Then, for all times $0 \le t \le \tau_1$,
the process $\bX_t$ is defined as
\[
\bX _t = \bW ^{\nicefrac[]{1}{2}}\bB _t + \bX_0.
\]
After time $\tau_1$, we force $\bX_t$ to stick to the face $F_1$ hit
at time $\tau_1$: i.e.~if $(\bX_{\tau_1})_i \in \{0,1\}$, then we fix
it forever, by zeroing out the $i$-th row and column of the
covariance matrix of $\bW$ for all future time steps.
The rest of the process is defined analogously to
the one for \MC: whenever $\bX_t$ hits a lower dimensional face of
$[0,1]^n$, it is forced to stick to it until finally a vertex is
reached, at which point $\bX_t$ stops changing. We use $\tau_i$ for
the first time that $\bX_t$ hits a face of dimension $n-i$; then,
$\bX_{\tau_n} \in \{0,1\}^n$.

The output of the algorithm corresponds to the collection of the
variables assigned a value of true $T\subseteq \{ 1,\ldots,n\}$:
\[
T=\{ i:  \left(\bX _{\tau_n}\right) _i = 0\},
\]
whereas implicitly the collection of variables assigned a value of false are $\{ i:  \left(\bX _{\tau_n}\right) _i = 1\}$.

\subsection{Analysis of the Algorithm}
\label{sec:sde2pde}

Our goal is to analyze the expected value of the assignment produced
by the Sticky Brownian Motion rounding algorithm.
Similarly to previous work, we aim to give a lower bound on the
probability that a fixed clause $C$ is satisfied.
Unfortunately, the conformal mapping approach described in Section
\ref{sec:conformal} does not seem to be easily applicable to the
extended Sticky Brownian Motion rounding described above for \MtwoSAT,
because our calculations for \MC relied heavily on the symmetry of the
starting point of the random walk.
We propose a different method for analyzing the Brownian rounding
algorithm that is based on partial differential equations and the
maximum principle. We prove analytically the following theorem which
gives a guarantee on the performance of the algorithm.
We also note that numerical calculations show that the algorithm in
fact achieves the better approximation ratio of $0.921$ (see
Section~\ref{sec:slowdown} for details).

\begin{theorem}
The Sticky Brownian Motion rounding algorithm for \MtwoSAT achieves an approximation of at least $0.8749$.
\end{theorem}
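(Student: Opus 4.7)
The plan is to reduce the per-clause satisfaction probability to a boundary value problem for an elliptic PDE, and then lower bound its solution by constructing a polynomial subsolution via the maximum principle. Fix a clause $C$ on variables $z_i, z_j$. Because the stickiness rule decouples already-rounded coordinates, the joint trajectory $((\bX_t)_i, (\bX_t)_j)$ is itself a sticky Brownian motion in $[0,1]^2$, started at the marginals $(x_i, x_j)$, with covariance $\begin{pmatrix} 1 & \rho \\ \rho & 1 \end{pmatrix}$ where $\rho = \bw_i \cdot \bw_j$ (or the corresponding signed variant when a literal is negated, using $\bw_{-i} = -\bw_i$). By the symmetries $x \mapsto 1-x$ and $\rho \mapsto -\rho$, it suffices to analyze the case $C = z_i \vee z_j$, remembering that $0$ encodes ``true''.

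Let $u(x,y) \in [0,1]$ denote the probability that this two-dimensional process, started at $(x,y)$, is eventually absorbed at a vertex of the square other than $(1,1)$, the unique unsatisfying assignment. On each edge of $\partial [0,1]^2$ the process restricts to a one-dimensional Brownian motion, so the optional stopping theorem yields the explicit boundary values $u(x,y) = 1 - xy$ on $\partial [0,1]^2$. In the interior, up to an immaterial factor of $\tfrac12$, the infinitesimal generator of the process is
\[
\mathcal{L} = \partial_{xx} + 2\rho\, \partial_{xy} + \partial_{yy},
\]
and Dynkin's formula identifies $u$ as the unique bounded solution of $\mathcal{L} u = 0$ on $(0,1)^2$ with this Dirichlet data.

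The heart of the argument is to construct a polynomial $g(x,y)$, with coefficients depending on $\rho$, that serves as a certified lower bound on $u$. Specifically, one seeks $g$ with $g \le 1 - xy$ on $\partial [0,1]^2$ and $\mathcal{L} g \ge 0$ on $[0,1]^2$; the elliptic maximum principle then forces $u \ge g$ pointwise. A natural ansatz is
\[
g(x,y) = (1 - xy) + x(1-x)\,y(1-y)\,p(x,y,\rho)
\]
for a low-degree polynomial $p$, so the boundary condition is automatic. Since $\mathcal{L}(1 - xy) = -2\rho$, certifying $\mathcal{L} g \ge 0$ reduces to showing that a specific polynomial in $(x,y,\rho)$ is nonnegative on $[0,1]^2$ intersected with the admissible range of $\rho$. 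The plan is to discharge this via a sum-of-squares certificate using Positivstellensatz-style multipliers $x(1-x)$, $y(1-y)$, together with the SDP triangle inequalities (\ref{2SAT:Const5})--(\ref{2SAT:Const7}) that tie $\rho$ to the marginals.

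With such a $g$ in hand, the approximation ratio reduces to the infimum of
\[
\frac{g(x_i, x_j)}{1 - x_i x_j - \rho\,\sqrt{x_i(1-x_i)\,x_j(1-x_j)}}
\]
over SDP-feasible triples $(x_i, x_j, \rho)$, since by the parametrization $\bv_i = x_i \bv_0 + \sqrt{x_i(1-x_i)}\,\bw_i$ the denominator equals the SDP contribution $1 - \bv_i \cdot \bv_j$ of the clause. Showing this infimum is at least $0.8749$ is another SoS/polynomial-optimization check on a low-dimensional compact semialgebraic set, which can be handled by direct case analysis. The main obstacle is the design of the polynomial $g$: it must simultaneously match the boundary data, certify $\mathcal{L} g \ge 0$ over the \emph{coupled} admissible region of $(x,y,\rho)$, and be tight enough that the worst-case ratio really equals $0.8749$ rather than something weaker. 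Getting all three requirements to close at once dictates both the degree of the ansatz and the choice of SoS multipliers, and is where the real work lies.
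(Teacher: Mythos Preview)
Your proposal is correct and follows essentially the same approach as the paper: reduce to a two-dimensional Dirichlet problem, lower-bound $u$ via the maximum principle with explicit subsolutions, and certify both $\mathcal{L}g\ge 0$ and the final ratio using SoS. The paper executes this by splitting into three regimes for $\rho=\cos\theta$, using polynomial subsolutions of exactly your form $1-xy+x(1-x)y(1-y)\,p(x,y,\rho)$ for $\rho\in[-\tfrac12,0]$ and $\rho\in[-1,-\tfrac12]$, but a \emph{non-polynomial} subsolution $g_1=1-xy-\rho\sqrt{(x-x^2)(y-y^2)}$ (which coincides with the SDP value, giving ratio $1$) for $\rho\ge 0$; so your purely polynomial ansatz would need a separate candidate in that regime.
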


\subsubsection{Analysis via Partial Differential Equations and Maximum
  Principle}

As mentioned above, our analysis focuses on the probability that a
single clause $C$ with variables $\{z_i,z_j\}$ is satisfied.  We
assume the variables are not negated.  This is without loss of
generality as the algorithm and analysis are invariant to the sign of
the variable in the clause.

For simplicity of notation we denote by $x$ the marginal value of $z_i$ and by $y$ the marginal value of $z_j$.
Thus, $\bv_i=x \bv_0 + \sqrt{x-x^2} \bw_i$ and $\bv_j= y\bv_0 + \sqrt{y-y^2} \bw_j$.
Projecting the random process $\set{\bX}_{t\ge 0}$ on the $i$ and $j$ coordinates of the random process, we obtain a new process $\set{\tilde{\bX}_t}_{t\ge 0}$ where $\tilde{\bX}_0=(x,y)$. Let
\[ \tilde{\bW}=\begin{pmatrix} 1 & \cos (\theta)
            \\ \cos (\theta) & 1 \end{pmatrix},
        \] where $\theta$ is the angle between $\bw _i$ and
$\bw _j$.
Then $\tilde{\bX}_t = \tilde{\bX}_0 + \tilde{\bW}^{\nicefrac[]{1}{2}} \bB _t$ for all
$0 \le t \le \tau$, where $\tau = \inf\{t: \tilde{\bX}_0 +
\tilde{\bW}^{\nicefrac[]{1}{2}} \bB _t \not \in [0,1]^2\}$ is the
first time the process hits the boundary of the square. After time
$\tau$, the process $\tilde{\bX}_t$ performs a one-dimensional standard
Brownian motion on the first side of the square it has hit, until it
hits a vertex at some time $\sigma$. After time $\sigma$ the process
stays fixed. Almost surely $\sigma < \infty$, and, moreover, it is
easy to show that $\E \sigma < \infty$. We say that
$\set{\tilde{\bX}_t}_{t \ge 0}$
is absorbed at $\tilde{\bX}_\sigma \in \{0,1\}^2$.

\begin{observation}\label{Obs:absorption}
    The probability that the algorithm satisfies the clause $\{z_i,z_j\}$ equals \[\Pr \left[  \tilde{\bX}_\sigma \text{ is absorbed in } \set{(0,0),(0,1),(1,0)} \right].\]
\end{observation}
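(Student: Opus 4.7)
The statement is essentially a bookkeeping identity between the algorithm's decoding rule and the projected process $\set{\tilde{\bX}_t}_{t\ge 0}$, so the plan is to carry out two short verifications.

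First, I would confirm that the two-dimensional process $\set{\tilde{\bX}_t}_{t\ge 0}$ described just above the observation truly has the same distribution as the projection of $\set{\bX_t}_{t\ge 0}$ onto coordinates $i,j$. The main ingredient is that a linear projection of an $n$-dimensional Brownian motion with covariance $\bW$ onto coordinates $i,j$ is itself a two-dimensional Brownian motion whose covariance is the Gram submatrix $(\bw_k\cdot \bw_\ell)_{k,\ell\in\{i,j\}}$, which equals $\tilde{\bW}$. Second, I would verify that the sticky dynamics commute with this projection: as soon as $(\bX_t)_i$ is frozen at $0$ or $1$, zeroing out the $i$-th row and column of the full covariance matrix freezes $(\tilde{\bX}_t)_i$ as well and leaves $(\tilde{\bX}_t)_j$ evolving as a one-dimensional Brownian motion on the corresponding side of $[0,1]^2$, exactly as described. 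Finiteness of the absorption time $\sigma$ and $\E\sigma<\infty$ are standard consequences of one-dimensional Brownian motion exiting a bounded interval in finite expected time almost surely; hence $\tilde{\bX}_\sigma\in\{0,1\}^2$ with probability one.

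Next, I would invoke the decoding rule: the algorithm assigns $z_i$ true iff $(\bX_{\tau_n})_i = 0$ and false iff $(\bX_{\tau_n})_i = 1$, and analogously for $z_j$. Under the paper's running assumption that neither variable is negated (WLOG by the sign symmetry of both the algorithm and the analysis), the clause is satisfied exactly when at least one of $(\bX_{\tau_n})_i, (\bX_{\tau_n})_j$ equals $0$. Using the identification $\tilde{\bX}_\sigma = ((\bX_{\tau_n})_i,(\bX_{\tau_n})_j)\in\{0,1\}^2$ from the first step, this event coincides with $\tilde{\bX}_\sigma \neq (1,1)$, i.e. $\tilde{\bX}_\sigma \in \set{(0,0),(0,1),(1,0)}$, which is what the observation asserts.

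The only non-tautological piece is the commutation of the sticky dynamics with the projection. Once one accepts that zeroing rows and columns of the covariance corresponds to freezing the respective projected coordinates, the statement reduces to the definition of how the algorithm converts $\bX_{\tau_n}$ into a truth assignment. No deeper technical obstacle arises at this step; the actual work — lower-bounding the absorption probability in $\set{(0,0),(0,1),(1,0)}$ as a function of the SDP contribution $1-\bv_i\cdot\bv_j$ of the clause — is what the subsequent PDE and maximum-principle analysis is set up to do.
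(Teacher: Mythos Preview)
Your proposal is correct; the paper states this observation without proof, treating it as immediate from the definitions of the process and the decoding rule $T=\{i:(\bX_{\tau_n})_i=0\}$ together with the WLOG assumption that neither literal is negated. Your elaboration—checking that the sticky dynamics commute with projection onto coordinates $i,j$ (since zeroing out rows and columns for other coordinates leaves the $\{i,j\}$ covariance submatrix intact) and then reading off that the clause is satisfied iff $\tilde{\bX}_\sigma\neq(1,1)$—is exactly the unpacking the paper leaves implicit.
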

We abuse notation slightly and denote $\tilde{\bX_t}$ by $\bX_t$ and
$\tilde{\bW}$ by $\bW$ for the rest of the section which is aimed at analyzing
the above probability.  We also denote $\rho = \cos (\theta)$.  

Our next step is fixing $\theta$ and analyzing the probability of
satisfying the clause for all possible values of marginals $x$ and
$y$. Indeed, for different $x$ and $y$ but the same $\theta$, the
analysis only needs to consider the same random process with a
different starting point. Observe that not all such $x,y$ are
necessarily feasible for the SDP: we characterize which ones are feasible for a
given $\theta$ in\ Lemma~\ref{lem:feasibility}. But considering all $x,y$ allows us to
handle the probability in Observation~\ref{Obs:absorption} analytically.

For any $0\leq x\leq 1, 0\leq y\leq 1$, let $u(x,y)$ denote the probability of starting the random walk at the point $(x,y)$ and ending at one of the corners $(0,0)$, $(0,1)$ or $(1,0)$.  This
captures the probability of a clause being satisfied  when the walk begins with
marginals $(x,y)$ (and angle $\theta$). We can easily calculate this
probability exactly when either $x$ or $y$ are in the set $\set{0,1}$.
We obtain the following easy lemma whose proof appears in the appendix.

\begin{restatable}{lemma}{boundary}\label{lem:boundary}
For $\phi(x,y) = 1-xy$, we have
\vspace{-0.7em}
\begin{align}
    u(\bx) &= \phi(\bx) \text{ ~~~for all }\bx \in \bd[0,1]^2 \label{eqn:boundary-max-cut}
\end{align}
Moreover, for all $\bx$ in the interior of the square $[0,1]^2$,
$u(\bx) = \E^\bx[\phi(\bX_\tau)]$, where $\E^\bx$ denotes expectation
with respect to starting the process at $\bX_0 = \bx$.
\end{restatable}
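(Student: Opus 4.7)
The plan is to establish the two claims in sequence: first the boundary identity $u(\bx) = \phi(\bx)$ on $\bd[0,1]^2$, and then to extend it to the interior via the strong Markov property, giving the mean value relation $u(\bx) = \E^\bx[\phi(\bX_\tau)]$.

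For the boundary identity, I would treat each of the four sides of $[0,1]^2$ separately (the corners being handled directly). Fix for instance $\bx = (0, y)$ with $y \in (0,1)$. According to the description of the sticky process, once the first coordinate reaches $0$ it is pinned there forever, and the second coordinate evolves as a standard one-dimensional Brownian motion on $[0,1]$ until it is absorbed in $\{0,1\}$. Since this one-dimensional motion is a bounded martingale whose exit time has finite expectation, optional stopping yields that the exit distribution is Bernoulli with mean $y$. Both $(0,0)$ and $(0,1)$ lie in the good absorption set $\{(0,0),(0,1),(1,0)\}$, so $u(0,y) = 1 = 1 - 0\cdot y$. An identical argument covers $\bx = (x,0)$. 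For $\bx = (1, y)$ with $y \in (0,1)$, the same optional-stopping computation gives absorption at $(1,0)$ with probability $1 - y$ and at $(1,1)$ with probability $y$; since only $(1,0)$ is good, $u(1,y) = 1 - y = 1 - 1\cdot y$, and similarly $u(x,1) = 1 - x$. The four corners are absorbed immediately and give the matching values by inspection. Together these cases yield $u(\bx) = 1 - xy = \phi(\bx)$ for all $\bx \in \bd[0,1]^2$.

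For the interior statement, let $\tau = \inf\{t \ge 0 : \bX_t \in \bd[0,1]^2\}$. For $\bx$ in the open square, $\tau$ is almost surely finite and in fact has finite expectation, since $\tilde{\bW}$ is nondegenerate (as $\rho = \cos\theta \in (-1,1)$) so the diffusion exits any bounded domain in finite expected time. Applying the strong Markov property at $\tau$ and conditioning on $\bX_\tau$ gives $u(\bx) = \E^\bx[u(\bX_\tau)]$. Since $\bX_\tau \in \bd[0,1]^2$ almost surely, the boundary identity established above allows me to replace $u(\bX_\tau)$ by $\phi(\bX_\tau)$, yielding the desired formula $u(\bx) = \E^\bx[\phi(\bX_\tau)]$.

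I do not anticipate any serious obstacle here: both ingredients, namely optional stopping for 1D Brownian motion on a bounded interval and the strong Markov property for a continuous diffusion, are classical. The only point worth double-checking is that the exit times and hitting times involved have finite expectation so that optional stopping applies verbatim, but this follows from standard estimates on exit times of Brownian motion from bounded intervals together with the non-degeneracy of $\tilde{\bW}$.
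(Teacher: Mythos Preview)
Your proposal is correct and follows essentially the same approach as the paper: both arguments handle the boundary via the martingale property of the one-dimensional process and then extend to the interior by the strong Markov property (the paper phrases this as the law of total expectation, writing $u(\bx) = \E^\bx[\E^{\bX_\tau}[\phi(\bX_\sigma)]] = \E^\bx[\phi(\bX_\tau)]$). The only cosmetic difference is that the paper treats the boundary uniformly by observing that $\phi$ is affine in each coordinate, so $\E^\bx[\phi(\bX_\sigma)] = \phi(\E^\bx[\bX_\sigma]) = \phi(\bx)$ in one line, whereas you verify each of the four sides by an explicit optional-stopping computation; both are valid and equivalent in substance.
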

Next we use the fact that Brownian motion gives a solution to the
Dirichlet boundary problem. While Brownian motion gives a solution to
Laplace's equation (\cite{MP10} chapter 3), since our random process
is a diffusion process, we need a slightly more general
result\footnote{This result can also be derived from Theorem\ 3.12 in
  \cite{MP10} after applying a linear transformation to the
  variables.}. We state the following result from \cite{Oksendal-SDE},
specialized to our setting, that basically states that given a
diffusion process in $[0,1]^2$ and a function $\phi$ on the boundary,
the extension of the function defined on the interior by the expected value of the function at the first hitting point on the boundary is characterized by an elliptic partial differential equation. 

\begin{theorem}[\cite{Oksendal-SDE} Theorem 9.2.14]
    \label{thm:oksendal-max-principle}
Let $D=(0,1)^2\subseteq \mathbb{R}^2$,  $\mathbf{\Sigma}\in  {{\Re^{2\times 2}}}$ and let $a_{11},a_{12},a_{21},a_{22}$ be defined as follows
$$ \begin{pmatrix} a_{11} & a_{12} \\
a_{21}  & a_{22} \end{pmatrix} = \frac12 \mathbf{\Sigma}\mathbf{\Sigma^\top}. $$
For any $\bx\in D$, consider the process $\bX _t = \bX_0 +
\mathbf{\Sigma}\bB _t$ where $\bB_t$ is standard Brownian motion in $\mathbf{R}^2$.
Let $\tau = \inf\{t: \bX_t \not \in D\}$.
Given a bounded continuous function $\phi:\partial D\rightarrow \Re$, define the function $u:D\rightarrow \Re$ such that
  $$ u(\bx)=\mathbb{E}^{\bx} \left[ \phi (\bX _{\tau })\right],$$
where $\E^{\bx}$ denotes the expected value when $\bX_0 = \bx \in \Re^2$.
I.e.,  $u(\bx)$ is the expected value of $\phi$ when first hitting $\partial D$ conditioned on starting at point $\bx$.
Consider the uniformly elliptic partial differential operator $\L$ in $D$ defined by: $$ \L= \sum _{i,j=1}^2 a_{ij}\frac{\partial ^2}{\partial x_i \partial x_j}.$$
Then $u\in C^2(D)$ is the unique solution to the partial differential equation\footnote{$u\in C^k(D) $ means that $u$ has a continuous $k$\textsuperscript{th} derivative over $D$, and $u\in C^0(D)$ means that $u$ is continuous.}:
\begin{align*}
    \L u&=0 && \text{ in } D&  \\
    \lim_{ \substack{\bx\rightarrow \by \\  \bx\in D} } u(\bx)&=\phi(\by) && \text{ for all } \by\in \partial D&
\end{align*}
\end{theorem}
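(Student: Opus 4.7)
\medskip

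\noindent\textbf{Proof plan.} The statement is the classical stochastic solution of the Dirichlet problem for a uniformly elliptic operator with constant coefficients, so my plan is to assemble it from three standard ingredients: (i) an Itô/Dynkin computation showing that $u$ is $\L$-harmonic in $D$, (ii) a barrier argument showing $u$ extends continuously to $\phi$ on $\partial D$, and (iii) the maximum principle for uniqueness. Throughout I will use that $\mathbf{\Sigma}\mathbf{\Sigma}^\top = 2(a_{ij})$ is positive definite (uniform ellipticity), which follows in our application from $|\cos\theta|<1$.

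\emph{Harmonicity.} The infinitesimal generator of the diffusion $\bX_t=\bx+\mathbf{\Sigma}\bB_t$ is exactly $\L$, since for a $C^2$ test function $v$, Itô's formula gives
\[
v(\bX_t)-v(\bx)=\int_0^t \nabla v(\bX_s)^\top\mathbf{\Sigma}\,d\bB_s+\int_0^t \L v(\bX_s)\,ds.
\]
To prove $\L u=0$ in $D$, fix an open ball $B$ with $\overline{B}\subset D$ and let $\tau_B$ be the first exit time from $B$. By the strong Markov property applied at $\tau_B$,
\[
u(\bx)=\E^{\bx}\!\big[\phi(\bX_\tau)\big]=\E^{\bx}\!\big[u(\bX_{\tau_B})\big],
\]
so $u$ satisfies the mean-value property with respect to the diffusion on every such ball. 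Combined with Itô's formula applied to a smooth $v$ that agrees with $u$ at the center, this forces $u$ to coincide with the classical solution of $\L w=0$ in $B$ with boundary data $u|_{\partial B}$. Interior elliptic regularity (constant coefficients, so in fact real-analyticity) then upgrades $u$ to $C^2(D)$ and yields $\L u\equiv 0$ in $D$.

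\emph{Boundary behavior.} For each $\by\in\partial D$ I need $u(\bx)\to\phi(\by)$ as $\bx\to\by$ in $D$. I construct a local barrier at $\by$: because $\L$ is uniformly elliptic, for sides of the square one has the exterior hyperplane condition, and a function of the form $c-\bigl((\bx-\by)\cdot\nu\bigr)^2+\alpha|\bx-\by|^2$ with appropriate $\alpha$ serves as a supersolution dominating the Brownian exit distance. At the four corners the same is true because a non-degenerate diffusion started near the corner exits $D$ rapidly in expectation; concretely, $\E^{\bx}\tau\to 0$ as $\bx\to\by$, and combined with $L^2$-control on $|\bX_\tau-\bx|\le\|\mathbf{\Sigma}\|\cdot\E^{\bx}\sqrt{\tau}$ this gives $\bX_\tau\to\by$ in probability. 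Since $\phi$ is bounded and continuous, dominated convergence yields the boundary limit.

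\emph{Uniqueness.} If $u_1,u_2\in C^2(D)\cap C(\overline D)$ both solve the Dirichlet problem, then $w=u_1-u_2$ satisfies $\L w=0$ in $D$ with $w=0$ on $\partial D$, and the strong maximum principle for uniformly elliptic operators~\cite{GT15} forces $w\equiv 0$. Alternatively, one can conclude uniqueness purely probabilistically: apply Itô's formula to any such solution $w$ and stop at $\tau$ to obtain $w(\bx)=\E^{\bx}[w(\bX_\tau)]=0$. The main obstacle in this whole program is the boundary step at the corners, where the domain is only Lipschitz; the rescue is precisely that the diffusion is non-degenerate, so every boundary point is regular and the barriers above apply uniformly.
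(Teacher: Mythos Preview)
The paper does not prove this theorem at all: it is stated with the attribution ``\cite{Oksendal-SDE} Theorem 9.2.14'' and used as a black box, with the text immediately continuing ``We instantiate our differential equation by choosing $\mathbf{\Sigma}=\bW^{1/2}$\ldots''. So there is no proof in the paper to compare against; your proposal goes well beyond what the paper does.

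That said, your sketch is a reasonable outline of the classical argument (Dynkin/strong Markov for $\L$-harmonicity, barriers for boundary regularity, maximum principle for uniqueness), and is essentially how \O{}ksendal and Gilbarg--Trudinger assemble the result. A couple of points are worth tightening if you want it to stand on its own. First, your specific barrier $c-((\bx-\by)\cdot\nu)^2+\alpha|\bx-\by|^2$ is not obviously a supersolution for general constant-coefficient elliptic $\L$; for the flat sides of the square a linear function in the outward normal direction already works, and at the corners the exterior cone condition suffices for regularity. Second, the inequality $|\bX_\tau-\bx|\le\|\mathbf{\Sigma}\|\cdot\E^{\bx}\sqrt{\tau}$ is not literally correct (the left side is random); what you want is $\E^{\bx}|\bX_\tau-\bx|^2=\mathrm{tr}(\mathbf{\Sigma}\mathbf{\Sigma}^\top)\,\E^{\bx}[\tau]$, which together with $\E^{\bx}[\tau]\to 0$ as $\bx\to\by$ gives convergence of $\bX_\tau$ to $\by$ in probability. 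These are fixable details; the overall structure is sound.
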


We instantiate our differential equation by choosing $\mathbf{\Sigma}=\bW^{\nicefrac[]{1}{2}}$  and thus $a_{ij}$ are the entries of $\bW$.
It is important to note that all $a_{ij}$s are independent of the starting point $\bx\in [0,1]^2$. 
Thus, we obtain that $u$ is the unique function satisfying the following partial differential equation:
\begin{align*}
\frac{\partial ^2 u}{\partial x^2} + \frac{\partial ^2 u}{\partial y^2} + 2\rho \frac{\partial ^2 u}{\partial x \partial y} & = 0 & \forall (x,y)\in \Int[0,1]^2\\
u(x,y) & = (1-xy) & \forall (x,y)\in \bd[0,1]^2
\end{align*}
Above, and in the rest of the paper, we use $\Int\,D$ to denote the
interior of a set $D$, and $\bd D$ to denote its boundary. 

It remains to solve the above partial differential equation (PDE) that will allow us to calculate $u(x,y)$ and give the probability of satisfying the clause.
\subsection{Maximum Principle}
\label{sec:max-principle}
Finding closed form solutions general PDE's is challenging and, there
is no guarantee any solution would be expressible in terms of simple
functions. However, to find a good approximation ratio, it suffices
for us to
find good lower-bounds on the probability of satisfying the
clause. I.e.~we need to give a lower bound on the function $u(x,y)$
from the previous section over those $(x,y)$ that are feasible.
Since the PDE's generated by our algorithm are elliptic (a
particular kind of PDE), we will use a property of elliptic PDE's
which will allow us to produce good lower-bounds on the solution at
any given point. More precisely, we use the following theorem from
\citet{GT15}.

Let $\L$ denote the operator
    \[ \L := \sum_{ij} a_{ij} \frac{\partial^2}{\partial_i\partial_j}\]
and we say that $\L$ is an elliptic operator if the coefficient matrix $A =
[a_{ij}]_{i,j}$ is positive semi-definite.

 We restate
a version of Theorem 3.1 in~\citet{GT15} that shows how the maximum principle can be used to obtain lower bounds on $u(x,y)$.
Here $\bar{D}$ denotes the closure of $D$.
\begin{theorem}[Maximum Principle]\label{thm:maximum}
    Let $\L$ be elliptic on a bounded domain $D$ and suppose
    $\L[g](x) \geq 0 \;\; \forall x \in D$ for some $g \in C^2(D) \cap C^0(\bar{D})$.
    Then the maximum of $g$ on $D$ is achieved on $\partial D$, that is,
    \[ \sup_{x \in D} g(x) = \sup_{x \in \partial D} g(x) \]
\end{theorem}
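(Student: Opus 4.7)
The plan is to follow the classical two-step proof of the weak maximum principle: first establish it under the strict hypothesis $\L[g] > 0$, then deduce the general case by a perturbation argument. In the strict case, I would argue by contradiction. Since $g \in C^0(\bar D)$ and $\bar D$ is compact, the supremum is attained at some point of $\bar D$; suppose for contradiction that it is attained at an interior point $x_0 \in D$. By $C^2$-smoothness on $D$, the gradient $\nabla g(x_0)$ vanishes and the Hessian $H := D^2 g(x_0)$ is negative semidefinite (the standard second-order necessary conditions for an interior maximum). Then
\[
\L[g](x_0) \;=\; \sum_{i,j} a_{ij}\, H_{ij} \;=\; \operatorname{tr}(A H),
\]
where $A = [a_{ij}]$ is the (PSD) coefficient matrix of $\L$. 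Writing $\operatorname{tr}(A H) = \operatorname{tr}(A^{1/2} H A^{1/2})$ and noting that $A^{1/2} H A^{1/2}$ is negative semidefinite (being congruent to $H$), its trace is $\le 0$. This contradicts $\L[g](x_0) > 0$, so the sup must be attained on $\partial D$.

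For the general case $\L[g] \ge 0$, I would perturb $g$ by a strictly ``$\L$-subharmonic'' auxiliary function. Concretely, pick a direction $\xi \in \mathbb{R}^n$ with $\xi^\top A \xi > 0$ (in the \MtwoSAT application $A = \bigl(\begin{smallmatrix}1 & \rho \\ \rho & 1\end{smallmatrix}\bigr)$, so $\xi = e_1$ works, giving $\xi^\top A \xi = 1$), and set $\phi(x) = e^{\alpha\, \xi \cdot x}$. A direct differentiation gives
\[
\L[\phi](x) \;=\; \alpha^2 \,(\xi^\top A \xi)\, \phi(x) \;>\; 0 \qquad \forall x \in D.
\]
For any $\epsilon > 0$, the function $g_\epsilon := g + \epsilon\,\phi$ then satisfies $\L[g_\epsilon] = \L[g] + \epsilon\,\L[\phi] > 0$ on $D$, so the strict case already handled yields $\sup_{\bar D} g_\epsilon = \sup_{\partial D} g_\epsilon$. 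Because $\phi$ is bounded on the compact set $\bar D$, we can send $\epsilon \to 0^+$ to obtain $\sup_{\bar D} g \le \sup_{\partial D} g$, which together with the trivial reverse inequality gives the theorem.

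The main obstacle is the linear-algebra step $\operatorname{tr}(AH) \le 0$ for $A \succeq 0$ and $H \preceq 0$; once this is in hand, the strict case is immediate from second-order calculus and the general case is a clean perturbation. A minor subtlety is that the hypothesis as stated only asks $A$ to be PSD (not uniformly elliptic), so one must verify that a direction $\xi$ with $\xi^\top A \xi > 0$ actually exists; this is automatic whenever $A \not\equiv 0$, and in particular holds throughout the PDE applications in this paper where $a_{11} \equiv 1$.
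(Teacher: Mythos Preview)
Your proof is correct and is precisely the classical argument for the weak maximum principle (the two-step ``strict case via the second-derivative test, then perturb by an exponential'' proof). The paper does not give its own proof of this theorem at all: it simply cites Gilbarg--Trudinger (Theorem~3.1) and moves on, so there is nothing to compare against beyond noting that your argument is the same one found in that reference. Your closing remark about the degenerate case $A\equiv 0$ is apt; the paper's definition of ``elliptic'' as merely $A\succeq 0$ is slightly loose, but in every application here one has $a_{11}>0$ on the open domain, so your perturbation with $\xi=e_1$ goes through.
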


Theorem~\ref{thm:maximum} has the following corollary that allows us to obtain lower bounds on $u(x,y)$.
\begin{corollary}
    \label{cor:maximum}
    Let $\L$ be elliptic on a bounded domain $D$ and
    for some $u,g \in C^2(D) \cap C^0(\bar{D})$.
    \begin{enumerate}
        \item $\L[g](x) \geq \L[u](x) \quad \forall x \in D$
        \item $g(x) \leq u(x) \quad \forall x \in \partial D$
    \end{enumerate}
then $g(x) \leq u(x) \forall x \in D$.
\end{corollary}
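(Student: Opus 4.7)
The plan is to reduce the corollary to a direct application of Theorem~\ref{thm:maximum} (the Maximum Principle) by considering the auxiliary function $h \eqdef g - u$. First I would note that both $g, u \in C^2(D) \cap C^0(\bar{D})$, and since this is a vector space, $h \in C^2(D) \cap C^0(\bar{D})$ as well, so $h$ is eligible for Theorem~\ref{thm:maximum}.

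Next, I would use the linearity of the differential operator $\L = \sum_{ij} a_{ij} \tfrac{\partial^2}{\partial_i \partial_j}$ to observe that $\L[h](x) = \L[g](x) - \L[u](x) \geq 0$ for all $x \in D$, where the inequality is exactly hypothesis (1) of the corollary. Hence the hypothesis of Theorem~\ref{thm:maximum} is satisfied for $h$ on the bounded domain $D$, and we conclude
\[
\sup_{x \in D} h(x) = \sup_{x \in \partial D} h(x).
\]

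Finally, I would invoke hypothesis (2), which says $g(x) \leq u(x)$ for all $x \in \partial D$, i.e.\ $h(x) \leq 0$ on $\partial D$. Combining with the previous display gives $\sup_{x \in D} h(x) \leq 0$, and therefore $g(x) - u(x) = h(x) \leq 0$ for every $x \in D$, which is precisely the desired conclusion. There is no real obstacle here: the proof is a one-line reduction, with the only mild subtlety being to remember that the linearity of $\L$ is what lets us transfer hypothesis (1) on $\L[g]$ and $\L[u]$ into a single sign condition on $\L[h]$ suitable for the Maximum Principle.
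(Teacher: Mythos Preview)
Your proposal is correct and is exactly the standard argument: apply Theorem~\ref{thm:maximum} to $h = g - u$, using linearity of $\L$ to verify $\L[h] \ge 0$ and hypothesis~(2) to bound $h$ on $\partial D$. The paper itself does not spell out a proof but simply refers the reader to \cite{GT15}; your reduction is precisely the intended one.
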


We refer the reader to \cite{GT15} for a formal proof.
Thus, it is enough to construct candidate functions $g:[0,1]^2\rightarrow \Re$ such that

\begin{align}
 \frac{\partial ^2 g}{\partial x^2} + \frac{\partial ^2 g}{\partial y^2} + 2\rho \frac{\partial ^2 g}{\partial x \partial y} &\geq  0  &\forall (x,y)\in \Int[0,1]^2\\
g(x,y) &\leq (1-xy) &\forall (x,y)\in \partial[0,1]^2
\end{align}
Then we obtain that $g(x,y)\leq u(x,y)$ for all $(x,y)\in [0,1]^2$. In
what follows we construct many different such function each of which
works for a different range of the parameter $\theta$ (equivalently, $\rho$).

\subsection{Candidate Functions for Maximum Principle}

We now construct feasible candidates to the maximum principle as described in Corollary~\ref{cor:maximum}.
We define the following functions:

\begin{enumerate}
\item $g_1(x,y)=1-xy-\cos(\theta)\sqrt{x-x^2}\sqrt{y-y^2}$.
\item $g_2(x,y)=1-xy-2\cos(\theta)(x-x^2)(y-y^2)$.
\item $g_3(x,y)=1-xy-\frac12(1+5 \cos(\theta))(x-x^2)(y-y^2)(x+y)(2-x-y)$.
\end{enumerate}

The following lemma shows that the above functions satisfy the conditions required for the application of the maximum principle (its proof appears in the appendix).
\begin{restatable}{lemma}{feasiblemax}\label{lem:feasible_maximum}
Each of $g_1, g_2,g_3$ satisfies the boundary conditions, i.e. $g_i(x,y)=u(x,y)$ for all $x,y\in \partial[0,1]^2$ and for all values $\theta$. Moreover, we have the following for each $(x,y) \in [0,1]^2$:
\begin{enumerate}
\item If $1\geq \cos(\theta) \geq 0$, then $\L g_1\geq 0$.
\item If $0\geq \cos(\theta) \geq -\frac12$, then $\L g_2\geq 0$.
\item If $-\frac12\geq \cos(\theta) \geq -1$, then $\L g_3\geq 0$.
\end{enumerate}
\end{restatable}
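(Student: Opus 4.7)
The plan is to verify the boundary equality and the interior inequality for each $g_i$ separately. The boundary condition is the easy part: each $g_i$ takes the form $1-xy + \rho\,R_i(x,y)$ with $R_i$ containing a factor that vanishes whenever $x\in\{0,1\}$ or $y\in\{0,1\}$---namely $\sqrt{(x-x^2)(y-y^2)}$ for $i=1$ and $(x-x^2)(y-y^2)$ for $i=2,3$. Hence $g_i = 1-xy$ on $\bd[0,1]^2$ for every $\theta$, matching the boundary value $1-xy$ from Lemma~\ref{lem:boundary}.

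For the interior inequality for $g_1$ on $\rho\in[0,1]$, write $g_1 = 1-xy-\rho\,f(x)h(y)$ with $f(x)=\sqrt{x-x^2}$ and $h(y)=\sqrt{y-y^2}$. Differentiating the identity $2ff'=1-2x$ twice yields $f''=-1/(4f^3)$, and a direct computation gives
\[
\L g_1 = -2\rho + \frac{\rho h}{4f^3} + \frac{\rho f}{4h^3} - \frac{\rho^2(1-2x)(1-2y)}{2fh}.
\]
I would then substitute $1-2x=\cos\alpha$, $1-2y=\cos\beta$ (so $f=\tfrac12\sin\alpha$, $h=\tfrac12\sin\beta$), multiply through by $4fh>0$, and apply AM-GM $\tfrac{\sin^2\beta}{\sin^2\alpha}+\tfrac{\sin^2\alpha}{\sin^2\beta}\geq 2$ to obtain
\[
4fh\cdot\L g_1 \;\geq\; 2\rho\bigl(1-\sin\alpha\sin\beta - \rho\cos\alpha\cos\beta\bigr).
\]
For $\rho\in[0,1]$ the bracket is non-negative: if $\cos\alpha\cos\beta\geq 0$, it is bounded below by $1-\cos(\alpha-\beta)\geq 0$; if $\cos\alpha\cos\beta<0$, the last term in the bracket is non-negative and what remains is $1-\sin\alpha\sin\beta\geq 0$.

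For the interior inequality for $g_2$ on $\rho\in[-\tfrac12,0]$, a direct calculation yields
\[
\L g_2 = -\rho\bigl((1-2x)^2+(1-2y)^2\bigr) - 4\rho^2(1-2x)(1-2y).
\]
Setting $u=1-2x$, $v=1-2y$ and completing the square rewrites this as
\[
\L g_2 = -\rho\bigl[(u+2\rho v)^2 + (1-4\rho^2)v^2\bigr],
\]
which is non-negative precisely when both $-\rho\geq 0$ and $1-4\rho^2\geq 0$, i.e.\ exactly on the stated interval.

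For the interior inequality for $g_3$ on $\rho\in[-1,-\tfrac12]$, the substitution $u=1-2x$, $v=1-2y$ turns $g_3$ into a polynomial of total degree $6$ in $(u,v)$ and turns $\L$ into $4(\partial_{uu}+\partial_{vv}+2\rho\,\partial_{uv})$. Expanding produces a polynomial identity $\L g_3 = \Psi(u,v;\rho)$, and the task reduces to certifying non-negativity of $\Psi$ on $(u,v)\in[-1,1]^2$ for $\rho\in[-1,-\tfrac12]$. I would exhibit a sum-of-squares decomposition of $\Psi$ in the variables $(u,v)$ whose coefficients are explicitly non-negative functions of $\rho$ on this interval. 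This step is the main obstacle: the algebra is heavy, and the coefficient $\tfrac12(1+5\rho)$ has been tuned so that the SOS certificate is tight at $\rho=-\tfrac12$ (exactly where $g_2$ ceases to work) and remains valid down to $\rho=-1$. I would verify the decomposition by direct expansion, likely with computer-algebra assistance, and then check the sign of each $\rho$-dependent coefficient on $[-1,-\tfrac12]$.
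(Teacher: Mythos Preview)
Your proposal is correct and follows essentially the same line as the paper: the boundary check is identical; for $g_1$ the paper uses the equivalent substitution $x=(1+\sin a)/2$, $y=(1+\sin b)/2$ and arrives at the same inequality $1-\cos\theta\sin a\sin b-\cos a\cos b\ge 0$; and for $g_3$ the paper likewise defers to a computer-found sum-of-squares certificate (formulated there as a Putinar-style decomposition in the three variables $x,y,\cos\theta$ with constraint multipliers, rather than your parametrized-in-$\rho$ SoS in $(u,v)$). For $g_2$ your completing-the-square argument is a slightly neater variant of the paper's linearity-in-$\cos\theta$ plus endpoint check at $\cos\theta\in\{0,-\tfrac12\}$, but both are elementary and yield exactly the same range.
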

While some of these proofs are based on simple inequalities, proving
others requires us to use sum of squares expressions. For example, to
show $\L g_3\geq 0$, we consider $\L g_3=p(x,y,\cos(\theta))$ as a
polynomial in $x,y$ and $\cos(\theta)$. Replacing $z=\cos (\theta)$,
our aim is to show $p(x,y,z)\geq 0$ if $0\leq x,y\leq 1$ and
$-1\leq z\leq -\frac12$. Equivalently, we need to show $p(x,y,z)\geq
0$ whenever $r_1(x,y,z)\eqdef x-x^2\geq 0$, $r_2(x,y,z)\eqdef y-y^2\geq 0$ and
$r_3(x,y,z)\eqdef -(z+\frac12) \geq 0 $ and $r_4(x,y,z)\eqdef (z+1)\geq 0$. We
show this by obtaining polynomials $q_i(x,y,z)$ for $i=0,1,2,3,4$ such
that each $q_i$ is a sum of squares polynomial of fixed degree and we have
$$p(x,y,z)=q_0(x,y,z)+\sum_{i=1}^4 q_i(x,y,z)r_i(x,y,z).$$
Observe that the above polynomial equality proves the desired result by evaluating the RHS for every $0\leq x,y\leq 1$ and $ -\nicefrac[]{1}{2}\geq z \geq -1$.
Clearly, the RHS is non-negative: each $q_i$ is non-negative since it
is a sum of squares and each $r_i$ is non-negative in the region we
care about, by construction.
We mention that we obtain these proofs via solving a semi-definite
program of fixed degree (at most 6) for each of the $q_i$ polynomials (missing details appear in the appendix).

Let us now focus on the approximation guarantee that can be proved using the above functions $g_1$, $g_2$, and $g_3$.
The following lemma compares the lower bounds on the probability of satisfying a clause, as given by $g_1$, $g_2$, and $g_3$, to the SDP objective.
Recall that the contribution of any clause with marginals $x$ and $y$ and angle $\theta$ to the SDP's objective is given by: $1-xy-\cos(\theta)\sqrt{x-x^2}\sqrt{y-y^2}$.
We denote this contribution by $\SDP(x,y,\theta)$.
It is important to note that not all triples $(x,y,\theta)$ are
feasible (recall that $\theta$ is the angle between $\bw _i$ and $\bw
_j$), due to the triangle inequalities in the SDP.
This is summarized in the following lemma.

\begin{restatable}{lemma}{feasibility}\label{lem:feasibility}
Let $x,y,\theta$ be as defined by a feasible pair of vectors $v_i$ and $v_j$. Then they must satisfy the following constraints:

\begin{enumerate}
\item $0\leq x\leq 1, 0\leq y\leq 1, 0\leq \theta \leq \pi$.
\item $\cos(\theta)\geq -\sqrt{\frac{xy}{(1-x)(1-y)}}$.
\item $\cos(\theta)\geq -\sqrt{\frac{(1-x)(1-y)}{xy}}$.
\end{enumerate}
\end{restatable}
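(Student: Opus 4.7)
The plan is to translate each relevant SDP constraint directly into a condition on $(x,y,\theta)$ using the decomposition $\bv_i = x\bv_0 + \sqrt{x(1-x)}\,\bw_i$ and $\bv_j = y\bv_0 + \sqrt{y(1-y)}\,\bw_j$, where $\bw_i, \bw_j$ are unit vectors orthogonal to $\bv_0$. Since $\bv_0\cdot\bw_i = \bv_0\cdot\bw_j = 0$, a direct expansion yields the key identity
\[
\bv_i\cdot\bv_j \;=\; xy + \sqrt{x(1-x)y(1-y)}\,\cos\theta,
\]
which is the workhorse for Parts 2 and 3.

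For Part 1, I would observe that constraint \eqref{2SAT:Const2} gives $x = \bv_0\cdot\bv_i = \bv_i\cdot\bv_i = \|\bv_i\|^2 \geq 0$, and combined with constraint \eqref{2SAT:Const4} together with the analogous non-negativity $\bv_0\cdot\bv_{-i} = \|\bv_{-i}\|^2 \geq 0$, we conclude $x \leq 1$. The same argument applies to $y$, and $\theta\in[0,\pi]$ holds by definition of the angle between two real unit vectors.

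For Part 2, I would substitute the key identity into the non-negativity constraint \eqref{2SAT:Const7}, $\bv_i\cdot\bv_j \geq 0$, and solve for $\cos\theta$:
\[
\cos\theta \;\geq\; -\frac{xy}{\sqrt{x(1-x)y(1-y)}} \;=\; -\sqrt{\frac{xy}{(1-x)(1-y)}},
\]
where the last step factors one copy of $xy$ inside the radical. For Part 3, there are two equivalent routes. The direct route substitutes the key identity into the $\ell_2^2$ triangle inequality \eqref{2SAT:Const5}, $1 \geq \bv_0\cdot\bv_i + \bv_0\cdot\bv_j - \bv_i\cdot\bv_j$, which simplifies to $\sqrt{x(1-x)y(1-y)}\,\cos\theta \geq -(1-x)(1-y)$ and rearranges to the claimed bound. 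The slicker route applies Part 2 to the negated literals $\bv_{-i}, \bv_{-j}$: constraint \eqref{2SAT:Const3} together with \eqref{2SAT:Const4} forces $\bv_{-i} = (1-x)\bv_0 - \sqrt{x(1-x)}\,\bw_i$ and analogously for $\bv_{-j}$, so the marginals flip to $1-x, 1-y$ while the angle between $\bw_{-i} = -\bw_i$ and $\bw_{-j} = -\bw_j$ is still $\theta$. Applying Part 2 in this dual form yields Part 3 immediately.

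No serious obstacle is anticipated; the lemma is essentially bookkeeping, translating the SDP's triangle and non-negativity constraints into the angular parameter. The one step that requires a bit of care is verifying $\bw_{-i} = -\bw_i$ (needed for the dual route and also implicit in the $1-x$ marginal decomposition), which follows from $\bv_i\cdot\bv_{-i} = 0$ combined with the fact that both $\bw_i$ and $\bw_{-i}$ lie in the hyperplane orthogonal to $\bv_0$.
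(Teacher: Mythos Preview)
Your proposal is correct and essentially matches the paper's proof: the paper also uses the identity $\bv_i\cdot\bv_j = xy + \cos(\theta)\sqrt{x(1-x)y(1-y)}$ together with $\bv_i\cdot\bv_j\ge 0$ for Part~2, and your ``slicker route'' via the negated literals $\bv_{-i},\bv_{-j}$ is exactly the argument the paper gives for Part~3. Your additional direct route for Part~3 via constraint~\eqref{2SAT:Const5} is a valid alternative that the paper does not spell out.
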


Finally, we prove the following lemma which proves an approximation guarantee of $\MtwoRatio$ for \MtwoSAT via the PDE and the maximum principle approach.
As before, these proofs rely on explicitly obtaining sum of squares proofs as discussed above. We remark that these proofs essentially aim to obtain $\frac78=0.875$-approximation but errors of the order $10^{-5}$ allow us to obtain a slightly worse bound using this methods. The details appear in the appendix.

\begin{restatable}{lemma}{ratiomax}\label{lem:ratio_maximum}
Consider any feasible triple $(x,y,\theta)$ satisfying the condition in Lemma~\ref{lem:feasibility}. We have the following.
\begin{enumerate}
\item If $1\geq \cos(\theta) \geq 0$, then $g_1(x,y)\geq 1\cdot \SDP(x,y,\theta)$.
\item If $0\geq \cos(\theta) \geq -\frac12$, then $g_2(x,y)\geq \MtwoRatio\cdot \SDP(x,y,\theta)$.
\item If $-\frac12\geq \cos(\theta) \geq -1$, then $g_3(x,y)\geq \MtwoRatio\cdot \SDP(x,y,\theta)$.
\end{enumerate}
\end{restatable}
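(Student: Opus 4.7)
The plan is to handle the three cases separately, matching the three candidate functions built in Lemma~\ref{lem:feasible_maximum}. Case 1 is immediate: when $\cos\theta \geq 0$, we have $g_1(x,y) = 1 - xy - \cos(\theta)\sqrt{x-x^2}\sqrt{y-y^2} = \SDP(x,y,\theta)$ by definition, so the inequality holds with constant exactly $1$ and no further analysis is needed.

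For Cases 2 and 3, the plan is to reduce everything to a polynomial non-negativity statement, to be certified by sum of squares. Introduce auxiliary variables $s = \sqrt{x-x^2}$ and $t = \sqrt{y-y^2}$; under this substitution the relevant quantities become polynomials: $\SDP = 1 - xy - z\, s\, t$, $g_2 = 1 - xy - 2z\, s^2 t^2$, and $g_3 = 1 - xy - \tfrac{1}{2}(1 + 5z)\, s^2 t^2 (x+y)(2-x-y)$, where $z = \cos\theta$. Hence $g_i - \MtwoRatio\cdot\SDP$ is a polynomial in the five variables $(x, y, s, t, z)$, and the goal is to certify its non-negativity on the semialgebraic set defined by: the box constraints $x, 1-x, y, 1-y \geq 0$; the relevant range inequalities on $z$ (namely $-z \geq 0$ and $2z + 1 \geq 0$ for Case~2, or $-(2z+1) \geq 0$ and $z+1 \geq 0$ for Case~3); and the squared feasibility inequalities obtained from Lemma~\ref{lem:feasibility}, $xy - z^2(1-x)(1-y) \geq 0$ and $(1-x)(1-y) - z^2\, xy \geq 0$. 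The equalities $s^2 = x-x^2$ and $t^2 = y-y^2$ are enforced via arbitrary polynomial multipliers.

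Denoting the inequality generators above by $r_1, \ldots, r_m$, the plan is to exhibit, for each of Cases~2 and~3, a Positivstellensatz-style identity
$$g_i - \MtwoRatio \cdot \SDP \;=\; q_0 + \sum_{j=1}^m q_j\, r_j + \lambda_1\bigl(s^2 - x + x^2\bigr) + \lambda_2\bigl(t^2 - y + y^2\bigr),$$
where each $q_j$ is a sum of squares of bounded degree (degree six suffices) and $\lambda_1, \lambda_2$ are arbitrary polynomials. Such a decomposition yields pointwise non-negativity on the feasible set, since the equality terms vanish there and each $q_j r_j$ is non-negative by construction. The $q_j$ and $\lambda_k$ are produced by solving a semidefinite program of fixed degree, exactly as in the proof of $\L g_3 \geq 0$ inside Lemma~\ref{lem:feasible_maximum}.

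The main obstacle is tightness: the candidates $g_2$ and $g_3$ are calibrated so that the ratio $g_i/\SDP$ essentially attains $7/8 = 0.875$ at the worst feasible triple $(x, y, \theta)$, and bounded-degree SOS certificates do not quite close this final sliver exactly. For this reason the target constant is relaxed to $\MtwoRatio = 0.8749$; the roughly $10^{-5}$ slack absorbs both the numerical error of the SDP solver and the degree limitations of the certificate. The explicit $q_j$ and $\lambda_k$ are bulky but can be verified symbolically, and are deferred to the appendix.
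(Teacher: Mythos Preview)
Your proposal is correct and follows the same overall strategy as the paper: Case~1 is disposed of by the identity $g_1=\SDP$, and Cases~2 and~3 are reduced to polynomial non-negativity certified by a bounded-degree sum-of-squares decomposition, with the feasibility constraints of Lemma~\ref{lem:feasibility} and the range of $\cos\theta$ entering as inequality generators. You also correctly identify why the constant is $\MtwoRatioText$ rather than $7/8$.

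The one genuine difference is how the square root $\sqrt{(x-x^2)(y-y^2)}$ is eliminated. The paper observes that in Cases~2 and~3 both sides of the rearranged inequality
\[
(1-xy)-16\cos(\theta)(x-x^2)(y-y^2)\ \ge\ -7\cos(\theta)\sqrt{(x-x^2)(y-y^2)}
\]
are non-negative (since $\cos\theta\le 0$), squares, and obtains a polynomial inequality in only the three variables $(x,y,\cos\theta)$, which is then certified by SoS against the generators $x-x^2$, $y-y^2$, $-\cos\theta$, $xy-(1-x)(1-y)\cos^2\theta$, and $(1-x)(1-y)-xy\cos^2\theta$. You instead lift to five variables via $s=\sqrt{x-x^2}$, $t=\sqrt{y-y^2}$, treat $s^2=x-x^2$ and $t^2=y-y^2$ as equality constraints with free multipliers, and run SoS directly on $g_i-\MtwoRatioText\cdot\SDP$. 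Both are valid: the paper's squaring trick keeps the ambient dimension and the resulting SDP smaller, at the cost of the preliminary sign check; your lifting is more mechanical and avoids that check, at the cost of two extra variables and two equality constraints. Either route closes the argument.
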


\section{\MC with Side Constraints (\cardMC)}
\label{sect:mc-constraints}
\newcommand{\SoS}{\mathrm{SoS}}
\newcommand{\SSsoln}{\mathcal{V}}

In this section we describe how to apply the Sticky Brownian Motion
rounding and the framework of Raghavendra and
Tan~\cite{RaghavendraT12} to the \cardMC
problem in order to give a bi-criteria approximation algorithm whose
running time is non-trivial even when the the number of constraints is
large.

\subsection{Problem Definition and Basics}

Let us recall the relevant notation and definitions.  An instance of
the \cardMC problem is given by
an $n$-vertex graph $G = (V, E)$ with edge weights $a: E \to \Re_+$,
as well as a collection $\FF = \{F_1, \ldots, F_k\}$ of subsets of
$V$, and cardinality bounds $b_1, \ldots, b_k \in \N$. For ease of
notation, we will assume that $V = \{1, \ldots, n\}$. Moreover, we
denote the total edge weight by $a(E) = \sum_{e \in E}{a(e)}$. The
goal in the \cardMC problem is to
find a subset $S \subset V$ that maximizes the weight $a(\delta(S))$
of edges crossing the cut $(S, V\setminus S)$, subject to having $|S
\cap F_i| = b_i$ for all $i \in [k]$.  These cardinality constraints
may not be simultaneously satisfiable, and moreover, when $k$ grows
with $n$, checking satisfiability is
$\mathsf{NP}$-hard~\cite{dischard}. For these reasons, we allow for
approximately feasible solutions. We will say that a set of vertices
$S \subseteq V$ is an $(\alpha, \eps)$-approximation to the \cardMC
problem if $\bigl||S \cap F_i| - b_i \bigr| \le \eps n$ for all $i \in
[k]$, and $a(\delta(S)) \ge \alpha \cdot a(\delta(T))$ for all $T\subset V$
such that $|T \cap F_i| = b_i\ $ for all $i \in [k]$.  In the
remainder of this section we assume that the instance given by $G$,
$\FF$, and $b$ is satisfiable, i.e.~that there exists a set of
vertices $T$ such that $|T \cap F_i| = b_i\ $ for all $i \in [k]$. Our
algorithm may fail if this assumption is not satisfied. If this
happens, then the algorithm will certify that the instance was not
satisfiable.

We start with a simple baseline approximation algorithm, based on
independent rounding. The algorithm outputs an approximately feasible
solution which cuts a constant fraction of the total edge weight. For
this reason, it achieves a good bi-criteria approximation when
the value of the optimal solution $\OPT$ is much smaller than $\eps
a(E)$. This allows us to focus on the case in which $\OPT$ is bigger
than $\eps a(E)$ for our main rounding algorithm. The proof of the
lemma, which follows from standard arguments, appears in the
appendix. 

\begin{restatable}{lemma}{simpleapprox}
  \label{lm:simple-approx}
  Suppose that $n \ge \frac{2\ln(8k/\eps)}{\eps^2}$ and $\eps \le \frac12$. 
  There exists a polynomial time algorithm that on input a satisfiable
  instance $G = (V, E)$, $\FF$, and $b_1, \ldots, b_k$, as defined
  above, outputs a set $S \subseteq V$ such that, with high
  probability, $a(\delta(S)) \ge \frac{\eps}{2} a(E)$, and $\bigl||S \cap F_i|
  - b_i \bigr| \le \eps n$ for all $i \in [k]$. 
\end{restatable}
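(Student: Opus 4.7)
My plan is LP-based randomized rounding with a small perturbation, followed by amplification via independent repetition.

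First, solve in polynomial time the LP feasibility problem $\{y\in[0,1]^n : \sum_{j\in F_i} y_j = b_i\ \forall i\in[k]\}$. Satisfiability of the instance implies this LP has a solution (e.g.~the indicator of any feasible $T$); infeasibility of the LP is a polynomial-time certificate that the instance is unsatisfiable, at which point we declare failure.

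Next, fix a perturbation parameter $\alpha\in(0,\eps)$ (for concreteness $\alpha=2\eps/3$) and sample $S$ by including each vertex $i$ independently with probability $p_i\eqdef(1-\alpha)y_i+\alpha/2\in[\alpha/2,\,1-\alpha/2]$. For each family $F_i$, $\mathbb{E}[|S\cap F_i|]=(1-\alpha) b_i+\alpha|F_i|/2$ lies within $\alpha n/2$ of $b_i$, so Hoeffding's inequality applied to $|S\cap F_i|$ (a sum of at most $n$ independent Bernoullis) plus a union bound over the $k$ families gives
\begin{align*}
\Pr\bigl[\exists i:\ \bigl||S\cap F_i|-b_i\bigr|>\eps n\bigr]\le 2k\exp\!\bigl(-2(\eps-\alpha/2)^2 n\bigr);
\end{align*}
the hypothesis $n\ge 2\ln(8k/\eps)/\eps^2$ makes this a small super-linear power of $\eps$ provided $\alpha$ is bounded away from $2\eps$. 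For the cut value, each edge $(u,v)$ is cut with probability $p_u+p_v-2p_u p_v\ge \alpha(1-\alpha/2)$, attained at $p_u=p_v=\alpha/2$; hence $\mathbb{E}[a(\delta(S))]\ge\alpha(1-\alpha/2)\,a(E)$, which for $\alpha=2\eps/3$ and $\eps\le \tfrac12$ strictly exceeds $\eps a(E)/2$. Since $a(\delta(S))\le a(E)$ deterministically, a reverse Markov inequality applied to $a(E)-a(\delta(S))\in[0,a(E)]$ yields $\Pr[a(\delta(S))\ge\eps a(E)/2]=\Omega(\eps)$. Combining, a single trial meets both requirements with probability at least $\Pr[\text{cut OK}]-\Pr[\text{cardinality fails}]=\Omega(\eps)$; running $N=\Theta(\eps^{-1}\log n)$ independent trials and outputting any sample passing both checks succeeds with high probability by independence across trials.

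\textbf{Main obstacle.} The delicate step is making the cardinality success event and the cut success event coexist with positive probability in a single trial. With the naive choice $\alpha=\eps$, both the Hoeffding tail for the cardinality and the reverse Markov bound for the cut sit at $\Theta(\eps)$ under the hypothesis on $n$, so the two events could in principle be perfectly anti-correlated and one cannot conclude $\Pr[\text{both}]>0$. Pushing $\alpha$ strictly below $\eps$ widens the allowed Hoeffding deviation from $\eps n/2$ to $(\eps-\alpha/2)n$, which sharpens the cardinality failure probability to a power strictly higher than the cut success probability; this asymmetry is exactly what repetition then amplifies to high probability.
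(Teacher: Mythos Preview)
Your proposal is correct and follows essentially the same approach as the paper: solve the LP feasibility system, perturb the fractional solution toward $\tfrac12$, do independent randomized rounding, bound the cardinality deviation by Hoeffding plus a union bound, lower bound the expected cut weight via the pointwise edge-cut probability, turn this into a success probability via (reverse) Markov on $a(E)-a(\delta(S))$, and amplify by repetition.

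The one difference worth noting is your ``main obstacle'' paragraph: the paper simply takes $\alpha=\eps$, and the constants do work out. With $\alpha=\eps$ one has mean shift $\le \eps n/2$, so Hoeffding with deviation $\eps n/2$ gives cardinality failure $\le 2k e^{-\eps^2 n/2}\le \eps/4$ under the hypothesis on $n$; and the per-edge cut probability is $p_u+p_v-2p_up_v\ge \eps(1-\eps/2)$, yielding cut success probability $\ge \eps(1-\eps)/(2-\eps)\ge \eps/3$ for $\eps\le \tfrac12$. Thus $\Pr[\text{both}]\ge \eps/3-\eps/4=\eps/12$, and no extra slack in $\alpha$ is needed. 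Your choice $\alpha=2\eps/3$ is fine too, just not necessary.
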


\subsection{Sum of Squares Relaxation}

Our main approximation algorithm is based on a semidefinite
relaxation, and the sticky Brownian motion. Let us suppose that we are
given the optimal objective value $\OPT$ of a feasible solution:
this assumption can be removed by doing binary search for $\OPT$. We
can then model the problem of finding an optimal feasible solution by
the quadratic program
\begin{align*}
  &\sum_{e=(i,j) \in E}{a(e)(x_i - x_j)^2} \ge \OPT\\
  \text{s.t.} 
  &~~~~\sum_{j \in F_i}{x_j} = b_i &\forall i = 1, \ldots, k\\
  &x_j(1-x_j) = 0 &\forall j = 1, \ldots, k
\end{align*}
Let us denote this quadratic feasibility problem by $Q$. The Sum of
Squares (Lasserre) hierarchy gives a semidefinite program that relaxes
$Q$. We denote by $\SoS_\ell(Q)$ the solutions to the level-$\ell$ Sum
of Squares relaxations of $Q$. Any solution in $\SoS_\ell(Q)$ can be
represented as a collection of vectors $\SSsoln = \{\bv_S: S \subseteq
[n], 0 \le |S| \le \ell\}$. To avoid overly cluttered notation, we
write $\bv_i$ for $\bv_{\{i\}}$; we also write $\bv_0$ for
$\bv_\emptyset$. We need the following properties of $\SSsoln$, valid
as long as $\ell \ge 2$. 
\begin{enumerate}

\item $\bv_0 \cdot \bv_0 = 1$.
\item $\bv_S \cdot \bv_T = \bv_{S'} \cdot \bv_{T'}$ for any
  $S,S',T,T'$ such that $S\cup T = S'\cup T'$ and $|S\cup T| \le
  k$. In particular, $\bv_i\cdot \bv_i = \bv_i \cdot \bv_0$ for any
  $i$. 

\item For any $i$ and $j$ the following inequalities hold:
  \begin{align}
    & 1 \geq \bv _0 \cdot \bv _i + \bv _j \cdot \bv _0 - \bv _i \cdot \bv _j \label{mc-mult-tri1}\\
    & \bv _i \cdot \bv _0 \geq \bv _i \cdot \bv _j  \label{mc-mult-tri2}\\
    & \bv _i \cdot \bv_j \geq 0 \label{mc-mult-tri3}
  \end{align}

\item $\sum_{e=(i,j) \in E}{a(e)\|\bv_i - \bv_j\|^2} \ge \OPT$

\item For any $i$, there exist two solutions $\SSsoln^{i \to 0}$ and
  $\SSsoln^{i \to 1}$ in $\SoS_{\ell - 1}(Q)$ such that, if we denote
  the vectors in   $\SSsoln^{i \to 0}$ by $\bv_S^0$, and the vectors
  in   $\SSsoln^{i \to 1}$ by $\bv_S^1$, we have 
  \[
  \bv_{S} \cdot \bv_0 = (1- \bv_i \cdot \bv_0) \bv^0_{S} \cdot \bv^0_0 + 
  (\bv_i \cdot \bv_0) \bv^1_{S} \cdot \bv^1_0. 
  \]
\end{enumerate}
Moreover, a solution $\SSsoln \in \SoS_\ell$ can be computed in time
polynomial in $n^\ell$. 

Intuitively, we think of $\SSsoln$ as describing a pseudo-distribution
over solutions to $Q$, and we interpret $\bv_S \cdot \bv_T$ as the
pseudo-probability that all variables in $S \cup T$ are set to one, or,
equivalently, as the pseudo-expectation of $\prod_{i \in S \cup
  T}x_i$. Usually we cannot expect that there is any true distribution
giving these probabilities. Nevertheless, the pseudo-probabilities and
pseudo-expectations satisfy some of the properties of actual
probabilities. For example, the transformation from $\SSsoln$ to
$\SSsoln^{i \to b}$ corresponds to \emph{conditioning} $x_i$ to $b$.

We will denote by $x_S = \bv_S \cdot \bv_0$ the marginal value of set
$S$. In particular, we will work with the single-variable marginals
$x_i = x_{\{i\}} = \bv_i \cdot \bv_0$, and will denote $\bx = (x_1,
\ldots, x_n)$. As before, it will be convenient to work with the
component of $\bv_i$ which is orthogonal to $\bv_0$. We define
$\widetilde{\bw}_i = \bv_i - x_i \bv_0$, and $\bw_i =
\frac{1}{\|\widetilde{\bw}_i\|} \widetilde{\bw}_i$. Note that, by the
Pythagorean theorem, $\|\widetilde{\bw}_i\|^2 = x_i-x^2_i$, and $\bv_i
= x_i \bv_0 + \sqrt{x_i - x_i^2}\bw_i$. We define the matrices
$\widetilde{\bW}$ and $\bW$ by $\widetilde{\bW} _{i,j}\eqdef
\widetilde{\bw} _i \cdot \widetilde{\bw} _j$ and $\bW _{i,j}\eqdef \bw
_i \cdot \bw _j$. We can think of $\widetilde{\bW}$ as the covariance
matrix of the pseudodistribution corresponding to the SDP
solution. The following lemma, due to Barak, Raghavendra, and
Steurer~\cite{BarakRS11}, and, independently, to Guruswami and
Sinop~\cite{GuruswamiS11}, shows that any pseudodistribution can be
conditioned so that the covariances are small on average.

\begin{lemma}\label{lm:global-corr}
  For any $\eps_0 \le 1$, and any $\SSsoln \in \SoS_{\ell}(Q)$, where
  $\ell \ge \frac{1}{\eps_0^4} + 2$, there exists an efficiently
  computable $\SSsoln' \in \SoS_{\ell - 1/\eps_0^4}(Q)$, such that
  \begin{equation}\label{eq:avg-cov}
  \sum_{i = 1}^n\sum_{j = 1}^n{\widetilde{\bW}_{i,j}^2} \le \eps_0^4 n^2. 
  \end{equation}
  In particular, $\SSsoln'$ can be computed by conditioning $\SSsoln$
  on $\frac{1}{\eps_0^4}$ variables. 
\end{lemma}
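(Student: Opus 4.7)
The proof proceeds via the standard global correlation rounding argument, using an entropy potential and Pinsker's inequality. The key observation is that although $\SSsoln$ is only a pseudo-distribution, for any pair of indices $i, j$ the four pseudo-moments $\bv_0\cdot\bv_0, \bv_i\cdot\bv_0, \bv_j\cdot\bv_0, \bv_i\cdot\bv_j$ determine a genuine probability distribution on $\{0,1\}^2$ (nonnegativity of its entries is exactly the triangle inequalities \eqref{mc-mult-tri1}--\eqref{mc-mult-tri3}). This lets us define honest two-variable Shannon quantities: the marginal entropies $H_{\SSsoln}(x_i)$, the pairwise mutual informations $I_{\SSsoln}(x_i; x_j)$, and their conditional versions, even though no joint distribution on all $n$ variables exists.

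The plan is to use the potential $\Phi(\SSsoln) = \sum_{i=1}^n H_{\SSsoln}(x_i) \in [0,n]$ and decrease it via conditioning. By property~(5) in the list preceding the lemma, conditioning on a variable $x_i$ yields efficiently computable solutions $\SSsoln^{i\to 0}, \SSsoln^{i\to 1}$ at one lower level of the hierarchy, and $\SSsoln$ itself is the $(1-x_i, x_i)$ convex combination of these on every pseudo-moment. In particular, for every $j$ the pseudo-marginal of $x_j$ in the mixture matches that of $\SSsoln$, so the entropy chain rule applied to the honest joint distribution of $(x_i, x_j)$ gives
\[
H_{\SSsoln}(x_j) - \bigl[(1-x_i) H_{\SSsoln^{i \to 0}}(x_j) + x_i H_{\SSsoln^{i \to 1}}(x_j)\bigr] = I_{\SSsoln}(x_i; x_j).
\]
Summing over $j$, the expected drop in $\Phi$ caused by conditioning on $x_i$ is exactly $\sum_{j} I_{\SSsoln}(x_i; x_j)$.

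Now I would iterate the conditioning $t = \lceil 1/\eps_0^4\rceil$ times, at each step selecting the variable $i$ that maximizes the expected drop, and recursing on a random branch; the intermediate solutions remain feasible for $\SoS_{\ell - s}(Q)$ after $s$ steps. Since $\Phi$ starts at most $n$ and is always nonnegative, the total expected drop is at most $n$, so with positive probability there is a step $s \le t$ at which the current solution $\SSsoln'$ satisfies $\max_i \sum_j I_{\SSsoln'}(x_i; x_j) \le n/t \le n\eps_0^4$. Averaging over $i$ then yields $\sum_{i,j} I_{\SSsoln'}(x_i; x_j) \le n^2 \eps_0^4$. Finally, the standard Pinsker-type inequality for $\{0,1\}$-valued pairs,
\[
I(X; Y) \ge 2\,\mathrm{Cov}(X,Y)^2,
\]
applied to each honest two-variable marginal, gives $\widetilde{\bW}_{i,j}^2 = \mathrm{Cov}_{\SSsoln'}(x_i, x_j)^2 \le \tfrac12 I_{\SSsoln'}(x_i; x_j)$, and summing over all $(i,j)$ yields \eqref{eq:avg-cov} with a factor of two to spare.

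The only delicate point is property~(5), namely that SoS conditioning on a single variable preserves feasibility at one lower level and decomposes every pseudo-moment as the claimed $(1-x_i, x_i)$ convex combination; this is standard for the Lasserre/SoS hierarchy and is what enables the entropy chain-rule identity above. Everything else is bookkeeping applied pairwise, so the absence of a true joint distribution over all $n$ variables is never an obstacle.
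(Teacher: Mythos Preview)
The paper does not give its own proof of this lemma: it is stated as a black box and attributed to \citet{BarakRS11} and \citet{GuruswamiS11}. Your sketch is exactly the standard global-correlation rounding argument from those references---the entropy potential $\sum_i H(x_i)$, the chain-rule identity $H(x_j) - \E_b[H_{\SSsoln^{i\to b}}(x_j)] = I(x_i;x_j)$ applied pairwise (valid because the level-$2$ pseudo-moments on any pair form an honest distribution via the triangle inequalities), the pigeonhole over $\lceil 1/\eps_0^4\rceil$ conditioning steps, and finally Pinsker in the form $I(X;Y)\ge 2\,\mathrm{Cov}(X,Y)^2$ for $\{0,1\}$-valued pairs---and it is correct.

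One cosmetic remark: you phrase the iteration as ``recursing on a random branch'' and then invoke positive probability. For the ``efficiently computable'' clause in the lemma it is cleaner to derandomize: at each step pick both $i$ and the branch $b\in\{0,1\}$ so as to maximize the actual drop in $\Phi$ (one of the two branches always achieves at least the expected drop $\sum_j I(x_i;x_j)$). This yields a deterministic sequence $\SSsoln_0,\ldots,\SSsoln_t$ with $\sum_{s<t}\max_i\sum_j I_{\SSsoln_s}(x_i;x_j)\le n$, so by pigeonhole some $\SSsoln_s$ has $\max_i\sum_j I\le n/t\le n\eps_0^4$, and you are done without any probabilistic existence step.
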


\subsection{Rounding Algorithm}

For our algorithm, we first solve a semidefinite program to compute a
solution in $\SoS_\ell(Q)$, to which we apply
Lemma~\ref{lm:global-corr} with parameter $\eps_0$, which we will
choose later. In order to be able to apply the lemma, we choose $\ell
= \left\lceil\frac{1}{\eps_0^4}\right\rceil + 2$.  The rounding
algorithm itself is similar to the one we used for \MtwoSAT. We
perform a Sticky Brownian Motion with initial covariance $\bW$,
starting at the initial point $\bX_0 = \bx$, i.e.~at the marginals
given by the SDP solution. As variables hit $0$ or $1$, we freeze
them, and delete the corresponding row and column of the covariance
matrix. The main difference from the \MtwoSAT rounding is that we stop
the process at time $\tau$, where $\tau$ is another parameter that we will
choose later. Then, independently for each $i = 1, \ldots, n$, we
include vertex $i$ in the final solution $S$ with probability
$(\bX_\tau)_i$, and output $S$. 

The key property of this rounding that allows us to handle a large
number of global constraints is that, for any $F_i \in \FF$, the value
$\sum_{j \in F_i}(\bX_\tau)_j$ that the fractional solution assigns to
the set $F_i$ satisfies a sub-Gaussian concentration bound around
$b_i$. Note that $\sum_{j \in F_i}(\bX_t)_j$ is a martingale with
expectation equal to $b_i$. Moreover, by Lemma~\ref{lm:global-corr},
the entries of the covariance matrix $\widetilde{\bW}$ are small on
average, which allows us to also bound the entries of the covariance
matrix $\bW$, and, as a consequence, bound how fast the variance of
the martingale increases with time. The reason we stop the walk at
time $\tau$ is to make sure the variance doesn't grow too large: this
freedom, allowed by the Sticky Brownian Motion rounding, is important
for our analysis. The variance bound then implies the sub-Gaussian
concentration of $\sum_{j \in F_i}(\bX_\tau)_j$ around its mean $b_i$,
and using this concentration we can show that no constraint is
violated by too much. This argument crucially uses the fact that our
rounding is a random walk with small increments, and we do not expect
similarly strong concentration results for the random hyperplane
rounding or its variants.

The analysis of the objective function, as usual, reduces to analyzing
the probability that we cut an edge. However, because we start the
Sticky Brownian Motion at $\bx$, which may not be equal to $\mathbf{0}$, our
analysis from Section~\ref{sec:conformal} is not sufficient. Instead,
we use the PDE based analysis from Section~\ref{sec:pde}, which easily
extends to the \MC objective. One detail to take care of is that,
because we stop the walk early, edges incident on vertices that have
not reached $0$ or $1$ by time $\tau$ may be cut with much smaller
probability than their contribution to the SDP objective. To deal with
this, we choose the time $\tau$ when we stop the walk large enough, so that any
vertex has probability at least $1-\mathrm{poly}(\eps)$ to have
reached $\{0,1\}$ by time $\tau$. We show that this happens for $\tau
= \Theta(\log(1/\eps))$. This value of $\tau$ is small enough so that
we can usefully bound the variance of $\sum_{j \in  F_i}(\bX_\tau)_i$
and prove the sub-Gaussian concentration we mentioned above.

Let us recall some notation that will be useful in our analysis. We
will use $\tau_i$ for the first time $t$ that $\bX_t$ hits a face of
$[0,1]^n$ of dimension $n-i$; then, $\bX_{\tau_n} \in \{0,1\}^n$. We
also use $\bW_t$ for the covariance used at time step $t$, which is
equal to $\bW$ with rows and columns indexed by $\{i: (\bX_t)_i \in
\{0,1\}\}$ zeroed out.

As discussed, our analysis relies on a martingale concentration
inequality, and the following lemma, which is proved with the methods
we used above for the \MtwoSAT problem. A proof sketch can be found
in the appendix.
\begin{restatable}{lemma}{mcall}\label{lm:mc-all-marginals}
  For the SDP solution $\SSsoln$ and the Sticky Brownian Motion
  $\bX_t$ described above, and for any pair $\{i,j\}$ of vertices
  \[
  \Pr[(\bX_{\tau_n})_i \neq (\bX_{\tau_n})_j]
  \ge 
  \mcallratio \cdot
  \|\bv_i - \bv_j\|^2.
  \]
\end{restatable}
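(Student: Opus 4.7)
The plan is to adapt the PDE-and-maximum-principle analysis of Section~\ref{sec:pde} from \MtwoSAT to the \MC separation objective, while accommodating the asymmetric starting point $\bX_0 = \bx$ given by the SDP marginals. First, I would project the Sticky Brownian Motion onto the coordinates of the pair $\{i,j\}$ to obtain a correlated two-dimensional process started at $(x, y) = (x_i, x_j) \in [0,1]^2$ with correlation $\rho = \bw_i \cdot \bw_j$, and define $u(x,y)$ to be the probability, as a function of the starting point, that the two coordinates are absorbed at opposite values in $\{0,1\}$. A direct computation using $\bv_i = x_i \bv_0 + \sqrt{x_i - x_i^2}\,\bw_i$ shows
\[
\|\bv_i - \bv_j\|^2 = x + y - 2xy - 2\rho \sqrt{x(1-x)\, y(1-y)},
\]
which is the quantity that must be dominated by the factor $\mcallratio$.

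Next, I would characterize $u$ by an elliptic PDE. Once the two-dimensional process first hits a side of $[0,1]^2$, the surviving coordinate becomes a one-dimensional martingale absorbed at the two corners of that side, so optional stopping gives the unified boundary value $u(x,y) = x + y - 2xy$ on $\bd[0,1]^2$. Then Theorem~\ref{thm:oksendal-max-principle}, applied with the $2\times 2$ correlation matrix of the projected process, identifies $u$ as the unique solution of $\L u = 0$ on $(0,1)^2$ with this boundary data, where $\L = \frac{\partial^2}{\partial x^2} + \frac{\partial^2}{\partial y^2} + 2\rho \frac{\partial^2}{\partial x \partial y}$. Mirroring the recipe in Section~\ref{sec:max-principle}, I would then exhibit candidate lower bounds $g(x,y)$, separately for a few sub-ranges of $\rho$, satisfying (i)~$g(x,y) \le x + y - 2xy$ on $\bd[0,1]^2$, (ii)~$\L g \ge 0$ in $(0,1)^2$, and (iii)~$g(x,y) \ge \mcallratio \cdot \|\bv_i - \bv_j\|^2$ for all SDP-feasible triples $(x,y,\rho)$. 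Conditions (i) and (ii) together with Corollary~\ref{cor:maximum} give $g \le u$, and (iii) then delivers the lemma. A natural ansatz is $g(x,y) = x + y - 2xy$ plus a correction term such as $c_1(\rho)\sqrt{x(1-x)y(1-y)}$ or $c_2(\rho)\,x(1-x)y(1-y)\cdot q(x,y)$ with $q$ a low-degree symmetric polynomial, chosen so that the correction vanishes on $\bd[0,1]^2$ by construction.

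The main obstacle is certifying (ii) and (iii) on the feasibility region cut out by $0 \le x,y \le 1$, the relevant sub-interval of $\rho$, and the $\ell_2^2$-triangle-inequality constraints of the SDP (the direct analogue of Lemma~\ref{lem:feasibility}). To clear the square roots, I would introduce auxiliary variables $s = \sqrt{x(1-x)}$ and $t = \sqrt{y(1-y)}$ together with polynomial relations $s^2 = x(1-x)$, $t^2 = y(1-y)$, reducing both inequalities to polynomial statements in $(x,y,s,t,\rho)$. Exactly as in the \MtwoSAT proofs of Lemma~\ref{lem:feasible_maximum} and Lemma~\ref{lem:ratio_maximum}, each such inequality is verified by a bounded-degree sum-of-squares certificate of the form $\sigma_0 + \sum_i \sigma_i r_i$, where the $r_i$ are the polynomial constraints defining the domain and the $\sigma_i$ are SOS polynomials produced by a constant-sized semidefinite program. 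The slight drop from the symmetric \MC ratio of \MCRATIO to \mcallratio should correspond to the slack required for a single certificate to cover the entire family of asymmetric starting points $(x,y)$ compatible with a given $\rho$.
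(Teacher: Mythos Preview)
Your setup matches the paper's: project to the two coordinates, observe that the separation probability $u$ solves $\L u = 0$ on $(0,1)^2$ with boundary data $u(x,y) = x + y - 2xy$, write $\|\bv_i - \bv_j\|^2$ in terms of $(x,y,\rho)$, and restrict to configurations that satisfy the $\ell_2^2$-triangle constraints. Where you diverge is in how the ratio $u(x,y)/\|\bv_i - \bv_j\|^2 \ge \mcallratio$ is actually established. The paper does \emph{not} push the maximum-principle/SOS template through here; instead it numerically solves the Dirichlet problem \eqref{eq:mcsc-pde}--\eqref{eq:mcsc-pde-bd} on a grid of configurations $(x,y,\theta)$ (granularity $0.02$) and reports the minimum ratio as $\mcallratio$. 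Your route would, if carried out, yield a fully analytic certificate and is in that sense stronger, but note that for \MtwoSAT the same SOS strategy only certified $\MtwoRatio$ versus the numerical $0.921$---so there is no a~priori guarantee that low-degree candidate functions of the form you suggest will reach the specific constant $\mcallratio$ rather than some weaker bound. In short: the PDE characterization and the feasibility constraints are exactly as in the paper, but the final step in the paper is a numerical computation, not a maximum-principle argument.
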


The next lemma shows that the probability that any coordinate is fixed
by time $t$ drops exponentially with $t$. We use this fact to argue
that by time $\tau = \Theta(\log(1/\eps))$ the endpoints of any edge
have probability at least $1-\mathrm{poly}(\eps)$ to be fixed, and,
therefore, edges are cut with approximately the same probability as if
we didn't stop the random walk early, which allows us to use
Lemma~\ref{lm:mc-all-marginals}. The proof of this lemma, which is
likely well-known, appears in the appendix. 
\begin{restatable}{lemma}{hittingtime}\label{lm:hitting-time}
  For any $i$, and any integer $t \ge 0$, 
  $\Pr[\forall s \le t: 0 < (\bX_s)_i < 1] < 4^{-t}$.
\end{restatable}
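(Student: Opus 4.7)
The plan is to reduce the claim to a statement about a single standard one-dimensional Brownian motion, and then extract exponential decay from the Markov property. The key observation is that, by construction of the time-varying covariance used by the Sticky Brownian Motion, as long as coordinate $i$ is still active the diagonal entry satisfies $(\bW_s)_{ii} = \bW_{ii} = \bw_i \cdot \bw_i = 1$, independently of which other coordinates have already been frozen at $\{0,1\}$. Consequently $\{(\bX_s)_i\}_{s \ge 0}$, stopped at its first exit time $\tau \eqdef \inf\{s \ge 0 : (\bX_s)_i \in \{0,1\}\}$ from $(0,1)$, is distributed exactly as a standard one-dimensional Brownian motion started at $x_0 \eqdef (\bX_0)_i \in [0,1]$ and stopped upon hitting $\{0,1\}$. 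Thus the probability we must bound equals $\Pr_{x_0}[\tau > t]$.

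With this reduction in hand, set $q \eqdef \sup_{x \in [0,1]} \Pr_x[\tau > 1]$. The strong Markov property applied at the integer times $s = 1, 2, \ldots, t-1$, together with the fact that on the event $\{\tau > s\}$ the process $(\bX_s)_i$ lies in $(0,1)$, immediately gives $\Pr_{x_0}[\tau > t] \le q^t$ for every integer $t \ge 1$. So the lemma boils down to showing $q < 1/4$. I would further subdivide by applying the Markov property at intermediate times $1/3$ and $2/3$, obtaining $q \le r^3$ where $r \eqdef \sup_{x \in [0,1]} \Pr_x[\tau > 1/3]$. For any starting point $x \in [0,1]$ the crude estimate $\Pr_x[\tau > 1/3] \le \Pr[B_{1/3} \in (-x, 1-x)]$ holds, and since the Gaussian density is symmetric and decreasing in $|y|$, this is maximized over intervals of length $1$ by the one centered at $0$. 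Hence $r \le \Pr[\,|B_{1/3}| < 1/2\,] = 2\Phi(\sqrt{3}/2) - 1 < 0.62$, giving $q \le 0.62^3 < 0.24 < 1/4$ as desired.

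The only subtlety, and the main place to be careful, is the numerical step: the naive one-step bound $\Pr_{1/2}[\tau > 1] \le \Pr[\,|B_1| < 1/2\,] \approx 0.383$ exceeds $1/4$ and therefore cannot be plugged directly into the Markov iteration. The three-subinterval refinement above is what brings the one-unit survival probability safely below $1/4$; alternatively, one can invoke the spectral decomposition of the Dirichlet Laplacian on $(0,1)$, from which $\Pr_{1/2}[\tau > 1] \le (4/\pi)\,e^{-\pi^2/2} \approx 0.009$, again comfortably below $1/4$. Either route closes the argument; everything else (the identification of an active coordinate with a 1D Brownian motion, and the Markov iteration) is standard once the setup is in place.
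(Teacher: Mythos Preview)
Your proof is correct and follows the same overall architecture as the paper: reduce coordinate $i$ to a standard one-dimensional Brownian motion on $[0,1]$ via the observation that $(\bW_s)_{ii}=1$ while the coordinate is active, and then iterate the Markov property at integer times to get geometric decay $q^t$ with $q=\sup_{x}\Pr_x[\tau>1]$.

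The only difference is in how $q<1/4$ is established. The paper does this in one line: for Brownian motion on $[0,1]$ started at $z$, the classical identity $\E_z[\tau]=z(1-z)\le 1/4$ (see, e.g., \cite[Theorem~2.49]{MP10}) combined with Markov's inequality gives $\Pr_z[\tau>1]\le \E_z[\tau]<1/4$ directly. Your route---subdividing $[0,1]$ time into three pieces and bounding each by a Gaussian tail, or alternatively invoking the Dirichlet spectral gap---also works and yields a slightly sharper numerical constant, but it is more labor than needed. The moral is that the moment bound $\E[\tau]\le 1/4$ already encodes exactly the $4^{-t}$ rate, so no splitting or spectral input is necessary.
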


The following concentration inequality is our other key lemma. The
statement is complicated by the technical issue that the concentration
properties of the random walk depend on the covariance matrix $\bW$,
while Lemma~\ref{lm:global-corr} bounds the entries of
$\widetilde{\bW}$. When $x_i(1-x_i)$ or $x_j(1-x_j)$ is small,
$\widetilde{\bW}_{i,j}$ can be much smaller than $\bW_{i,j}$. Because
of this, we only prove our concentration bound for sets of vertices
$i$ for which $x_i(1-x_i)$ is sufficiently large. For those $i$ for which
$x_i(1-x_i)$ is small, we will instead use 
the fact that such $x_i$ are already nearly integral to prove a
simpler concentration bound.

\begin{lemma}\label{lm:concentration}
  Let $\eps_0, \eps_1 \in [0,1]$, and $n \ge
  \frac{\eps_1}{8\tau\eps_0^2}$. Define $V_{> \eps_1} = \{i: 2x_i(1-x_i)
  > \eps_1\}$. For any set $F \subseteq V_{> \eps_1}$, and any $t \ge
  0$, the random set $S$ output by the rounding algorithm satisfies
  \[
  \Pr\left[\Bigl| |F \cap S| - \sum_{i \in F}{x_i}\Bigr| \ge t\eps_0 n  \right]
  \le 
  4\exp\left(-\frac{\eps_1 t^2}{4\tau}\right).
  \]
\end{lemma}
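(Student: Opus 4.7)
The plan is to decompose the deviation by the triangle inequality into a rounding error and a martingale error, and bound each with the target probability. Let $Y_t \eqdef \sum_{i \in F}(\bX_t)_i$, so $Y_0 = \sum_{i \in F} x_i$, and note that conditional on $\bX_\tau$, the size $|F \cap S|$ is a sum of $|F|$ independent Bernoulli variables with means $(\bX_\tau)_i$. Then
\[
\Bigl||F \cap S| - \sum_{i \in F} x_i\Bigr|
\;\le\;
\bigl||F \cap S| - Y_\tau\bigr| + |Y_\tau - Y_0|,
\]
and it suffices to bound each term by $t\eps_0 n/2$ with probability at most $2\exp(-\eps_1 t^2/(4\tau))$ and apply a union bound.

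For the rounding term, Hoeffding's inequality applied conditionally on $\bX_\tau$ gives
\[
\Pr\bigl[\,||F\cap S| - Y_\tau| \ge t\eps_0 n/2 \,\big|\, \bX_\tau\bigr] \le 2\exp(-t^2\eps_0^2 n/2),
\]
and the hypothesis $n \ge \eps_1/(8\tau\eps_0^2)$ forces the right-hand side to be at most $2\exp(-\eps_1 t^2/(4\tau))$. For the martingale term, $Y_t$ is a continuous martingale (inherited from $\bX_t$) with quadratic variation
\[
[Y]_\tau \;=\; \int_0^\tau \mathbf{1}_F^\top \bW_s \mathbf{1}_F\, ds,
\]
where $\bW_s$ is the principal submatrix of $\bW$ on coordinates still active at time $s$, padded with zeros elsewhere. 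The key step is a uniform bound $\mathbf{1}_F^\top \bW_s \mathbf{1}_F \le 2\eps_0^2 n^2/\eps_1$. By Cauchy-Schwarz,
\[
\bigl|\mathbf{1}_F^\top \bW_s \mathbf{1}_F\bigr|
\;\le\; |F|\Bigl(\sum_{i,j \in F}\bW_{ij}^2\Bigr)^{1/2};
\]
since $i, j \in F \subseteq V_{>\eps_1}$ implies $x_i(1-x_i) x_j(1-x_j) > (\eps_1/2)^2$, we get $\bW_{ij}^2 \le 4\widetilde{\bW}_{ij}^2/\eps_1^2$, and Lemma~\ref{lm:global-corr} supplies $\sum_{i,j}\widetilde{\bW}_{ij}^2 \le \eps_0^4 n^2$. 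Combining, $[Y]_\tau \le 2\tau\eps_0^2 n^2/\eps_1$, and the Dubins--Schwarz representation (equivalently, the stochastic exponential $\exp(\lambda Y_t - \lambda^2 [Y]_t/2)$ is a supermartingale) certifies that $Y_\tau - Y_0$ is sub-Gaussian with variance proxy $[Y]_\tau$. This yields $\Pr[|Y_\tau - Y_0| \ge t\eps_0 n/2] \le 2\exp(-\eps_1 t^2/(c\tau))$ for a universal constant $c$, and tuning the split between the two deviation budgets absorbs the numeric factor into the claimed form.

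The main obstacle is the uniform bound on $\mathbf{1}_F^\top \bW_s \mathbf{1}_F$. Lemma~\ref{lm:global-corr} controls $\widetilde{\bW}$, not $\bW$, so translating between them costs a factor $1/\sqrt{x_i(1-x_i)}$ per coordinate; this is exactly why the hypothesis restricts $F$ to $V_{>\eps_1}$, where this factor is tame. A secondary subtlety is that freezing coordinates zeros out rows and columns of $\bW$, and because off-diagonal entries can be negative this could in principle \emph{increase} $\mathbf{1}_F^\top \bW_s \mathbf{1}_F$ along the process, but the crude Cauchy-Schwarz bound above is insensitive to which active set is in force at time $s$ and hence holds uniformly.
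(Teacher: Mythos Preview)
Your proposal is correct and follows the same strategy as the paper: decompose into a Hoeffding piece for the independent rounding and a sub-Gaussian martingale piece for $Y_\tau - Y_0$, with the quadratic variation controlled via Cauchy--Schwarz and Lemma~\ref{lm:global-corr}, then combine by a union bound. On the step comparing $\bW_s$ to $\bW$, your entrywise argument (that $|(\bW_s)_{ij}| \le |\bW_{ij}|$, so the Cauchy--Schwarz bound $|F|(\sum_{i,j\in F}\bW_{ij}^2)^{1/2}$ dominates uniformly in $s$) is actually more careful than the paper's, which asserts $\bW - \bW_s \succeq 0$; that assertion is false in general (take $\bW = \bigl(\begin{smallmatrix}1&-0.9\\-0.9&1\end{smallmatrix}\bigr)$, zero out the second row and column, and test against $\by=(1,1)^\top$), though the bound the paper needs survives via exactly your argument.
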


We give the proof of Lemma~\ref{lm:concentration} after we finish the
proof of Theorem~\ref{thm:mc-constraints}, restated below for convenience.

\mcconstraints*
\begin{proof}
  The algorithm outputs either the set $S$ output by the Sticky
  Brownian Rounding described above, or the one guaranteed by
  Lemma~\ref{lm:simple-approx}, depending on which one achieves a cut
  of larger total weight. If $\OPT \le \frac{\eps}{2} a(E)$, then 
  Lemma~\ref{lm:simple-approx} achieves the approximation we are
  aiming for. Therefore, for the rest of the proof, we may assume that
  $\OPT \ge \frac{\eps}{2} a(E)$, and that the algorithm outputs the set
  $S$ computed by the Sticky Brownian Rounding. Then, it is enough to
  guarantee that, with high probability, 
  \begin{equation}\label{eq:apx-additive}
  a(\delta(S)) \ge \mcallratio \cdot \OPT - \frac{\eps^2}{2}a(E).
  \end{equation}

  Let us set $\eps_1 = \eps^2\eps_0$, and define, as above, $V_{>
    \eps_1} = \{i: 2x_i(1-x_i) > \eps_1\}$ and let $V_{\le \eps_1} =
  \{i: 2x_i(1-x_i) \le \eps_1\}$. Let $\bY$ be the indicator vector of
  the set $S$ output by the algorithm.  Observe that, for each $i$,
  since $\bY_i$ is a Bernoulli random variable with expectation $x_i$,
  we have $\E\left[\sum_{i \in V_{\le\eps_1}}{|\bY_i - x_i|}\right]
  \le \eps_1n$, and, therefore,
  \[
  \Pr\left[\sum_{i \in V_{\le\eps_1}}{|\bY_i - x_i|} \ge
    16\eps_0n\right] 
  \le \frac{\eps^2}{16}.
  \]
  Then, for any $F_i \in \FF$, by Lemma~\ref{lm:concentration} applied
  to $F_i \cap V_{> \eps_1}$, we have
  \[
  \Pr\left[\bigl| |F_i \cap V_{> \eps_1} \cap S| - 
    \sum_{i \in F \cap V_{> \eps_1}}{x_i}\bigr| \ge
    \sqrt{\frac{4\tau}{\eps_1}\ln \frac{32k}{\eps^2}}\eps_0 n  \right]
  \le 
  \frac{\eps^2}{16k}.
  \]
  Therefore, with probability at least $1 - \frac{\eps^2}{8}$, for all $i \in [k]$ we have
  \begin{align*}
  \left||F_i \cap S| - \sum_{i \in F}{x_i}\right|
  &\le 
  \left||F_i \cap V_{\le \eps_0} \cap S| - \sum_{i \in F\cap V_{\le \eps_0}}{x_i}\right|
  +
  \left||F_i \cap V_{>\eps_0} \cap S| - \sum_{i \in F\cap V_{> \eps_0}}{x_i}\right|\\
  &\le 
  \sum_{i \in F\cap V_{\le\eps_0}}{|\bY_i - x_i|} + \left||F_i \cap  V_{>\eps_0} \cap S| - \sum_{i \in F\cap V_{> \eps_0}}{x_i}\right|\\
  &\le
  \left(16 + \sqrt{\frac{4\tau}{\eps^2\eps_0}\ln \frac{32k}{\eps^2}}\right)\eps_0n.
  \end{align*}
  This means that, with probability at least $1 - \frac{\eps^2}{8}$,
  $S$ satisfies all the constraints up to additive error $\eps n$, as
  long as 
  \[
  \eps_0 \le \min\Biggl\{
    \frac{\eps}{32}, 
    \frac{\eps^4}{4 \sqrt{\tau \ln\frac{32k}{\eps^2}}}
  \Biggr\}.
  \]

  It remains to argue about the objective function. For $\tau \ge
  \log_2\frac{2\sqrt{2}}{\eps}$, Lemma~\ref{lm:hitting-time} implies
  that, for any vertex $i$, $\Pr[(\bX_\tau)_i \not \in \{0,1\}] \le
  4^{-\tau} \le \frac{\eps^2}{8}$.
  By
  Lemma~\ref{lm:mc-all-marginals}, any pair of vertices $\{i,j\}$ is
  separated with probability
  \[
  \Pr[(\bX_{\tau_n})_i \neq (\bX_{\tau_n})_j]\ge  \mcallratio \cdot \|\bv_i - \bv_j\|^2,
  \]
  where we recall that, for edge $e = (i,j)$, $a(e)\|\bv_i -
  \bv_j\|^2$ is the contribution of $e$ to the objective value.  Then,
  \begin{align*}
  \Pr[(\bX_{\tau})_i \neq (\bX_{\tau})_j]
  &\ge 
  \Pr[(\bX_{\tau_n})_i \neq (\bX_{\tau_n})_j, 
  (\bX_\tau)_i = (\bX_{\tau_n})_i, (\bX_\tau)_j = (\bX_{\tau_n})_j]\\
  &= 
  \Pr[(\bX_{\tau_n})_i \neq (\bX_{\tau_n})_j, 
  (\bX_\tau)_i \in \{0,1\}, (\bX_\tau)_j \in \{0,1\}]\\
  &\ge 
  \Pr[(\bX_{\tau_n})_i \neq (\bX_{\tau_n})_j] - \frac{\eps^2}{4}\\
  &\ge 
  \mcallratio\cdot \|\bv_i - \bv_j\|_2^2 - \frac{\eps^2}{4}.
  \end{align*}
  Therefore, $\E[a(\delta(S))] \ge \mcallratio \cdot \OPT -
  \frac{\eps^2}{4} a(E)$. By Markov's inequality applied to $a(E) -
  a(\delta(S))$, 
  \[
  \Pr\Bigl[a(\delta(S)] < \mcallratio \cdot \OPT -  \frac{\eps^2}{2} a(E)\Bigr] 
  < 
  \frac{1 + \frac{\eps^2}{4}}{1 + \frac{\eps^2}{2}}
  < 1 - \frac{\eps^2}{5}.
  \]
  In conclusion, we have that with probability at least
  $\frac{3}{40}\eps^2$, \eqref{eq:apx-additive} is satisfied, and all
  global constraints are satisfied up to an additive error of $\eps
  n$. The probability can be made arbitrarily close to $1$ by
  repeating the entire algorithm $O(\eps^{-2})$ times. To complete the
  proof of the theorem, we can verify that the running time is
  dominated by the time required to find a solution in $\SoS_\ell(Q)$,
  which is polynomial in $n^\ell$, 
  where $\ell = O(\eps_0^{-4}) = \mathrm{poly}(\log(k)/\eps)$. 
\end{proof}

We finish this section with the proof of Lemma~\ref{lm:concentration}

\begin{proof}[Proof of Lemma~\ref{lm:concentration}]
  Since each $i$ is included in $S$ independently with probability
  $(\bX_\tau)_i$, by Hoeffding's inequality we have
  \[
  \Pr\left[\Bigl| |F \cap S| - \sum_{i \in F}{(\bX_\tau)_i}\Bigr| \ge t\eps_1 n  \right]
  \le 
  2e^{-2\eps_1^2 t^2 n} \le    2\exp\left(-\frac{\eps_1 t^2}{4\tau}\right),
  \]
  where the final inequality follows by our assumption on
  $n$. Therefore, it is enough to establish
  \begin{equation}
    \label{eq:martingle-conc}
    \Pr\left[\Bigl|\sum_{i \in F}{(\bX_\tau)_i - x_i}\Bigr| \ge t\eps_1 n  \right]
    \le   2\exp\left(-\frac{\eps_1 t^2}{4\tau}\right).
  \end{equation}
  Suppose $\by \in \{0,1\}^n$ is the indicator vector of $F$ so that
  \[\by^\top (\bX_\tau - \bX_0) = \by^\top (\bX_\tau - \bx)= \sum_{i \in F}{((\bX_\tau)_i - x_i)}.\] A
  standard calculation using It\^{o}'s lemma (see Exercise
  4.4.~in~\cite{Oksendal-SDE}) shows that, for any $\lambda \ge 0$,
  the random process
  \[
  Y_t = \exp\left(\lambda \by^\top (\bX_t - \bx) 
    -\frac{\lambda^2}{2} \int_0^t (\by^\top  \bW_s \by) ds \right)
  \]
  is a martingale with starting state $Y_0 = 1$. Since, for any $s$, $\bW_s$ equals $\bW$ with some
  rows and columns zeroed out, we have that $\bW - \bW_s$ is positive
  semidefinite, and $\by^\top  \bW_s \by \le \by^\top
  \bW\by$. Therefore, 
  \[
  \E\left[\exp\Bigl(\lambda \by^\top (\bX_\tau- \bx) 
    - t \frac{\lambda^2}{2} \by^\top \bW \by \Bigr)\right] 
  \le \E[Y_\tau] = 1. 
  \]
  Rearranging, this gives us that, for all $\lambda \ge 0$,
  \begin{equation}\label{eq:mgf}
  \E[e^{\lambda \by^\top (\bX_\tau - \bx)}] \le 
  \E[e^{\tau \lambda^2 \by^\top \bW \by / 2}]
  \le e^{\tau\lambda^2 \by^\top \bW \by / 2}.
  \end{equation}
  We can bound $\by^\top \bW \by$ using Cauchy-Schwarz, the assumption
  that $2x_i (1-x_i) > \eps_1$ for all $i \in F$, and
  \eqref{eq:avg-cov}:
  \begin{align*}
    \by^\top \bW \by &= \sum_{i \in F}\sum_{j \in F} {\bW_{i,j}}\\
    &\le |F|\left(\sum_{i \in F}\sum_{j \in F}{\bW_{i,j}^2}\right)^{1/2}\\
    &= |F|\left(\sum_{i \in F}\sum_{j \in F}{\frac{\widetilde{\bW}_{i,j}^2}{x_i x_j (1-x_i) (1-x_j)}}\right)^{1/2}\\
    &< \frac{2n}{\eps_1} \left(\sum_{i \in F}\sum_{j \in F} {\widetilde{\bW}_{i,j}^2}\right)^{1/2}\\
    &\le \frac{2\eps_0^2 n^2}{\eps_1}.
  \end{align*}
  Plugging back into \eqref{eq:mgf}, we get 
  $\E[e^{\lambda \by^\top  (\bX_\tau - \bx)}] \le e^{\tau\lambda^2 \eps_0^2  n^2/\eps_1}$. 
  The  standard exponential moment argument then implies \eqref{eq:martingle-conc}.
\end{proof}

\section{Extensions of the Brownian Rounding}\label{sec:general}
In this section, we consider two extensions of the Brownian rounding algorithm.
We also present numerical results for these variants showing improved
performance over the sticky Brownian Rounding analyzed in previous
sections.

\subsection{Brownian Rounding with Slowdown}\label{sec:slowdown}

As noted in~\cref{sec:conformal}, the Sticky Brownian rounding algorithm does
not achieve the optimal value for the \MC problem. A natural question is to ask
if we can modify the algorithm to achieve the optimal constant. In this
section, we will show that a simple modification achieves this ratio up to at
least three decimals. Our results are computer-assisted as we solve partial
differential equations using finite element methods. These improvement indicate
that variants of the Brownian Rounding approach offer a direction to obtain
optimal SDP rounding algorithms for \MC problem as well as other CSP problems.

In the sticky Brownian motion, the covariance matrix $W_t$ is a constant, until
some vertex's marginals $(\bX_t)_i$ becomes $\pm1$. At that point, we abruptly
zero the $i^{th}$ row and column. In this section, we analyze the algorithm
where we gradually dampen the step size of the Brownian motion as it approaches the
boundary of the hypercube, until it becomes $0$ at the boundary.
We call this process a ``Sticky Brownian Motion with Slowdown."

Let $(\bX_t)_i$ denote the marginal value of vertex $i$ at time $t$. Initially
$(\bX_0)_i=0$.  First, we describe the discrete algorithm  which will provide
intuition but will also be useful to those uncomfortable with Brownian motion and
diffusion processes.  At each time step, we will take a step whose length is
scaled by a factor of $(1-(\bX_t)_i^2)^{\alpha}$ for some constant $\alpha$. 
In particular, the marginals will evolve according to the equation:
\begin{align}
    (\bX_{t+dt})_i &= (\bX_t)_i + (1-(\bX_t)_i)^2)^{\alpha/2} \cdot \big( \bw_i \cdot \mathbf{G}_t \big) \cdot \sqrt{dt}.  \label{def:slowdown}
\end{align}
where $\mathbf{G}_t$ is distributed according to an $n$-dimensional Gaussian and $dt$ is
a small discrete step by which we advance the time variable.
When $\bX_t$ is sufficiently close to $-1$ or $+1$, we round it to the
nearest one of the two: from then on it will stay fixed because of the
definition of the process, i.e.~we will have $(\bX_{s})_i = (\bX_t)_i$ for
all $s > t$.

More formally, $\bX_t$ is defined as an \Ito diffusion process which satisfies the
stochastic differential equation
\begin{align}
    d\bX_t = \mathbf{A}(\bX_t)\cdot \bW^{\nicefrac[]{1}{2}} \cdot d\bB_t  \label{eqn:sde}
\end{align} where $\bB_t$ is
the standard Brownian motion in $\mathbb{R}^n$ and $\mathbf{A}(\bX_t)$ is the diagonal
matrix with entries $[\mathbf{A}(\bX_t)]_{ii} = (1-(\bX_t)_i^2)^{\alpha/2}$. Since
this process is continuous, it becomes naturally sticky when some
coordinate $(\bX_t)_i$ reaches $\set{-1,1}$. 

Once again, it suffices to restrict our attention to the two dimensional case where we
analyze the probability of cutting an edge $(i,j)$ and we will assume that
\[ \tilde{\bW}=\begin{pmatrix} 1 & \cos (\theta)
            \\ \cos (\theta) & 1 \end{pmatrix},
        \] where $\theta$ is the angle between $\bw _i$ and $\bw _j$.

Let $\tau$ be the first time when $\bX_t$ hits the boundary $\bd[-1,1]^2$.
Since the walk slows down as it approaches the boundary, it is worth asking if
$\E[\tau]$ is finite. In~\Cref{lem:slowdown-stop}, we show that $\E[\tau]$ is
finite for constant $\alpha$.

Let $u(x,y)$ denote the probability of the Sticky Brownian Walk algorithm starting
at $(x,y)$ cutting an edge, i.e.~the walk is absorbed in either
$(+1,-1)$ or $(-1,+1)$. It is easy to give a precise formula for $u$
at the boundary as the algorithm simplifies
to a one-dimensional walk. Thus, $u(x,y)$ satisfies the boundary condition $\phi(x,y) = (1-xy)/2$
for all points $(x,y) \in bd[-1,1]^2$.
For a given  $(x,y) \in \Int [-1,1]^2 $, we can say
\[ u(x,y) = \E^{(x,y)}[ \phi(\tilde{\bX}_{\tau}(i), \tilde{\bX}_{\tau}(j)) ], \]
where $\E^{(x,y)}$ denotes the expectation of diffusion process that begins at $(x,y)$. Informally,
$u(x,y)$ is the expected value of $\phi$ when first hitting $\partial [-1,1]^2$
conditioned on starting at point $(x,y)$. Observe that the probability that the algorithm
will cut an edge is given by $u(0,0)$.

The key fact about $u(x,y)$ that we use is that it is the unique solution
to a Dirichlet Problem, formalized in~\Cref{lem:dirichlet} below.

\begin{lemma} \label{lem:dirichlet}
    Let $\L^{\alpha}$ denote the operator
\[ \L^{\alpha} =  (1-x^2)^{\alpha} \frac{\partial ^2}{\partial x^2}
                 + 2 \cos(\theta) (1-x^2)^{\alpha/2} (1-y^2)^{\alpha/2} \frac{\partial^2}{\partial x \partial y}
             + (1-y^2)^{\alpha} \frac{\partial ^2}{\partial y^2}
                ,\]
                then the function $u(x,y)$ is the unique solution to the Dirichlet Problem:
                \begin{align*}
                    \L^{\alpha}[u] (x,y) &=0  && \forall(x,y) \in \Int([-1,1]^2) \\
                    \lim_{\substack{(x,y)\rightarrow (\tilde{x},\tilde{y}),\\ (x,y) \in \Int([-1,1]^2)}} u(x,y) &= \phi(\tilde{x},\tilde{y})
                                                                                && \forall (\tilde{x},\tilde{y}) \in \bd[-1,1]^2.
                \end{align*}

\end{lemma}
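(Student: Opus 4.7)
The plan is to apply the standard \Ito-diffusion / PDE correspondence in the form of Theorem~\ref{thm:oksendal-max-principle}, after (i) matching the generator of $\bX_t$ with $\frac{1}{2}\L^{\alpha}$, and (ii) handling the fact that $\L^{\alpha}$ degenerates on $\bd[-1,1]^2$ via an exhaustion argument.

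Writing $\mathbf{\Sigma}(\bx) = \mathbf{A}(\bx)\bW^{\nicefrac[]{1}{2}}$, the SDE \eqref{eqn:sde} reads $d\bX_t = \mathbf{\Sigma}(\bX_t)\,d\bB_t$, and a direct computation gives
\[
\mathbf{\Sigma}(\bx)\,\mathbf{\Sigma}(\bx)^{\top} = \mathbf{A}(\bx)\,\bW\,\mathbf{A}(\bx) = \begin{pmatrix} (1-x^2)^{\alpha} & \cos(\theta)(1-x^2)^{\alpha/2}(1-y^2)^{\alpha/2} \\ \cos(\theta)(1-x^2)^{\alpha/2}(1-y^2)^{\alpha/2} & (1-y^2)^{\alpha} \end{pmatrix}.
\]
Taking $a_{ij} = \tfrac{1}{2}(\mathbf{\Sigma}\mathbf{\Sigma}^{\top})_{ij}$ as in Theorem~\ref{thm:oksendal-max-principle}, the second-order operator $\sum_{i,j} a_{ij}\,\partial_i \partial_j$ is exactly $\frac{1}{2}\L^{\alpha}$. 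Thus $\L^{\alpha}[u]=0$ is precisely the condition that $u$ be harmonic for the generator of $\bX_t$, and the Dirichlet data $\phi(x,y) = (1-xy)/2$ coincides with the first-exit value that defines $u$.

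The main technical obstacle is that Theorem~\ref{thm:oksendal-max-principle} assumes uniform ellipticity, while $\mathbf{A}\bW\mathbf{A}$ degenerates on $\bd[-1,1]^2$. I would bypass this through exhaustion. For $\delta \in (0,1)$, set $D_\delta = \set{(x,y) \in (-1,1)^2 : (1-x^2)(1-y^2) > \delta}$; on $D_\delta$ the coefficients of $\L^{\alpha}$ are smooth and uniformly positive definite (assuming $|\cos\theta|<1$, which is the only non-trivial case). Let $\tau_\delta$ be the first time $\bX_t$ exits $D_\delta$. By the strong Markov property applied at $\tau_\delta$,
\[
u(\bx) = \E^{\bx}[\phi(\bX_\tau)] = \E^{\bx}\bigl[u(\bX_{\tau_\delta})\bigr],
\]
so on $D_\delta$ the function $u$ solves the Dirichlet problem for $\L^{\alpha}$ with continuous boundary data $u$ on $\bd D_\delta$ (continuity on this interior curve follows from standard continuity of exit distributions for non-degenerate diffusions). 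Theorem~\ref{thm:oksendal-max-principle} therefore gives $u \in C^2(D_\delta)$ with $\L^{\alpha} u = 0$ on $D_\delta$. Sending $\delta \downarrow 0$ yields $u \in C^2(\Int[-1,1]^2)$ and $\L^{\alpha} u = 0$ throughout the interior. The boundary limit $\lim_{(x,y)\to(\tilde x, \tilde y)} u(x,y) = \phi(\tilde x, \tilde y)$ for $(\tilde x,\tilde y) \in \bd[-1,1]^2$ follows from continuity of $\phi$, finiteness of $\E[\tau]$ (Lemma~\ref{lem:slowdown-stop}), and dominated convergence applied to $u(\bx) = \E^{\bx}[\phi(\bX_\tau)]$.

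Uniqueness follows from the same It\^{o}-calculus argument: if $v$ is any $C^2(\Int[-1,1]^2)\cap C^0([-1,1]^2)$ solution, applying It\^{o}'s formula to $v(\bX_{t\wedge \tau_\delta})$ shows it is a bounded martingale (the drift vanishes since $\L^{\alpha} v = 0$), so by optional stopping $v(\bx) = \E^{\bx}[v(\bX_{\tau_\delta})]$. Letting $\delta \downarrow 0$ and using continuity of $v$ up to the boundary together with dominated convergence yields $v(\bx) = \E^{\bx}[\phi(\bX_\tau)] = u(\bx)$.
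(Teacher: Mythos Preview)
Your approach is essentially the same as the paper's: compute the generator of the diffusion from $\mathbf{\Sigma}(\bx)\mathbf{\Sigma}(\bx)^\top = \mathbf{A}(\bx)\bW\mathbf{A}(\bx)$, then appeal to the general (state-dependent coefficients) version of Theorem~9.2.14 in \cite{Oksendal-SDE}, working on a subdomain where the operator is uniformly elliptic. The paper simply restricts to $[-1+\delta,1-\delta]^2$ and stops there, noting that this already suffices for the discrete algorithm, whereas you carry out an exhaustion $D_\delta \uparrow (-1,1)^2$ to recover the statement on the full interior and supply a separate optional-stopping argument for uniqueness.

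Two small comments. First, Theorem~\ref{thm:oksendal-max-principle} as stated in the paper is for constant $\mathbf{\Sigma}$; you should cite the general form of Theorem~9.2.14 in~\cite{Oksendal-SDE} directly, exactly as the paper does in its remark. Second, your boundary-limit step (``continuity of $\phi$, finiteness of $\E[\tau]$, and dominated convergence'') is a bit compressed: what is really needed is that $\bX_\tau$ depends continuously on the starting point (in distribution), and Lemma~\ref{lem:slowdown-stop} only gives $\E[\tau]<\infty$ for starting points bounded away from the boundary, not uniformly. These are standard points and the paper does not address them either, but they are worth flagging if you want the exhaustion to be airtight.
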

The proof again uses~\cite[Theorem 9.2.14]{Oksendal-SDE},
however, the exact application is a little subtle and we defer the details
to~\Cref{rem:dirichlet}. 

\paragraph*{Numerical Results}
The Dirichlet problem is parameterized by two variables: the
slowdown parameter $\alpha$ and the angle between the vectors $\theta$.  We can
numerically solve the above equation using existing solvers for any given fixed $\alpha$ and
angle $\theta \in [0,\pi]$. We solve these problems for a variety
of $\alpha$ between $0$ and $2$ and all values of $\theta$ in $[0,\pi]$ discretized to a granularity
of $0.02$.\footnote{Our code, containing the details of the implementation, is available at \cite{code-repo}.}

We observe that as we increase $\alpha$ from $0$ to $2$, the approximation
ratio peaks around $\alpha \approx 1.61$ for all values of $\theta$.
In particular, when $\alpha = 1.61$, the approximation ratio is $\GWratio$ which matches
the integrality gap for this relaxation up to three decimal points.

The Brownian rounding with slowdown is a well-defined algorithm for
any $2$-CSP.
We investigate 3 different values of slowdown parameter, i.e., $\alpha$, and show their relative
approximation ratios.
We show that with a slowdown of $1.61$ we achieve an approximation ratio of
$0.929$  for
\MtwoSAT. We list these values below in~\Cref{tab:max-cut-ratio-results}.

For the \MC problem, since we start the walk at the point $(0,0)$, we only need
to investigate the performance of the rounding for all possible angles
between two unit vectors which range in
$[0, \theta]$~(\Cref{fig:maxcut-worst-case-results}).  In particular, we are able to
achieve values that are comparable to the Goemans-Williamson bound.

\begin{table}[ht]
    \centering
    \begin{tabular}{|c|c|c|c|}
         \hline
         $\alpha$  &  \MC \cut{& Balanced-\MtwoSAT }& \MtwoSAT \\
         \hline
         $0$ & $\mcbasicratio$\cut{& 0.936 }& 0.921    \\
         $1$ & $\mcslowratio$ \cut{& 0.9423} & 0.927 \\
         $1.61$ & $\mcexpratio$ \cut{& 0.943} & 0.929 \\
         \hline
    \end{tabular}
    \caption{Approximation ratio of  Sticky Brownian Motion rounding
      with Slowdown for \MC and \MtwoSAT.}
    \label{tab:max-cut-ratio-results}
\end{table}

\begin{figure}[ht]
\begin{subfigure}{.32\textwidth}
  \centering
  \includegraphics[width=1\linewidth]{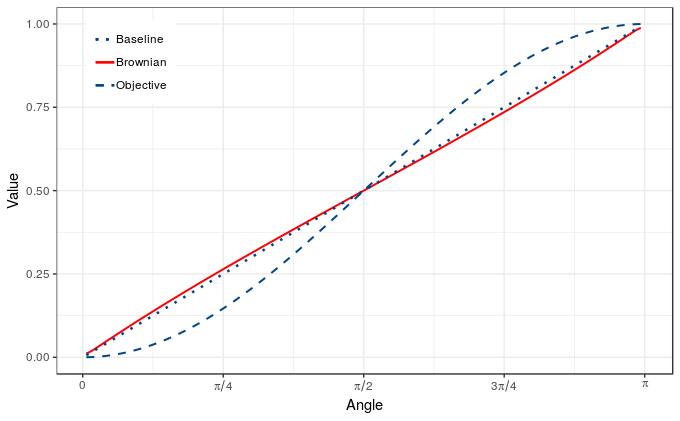}
  \caption{$\alpha=0$.}
\end{subfigure}%
\hspace{0.01\textwidth}
\begin{subfigure}{.32\textwidth}
  \centering
  \includegraphics[width=1\linewidth]{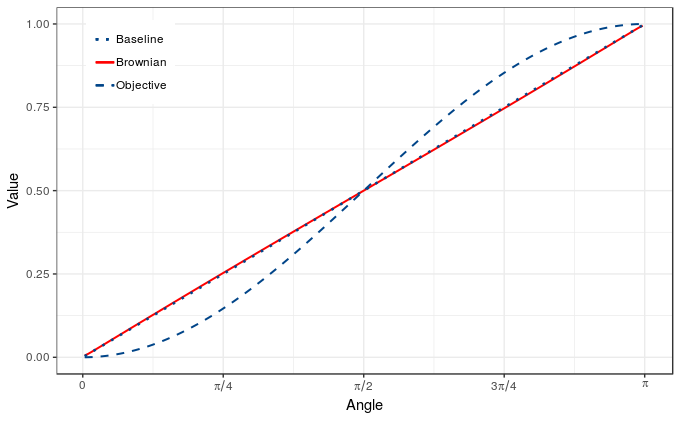}
  \caption{$\alpha=1$.}
\end{subfigure}%
\hspace{0.01\textwidth}
\begin{subfigure}{.32\textwidth}
  \centering
  \includegraphics[width=1\linewidth]{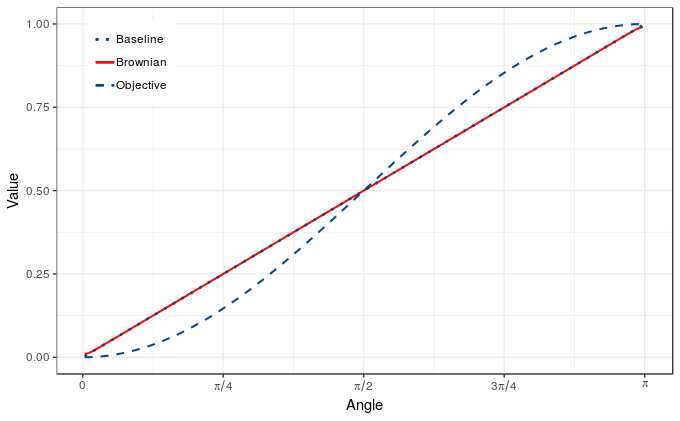}
  \caption{$\alpha=1.61$.}
\end{subfigure}
\caption{Comparing the performance of three values of the slowdown
  parameter for the \MC problem.}
\label{fig:maxcut-worst-case-results}
\end{figure}

\subsection{Higher-Dimension Brownian Rounding}\label{sec:higher-dim}
Our motivating example for considering the higher-dimension Brownian rounding
is the \MdC problem: given a directed graph $G=(V,E)$ equipped with
non-negative edge weights $a:E\rightarrow \mathbb{R}_+$ we wish to find a cut
$S\subseteq V$ that maximizes the total weight of edges going out of $S$.  The
standard semi-definite relaxation for \MdC is the following:
\begin{align*}
\max ~~~ & \sum _{e=(i\rightarrow j)\in E} a_e\cdot \frac{\left( \bw_0 + \bw_i\right) \cdot \left( \bw_0 - \bw_j\right)}{4} & \\
s.t. ~~~ & \bw _i \cdot \bw _i = 1 & \forall i=0,1,...,n \\
 ~~~ & \norm{\bw_i-\bw_j}^2 + \norm{\bw_j - \bw_k}^2 \geq \norm{\bw_i - \bw_k}^2 & \forall i,j,k=0,1,...,n
\end{align*}
In the above, the unit vector $\bw _0$ denotes the cut $S$, whereas $-\bw _0$ denotes $\overline{S}$.
We also include the triangle inequalities which are valid for any valid relaxation.

The sticky Brownian rounding algorithm for \MdC fails to give a good performance guarantee. Thus we design a high-dimensional variant of the algorithm that incorporates
the inherent asymmetry of the problem.
Let us now describe the high-dimensional Brownian rounding algorithm.
It is similar to the original Brownian rounding algorithm given for \MC,
except that it evolves in $\mathbb{R}^{n+1}$ with one additional dimension for
$\bw _0$.  Let $\bW\in \mathbb{R}^{(n+1)\times (n+1)}$ denote the positive semi-definite correlation matrix
defined by the
vectors $\bw_0,\bw_1,\ldots, \bw_n$, {\em i.e.}, for every $0\leq i,j\leq n$ we have
that: $\bW _{i,j} = \bw _i \cdot \bw _j$.
The algorithm starts at the origin and perform a sticky Brownian motion
inside the $[\pm1]^{n+1}$ hypercube whose correlations are governed by $\bW$.

As before, we achieve this by defining a random process $\left\{ \bX_t\right\} _{t\geq 0}$ as follows: $$ \bX _t = \bW ^{\nicefrac[]{1}{2}}\bB _t,$$
where $\left\{ \bB _t\right\} _{t\geq 0}$ is the standard Brownian motion in $\mathbb{R}^{n+1}$ starting at the origin and
$\bW ^{\nicefrac[]{1}{2}}$ is the square root matrix of $\bW$.  Additionally, we
force $\left\{ \bX _t\right\} _{t\geq 0}$ to stick to the boundary of the
$[\pm 1]^{n+1}$ hypercube, {\em i.e.}, once a coordinate of $\bX _t$ equals either $1$ or
$-1$ it is fixed and remains unchanged indefinitely. This description
can be formalized the same way we did for the \MC problem. Below we
use $\sigma$ for the time at which $\bX_\sigma \in \{-1, 1\}^{n+1}$,
which has finite expectation. 

Unlike the Brownian rounding algorithm for \MC, we need to take into
consideration the value $\bw_0$ was rounded to, {\em i.e.}, $\left(\bX _\sigma\right)_0
$, since the zero coordinate indicates $S$.
Formally, the output $S \subseteq V$ is defined as
follows:
\[ S=\left\{ i\in V:\left( \bX _\sigma\right)_i = \left( \bX _\sigma\right)_0\right\}.\]
To simplify the rest of the presentation, let us denote
$Z_i\eqdef \left( \bX _\sigma\right)_i$ for every $i=0,1,\ldots,n$.

The event that an edge $(i\rightarrow j)\in E$ is an outgoing edge from $S$,
{\em i.e.}, $i\in S$ and $j\in \overline{S}$, involves three random variables:
$Z_i$, $Z_j$, and $Z_0$.  Formally, the above event happens if and only if
$Z_i=Z_0$ and $Z_j\neq Z_0$.
We now show how events on any triplet of the random variables $Z_0,Z_1,\ldots,Z_n$ can be precisely calculated.
To simplify the presentation, denote the following for every $i,j,k=0,1,2,\ldots,n$ and $\alpha,\beta,\gamma \in \left\{ \pm 1\right\}$:
\begin{align*}
p_i(\alpha) & \triangleq  \Pr \left[ Z_i=\alpha\right] \\
p_{ij}(\alpha,\beta) & \triangleq \Pr \left[ Z_i=\alpha,Z_j=\beta\right] \\
p_{ijk}(\alpha,\beta,\gamma) & \triangleq \Pr \left[ Z_i=\alpha, Z_j=\beta, Z_k=\gamma\right]
\end{align*}
\begin{observation}\label{obs:probabilities}
The following two hold:
\begin{enumerate}
\item $p_i(\alpha)=p_i(-\alpha)$, $p_{ij}(\alpha,\beta) = p_{ij}(-\alpha,-\beta)$, and $p_{ijk}(\alpha,\beta,\gamma)=p_{ijk}(-\alpha,-\beta,-\gamma)$ for every $i,j,k=0,1,2,\ldots,n$ and $\alpha,\beta,\gamma \in \left\{ \pm 1\right\}$.  \label{obs1}
\item $p_i(\alpha)=\nicefrac[]{1}{2}$ for every $i=0,1,2,\ldots,n$ and $\alpha \in \left\{ \pm 1\right\}$. \label{obs2}
\end{enumerate}
\end{observation}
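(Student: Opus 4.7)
}
The plan is to prove both parts as direct consequences of the reflection symmetry of the driving Brownian motion and of the cube $[-1,1]^{n+1}$ about the origin. Since $\{\bB_t\}_{t \ge 0}$ is standard Brownian motion started at the origin, the process $\{-\bB_t\}_{t \ge 0}$ is equal in distribution to $\{\bB_t\}_{t \ge 0}$. Because $\bW^{\nicefrac[]{1}{2}}$ is a fixed linear map and $\bX_0 = \mathbf{0}$, it follows that $\{-\bX_t\}_{t \ge 0}$ is equal in distribution to $\{\bX_t\}_{t \ge 0}$ as long as we are considering the unconstrained process.

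Next I would argue that the sticky dynamics preserves this symmetry. The hypercube $[-1,1]^{n+1}$ and each of its faces are invariant under the reflection $\bx \mapsto -\bx$, and the rule for freezing a coordinate upon reaching $\pm 1$ commutes with negation (if the $i$-th coordinate of $\bX_t$ is frozen at $+1$, then the $i$-th coordinate of $-\bX_t$ is frozen at $-1$). Consequently, the pathwise construction of $\{\bX_t\}_{t \ge 0}$ given in Section~\ref{sec:mc-process-defn}, applied to $\{-\bB_t\}_{t\ge 0}$, produces exactly $\{-\bX_t\}_{t\ge 0}$. Combining with the equality in distribution of $\{\bB_t\}$ and $\{-\bB_t\}$ yields that $\{-\bX_t\}_{t\ge 0}$ and $\{\bX_t\}_{t \ge 0}$ have the same law as sticky processes on the cube. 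In particular, evaluating at the stopping time $\sigma$ (when $\bX_\sigma \in \{\pm 1\}^{n+1}$), the random vector $-\bX_\sigma$ has the same joint distribution as $\bX_\sigma$. Projecting this joint equality onto any single coordinate, pair, or triple of coordinates immediately gives the three identities in Part~\ref{obs1}.

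For Part~\ref{obs2}, note that $Z_i \in \{\pm 1\}$ almost surely, so $p_i(+1) + p_i(-1) = 1$. Part~\ref{obs1} gives $p_i(+1) = p_i(-1)$, and therefore $p_i(\alpha) = \nicefrac[]{1}{2}$ for each $i$ and each $\alpha \in \{\pm 1\}$. I do not anticipate any real obstacle here; the only mildly delicate step is verifying that the sticky construction genuinely commutes with negation, which is a consequence of the facts that the covariance matrices $\bW_t$ used after freezing coordinates depend only on the set of active coordinates (not on the signs to which inactive coordinates have been frozen), and that the faces of $[-1,1]^{n+1}$ come in $\pm$ pairs mapped to each other by $\bx \mapsto -\bx$.
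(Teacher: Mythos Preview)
Your proposal is correct and takes essentially the same approach as the paper, which simply states that the observation ``follows immediately from symmetry.'' You have written out the symmetry argument in careful detail, verifying that the sticky construction commutes with the reflection $\bx \mapsto -\bx$, which is exactly the content the paper leaves implicit.
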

The proof of Observation \ref{obs:probabilities} follows immediately from symmetry.

The following lemmas proves that every conjunction event that depends on three variables from $Z_0,Z_1,Z_2,\ldots,Z_n$ can be precisely calculated.
\begin{lemma}\label{lem:ProbEvent}
For every $i,j,k=0,1,2,\ldots,n$ and $\alpha,\beta,\gamma \in \left\{ \pm 1\right\}$:
$$ p_{ijk(\alpha,\beta,\gamma)} = \frac{1}{2}\left[ p_{ij}(\alpha,\beta)+p_{ik}(\alpha,\gamma)+p_{jk}(\beta,\gamma)-\frac{1}{2}\right].$$
\end{lemma}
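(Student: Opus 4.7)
The plan is to establish the identity by a short inclusion-exclusion argument built on top of Observation~\ref{obs:probabilities}. The three random variables $Z_i, Z_j, Z_k$ jointly take eight possible sign patterns, but the central antipodal symmetry $p_{ijk}(\alpha,\beta,\gamma) = p_{ijk}(-\alpha,-\beta,-\gamma)$ from Observation~\ref{obs:probabilities}(\ref{obs1}) collapses these to four unknowns, and the singleton marginal $p_i(\alpha) = \tfrac{1}{2}$ from Observation~\ref{obs:probabilities}(\ref{obs2}) provides the one linear relation needed to pin down the triple probability from its pairwise marginals.

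Concretely, I would first marginalize over the remaining coordinate in each of the three pairs to write
\begin{align*}
p_{ij}(\alpha,\beta) &= p_{ijk}(\alpha,\beta,\gamma) + p_{ijk}(\alpha,\beta,-\gamma), \\
p_{ik}(\alpha,\gamma) &= p_{ijk}(\alpha,\beta,\gamma) + p_{ijk}(\alpha,-\beta,\gamma), \\
p_{jk}(\beta,\gamma) &= p_{ijk}(\alpha,\beta,\gamma) + p_{ijk}(-\alpha,\beta,\gamma).
\end{align*}
Summing these three equalities produces $3p_{ijk}(\alpha,\beta,\gamma)$ plus three ``off-diagonal'' triple probabilities in which exactly one sign has been flipped.

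The next step is to apply Observation~\ref{obs:probabilities}(\ref{obs1}) to the only triple appearing in the sum that does not already have $\alpha$ as its first coordinate, replacing $p_{ijk}(-\alpha,\beta,\gamma)$ by $p_{ijk}(\alpha,-\beta,-\gamma)$. The three off-diagonal terms then become $p_{ijk}(\alpha,\beta,-\gamma) + p_{ijk}(\alpha,-\beta,\gamma) + p_{ijk}(\alpha,-\beta,-\gamma)$, which, together with $p_{ijk}(\alpha,\beta,\gamma)$, forms exactly the partition of the event $\{Z_i=\alpha\}$ into its four disjoint refinements by the signs of $Z_j$ and $Z_k$. By Observation~\ref{obs:probabilities}(\ref{obs2}) this sum equals $p_i(\alpha) = \tfrac{1}{2}$.

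Putting the pieces together gives
\[
p_{ij}(\alpha,\beta) + p_{ik}(\alpha,\gamma) + p_{jk}(\beta,\gamma) = 2\,p_{ijk}(\alpha,\beta,\gamma) + \tfrac{1}{2},
\]
and solving for $p_{ijk}(\alpha,\beta,\gamma)$ yields the claimed formula. There is no real obstacle here; the only subtlety is noticing that the sign-flip symmetry in Observation~\ref{obs:probabilities}(\ref{obs1}) must be used precisely once, on the ``wrong-sign'' first coordinate, in order to reassemble the three stray terms into a single marginal probability.
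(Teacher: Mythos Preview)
Your proof is correct. Both arguments rest on the same two facts from Observation~\ref{obs:probabilities}, but they are organized differently. The paper passes to the complement: it writes $1-p_{ijk}(\alpha,\beta,\gamma)=1-p_{ijk}(-\alpha,-\beta,-\gamma)=\Pr[Z_i=\alpha\vee Z_j=\beta\vee Z_k=\gamma]$ via the antipodal symmetry and De~Morgan, then expands this union by inclusion--exclusion and solves for $p_{ijk}(\alpha,\beta,\gamma)$. You instead decompose each pairwise probability directly into two triple probabilities, sum, and use the symmetry once to flip $p_{ijk}(-\alpha,\beta,\gamma)$ to $p_{ijk}(\alpha,-\beta,-\gamma)$ so that the stray terms reassemble into $p_i(\alpha)=\tfrac12$. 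Your route avoids the complement/De~Morgan detour and is arguably more transparent; the paper's route has the minor advantage that the inclusion--exclusion identity appears in its textbook form. Either way the argument is a couple of lines.
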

\begin{proof}
\begin{align*}
1-p_{ijk}(\alpha,\beta,\gamma) & = 1-p_{ijk}(-\alpha,-\beta,-\gamma) \\
& = \Pr \left[ Z_i=\alpha \vee Z_j=\beta \vee Z_k=\gamma\right] \\
& = p_i(\alpha)+p_j(\beta)+p_k(\gamma) - p_{ij}(\alpha,\beta) - p_{ik}(\alpha,\gamma)-p_{jk}(\beta,\gamma)+p_{ijk}(\alpha,\beta,\gamma).
\end{align*}
The first equality follows from property (\ref{obs1}) of Observation \ref{obs:probabilities}.
The second equality follows from De-Morgan's law.
The third equality follows from the inclusion and exclusion principle.
Isolating $p_{ijk}(\alpha,\beta,\gamma)$ above and using property (\ref{obs2})
of Observation \ref{obs:probabilities} concludes the proof.
\end{proof}

Let us now consider the case study problem \MdC.
One can verify that an edge $(i\rightarrow j)\in E$ is a forward
edge crossing the cut $S$ if and only if the following event happens:
$\left\{ Z_i=Z_0\neq Z_j\right\}$ (recall that $Z_0$ indicates $S$).
Thus, the value of the Brownian rounding algorithm, when considering only the
edge $(i\rightarrow j)$, equals $p_{0ij}(1,1,-1)+p_{0ij}(-1,-1,1)$.  Lemma
\ref{lem:ProbEvent} above shows that if one knows the values of
$p_{ij}(\alpha,\beta)$ for every $i,j=0,1\ldots,n$ and $\alpha,\beta\in \{ \pm
1\}$, then $p_{0ij}(1,1,-1) $ and $p_{0ij}(-1,-1,1) $ can be calculated (thus
deriving the exact probability that $(i\rightarrow j)$ is a forward edge
crossing $S$).

How can we calculate $p_{ij}(\alpha,\beta)$ for every $i,j=0,1\ldots,n$ and $\alpha,\beta\in \{ \pm 1\}$?
Fix some $i$, $j$, $\alpha$ and $\beta$.
We note that Theorem \ref{thm:cut-prob} can be used to calculate $p_{ij}(\alpha,\beta)$.
The reason is that: $(1)$ Theorem \ref{thm:cut-prob} provides the value of $ p_{ij}(-1,1)+ p _{ij}(1,-1)$; $(2)$ $p_{ij}(-1,-1)+p_{ij}(-1,1)+p_{ij}(1,-1)+p_{ij}(1,1)=1$; and $(3)$ $p_{ij}(-1,-1)=p_{ij}(1,1) $ and $p_{ij}(-1,1)=p_{ij}(1,-1) $ from symmetry.
We conclude that using Theorem \ref{thm:cut-prob} we can exactly calculate the probability that $(i\rightarrow j)$ is a forward edge crossing $S$, and obtain that this probability equals:
$$ \frac{1}{2}(p_{0j}+p_{ij}-p_{0i}),$$
where $p_{ij}$ is the probability that $i$ and $j$ are separated as given by Theorem \ref{thm:cut-prob}.

Similarly to \MtwoSAT, not all triplets of angles
$\{ \theta _{0i}, \theta _{0j}, \theta _{ij}\}$ are possible due to the triangle inequality constraints
(here $ \theta _{ij}$ indicates the angle between $\bw _i$ and $ \bw _j$).  Let
us denote by $\mathcal{F}$ the collection of all possible triplet of angles for
the \MdC problem.
Then, we can lower bound the approximation guarantee of the Brownian rounding algorithm as follows:
$$ \min _{(\theta _{0i},\theta _{0j}, \theta _{ij})\in \mathcal{F}} \left\{ \frac{\frac{1}{2}(p_{0j}+p_{ij}-p_{0i})}{\frac{1}{4}(1-\cos (\theta _{0j})+\cos (\theta _{0i})- \cos(\theta _{ij}))}\right\}.$$
This results in the following theorem.

\begin{theorem}
    The high dimensional Brownian rounding algorithm achieves an approximation ratio of $\MDCRatio$ for the \MdC problem.
\end{theorem}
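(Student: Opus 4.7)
The plan is to analyze the expected weight of cut edges on an edge-by-edge basis and compare it to the SDP contribution of each edge. Fix an edge $(i \to j) \in E$ and write $\theta_{k\ell}$ for the angle between $\bw_k$ and $\bw_\ell$. The SDP contribution of this edge is $\frac{a_e}{4}(1 - \cos\theta_{0j} + \cos\theta_{0i} - \cos\theta_{ij})$, and the edge is cut by the algorithm precisely when $\{Z_i = Z_0 \neq Z_j\}$ occurs, so its expected contribution to the algorithm's objective is $a_e \cdot (p_{0ij}(1,1,-1) + p_{0ij}(-1,-1,1))$.

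First, I would apply Lemma~\ref{lem:ProbEvent} (together with the symmetry observations of Observation~\ref{obs:probabilities}) to rewrite this three-variable probability in terms of pairwise separation probabilities:
\[
p_{0ij}(1,1,-1) + p_{0ij}(-1,-1,1) \;=\; \tfrac{1}{2}\bigl(p_{0j} + p_{ij} - p_{0i}\bigr),
\]
where $p_{k\ell} = p_{k\ell}(-1,1) + p_{k\ell}(1,-1)$ is the probability that coordinates $k$ and $\ell$ are separated by the high-dimensional Sticky Brownian Motion. This step is the whole reason for moving to $n+1$ dimensions and starting the walk at the origin: it preserves the central symmetry needed for Observation~\ref{obs:probabilities}, which is what allows the inclusion-exclusion computation in Lemma~\ref{lem:ProbEvent} to collapse to two-variable quantities.

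Second, I would invoke Theorem~\ref{thm:cut-prob}, which gives an exact closed-form expression for $p_{k\ell}$ as a hypergeometric function of the single angle $\theta_{k\ell}$. Substituting this into the identity above yields an exact formula for the cut-probability of the edge $(i\to j)$ purely as a function of the triple $(\theta_{0i},\theta_{0j},\theta_{ij})$. It therefore suffices to show that over all feasible triples $\mathcal{F}$, the ratio
\[
R(\theta_{0i},\theta_{0j},\theta_{ij}) \;=\; \frac{\tfrac{1}{2}(p_{0j} + p_{ij} - p_{0i})}{\tfrac{1}{4}\bigl(1 - \cos\theta_{0j} + \cos\theta_{0i} - \cos\theta_{ij}\bigr)}
\]
is at least $\MDCRatio$. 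The set $\mathcal{F}$ is carved out of $[0,\pi]^3$ by the $\ell_2^2$-triangle inequalities of the SDP on the triple $\bw_0, \bw_i, \bw_j$, together with the non-negativity of the denominator (which is enforced by the SDP relaxation itself).

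The main obstacle will be this three-dimensional constrained minimization. Unlike the \MC case, where $R$ depends on a single angle and can be controlled analytically via the closed-form elliptic/hypergeometric expression, and unlike \MtwoSAT, where the PDE and maximum-principle machinery apply, here $R$ is a coupled function of three angle parameters that interact nontrivially through the triangle constraints. I would handle this by discretizing $\mathcal{F}$ to a sufficiently fine grid, evaluating $R$ at each grid point using the hypergeometric representation of Theorem~\ref{thm:cut-prob}, and then extending the grid bound to all of $\mathcal{F}$ by controlling the modulus of continuity of both numerator and denominator (the derivatives of $p_{k\ell}$ in $\theta_{k\ell}$ are uniformly bounded away from the degenerate angles $0$ and $\pi$, and the degenerate regimes can be handled separately via the asymptotic estimate of Theorem~\ref{th:pi}). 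The infimum, verified numerically, lies above $\MDCRatio$, which on summing over edges proves the claimed approximation guarantee.
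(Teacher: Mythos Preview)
Your proposal is correct and follows essentially the same approach as the paper: reduce the three-variable cut probability to pairwise separation probabilities via Lemma~\ref{lem:ProbEvent} and Observation~\ref{obs:probabilities} to obtain $\tfrac{1}{2}(p_{0j}+p_{ij}-p_{0i})$, plug in the closed-form expression from Theorem~\ref{thm:cut-prob}, and then numerically minimize the ratio over the feasible angle triples $\mathcal{F}$ determined by the triangle inequalities. Your discussion of controlling the modulus of continuity and handling degenerate angles is more explicit than the paper's treatment, which simply states the minimization and reports the numerical result.
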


We also remark that we can introduce slowdown (as discussed in Section~\ref{sec:slowdown} to the high dimensional Brownian rounding algorithm. Numerically, we show that this improves the performance to $0.79$-approximation.

\section{Acknowledgements}
The authors are grateful to Assaf Naor and Ronen Eldan for sharing their
manuscript with us.
SA would like to thank Gantumur Tsogtgerel for bringing the Maximum Principle
to our attention, and Christina C. Christara for helping us compare various numerical PDE solvers.
SA and AN thank Allan Borodin for useful discussions during the initial stages of this research.
GG would like to thank Anupam Gupta and Ian Tice for
useful discussion, and for also pointing out the Maximum Principle.

SA and AN were supported by an NSERC Discovery Grant (RGPIN-2016-06333).
MS was supported by NSF grant CCF-BSF:AF1717947. 
Some of this work was carried out while AN and NB were visitors at the
Simons Institute program on Bridging Discrete and Continuous
Optimization, partially supported by NSF grant \#CCF-1740425. 

\bibliographystyle{plainnat}
\bibliography{main} 

\appendix
\section{Definition of Brownian Motion}
\label{app:brownian-defn}

For completeness, we recall the definition of standard Brownian
motion. 

\begin{definition}
  A stochastic process $\set{\bB_t}_{t\ge 0}$ taking values in $\Re^n$
  is called an $n$-dimensional Brownian motion started at $\bx \in
  \Re^n$ if
  \begin{itemize}
  \item $\bB_0 = \bx$,
  \item the process has independent increments, i.e.~for all $N$ and
    all times $0 \le t_1 \le t_2 \le \ldots \le t_N$, the increments
    $\bB_{t_N} - \bB_{t_{N-1}}$,     $\bB_{t_N-1} - \bB_{t_{N-2}}$,
    $\ldots$,     $\bB_{t_2} - \bB_{t_{1}}$ are independent random
    variables,
  \item for all $t \ge 0$ and all $h > 0$, the increment $\bB_{t+h} -
    \bB_t$ is distributed as a Gaussian random variable with mean $\mathbf{0}$
    and covariance matrix equal to the identity $\mathbf{I}_n$,
  \item the function $f(t) = \bB_t$ is almost surely continuous.
  \end{itemize}
  The process $\set{\bB_t}_{t\ge 0}$ is called standard Brownian
  motion if $\bx = 0$. 
\end{definition}

The fact that this definition is not empty, i.e.~that such a
stochastic process exists, is non-trivial. The first rigorous proof of
this fact was given by Wiener~\cite{Wiener23}. We refer the reader to
the book \cite{MP10} for a thorough introduction to Brownian motion
and its properties. 
\section{Omitted Proofs from Section~\ref{sec:conformal}}

We start with a brief primer about special functions with an emphasis on the
lemmas and identities that will be useful for our analysis. We recommend the
excellent introductions in \citet{andrews1999special,BW10} for a thorough introduction.

\subsection{Special Functions: A Primer}\label{sec:primer}

While there is no common definition of special functions, three basic
functions, $\Gamma$, $\beta$ and the hypergeometric functions ${}_pF_q$  show
up in nearly all treatments of the subject. We will define them and some useful relationships between them.

\begin{definition}[Gamma Function]
    \label{def:gamma}
    The gamma function is defined as
    \[ \Gamma(z) \eqdef \int_0^\infty x^{z-1} e^{-x} dx. \]
    for all complex numbers $z$ with non-negative real part, and
    analytically extended to all $z \neq 0,-1,-2,...$.
\end{definition}

\begin{fact} \label{fact:gamma}
Recall that the gamma function satisfies the recurrence  $\Gamma(z+1)
= z \Gamma(z)$ and it follows easily from the definition that
$\Gamma(1) = 1$.
In particular, when $n$ is a positive integer, $\Gamma(n+1) = n!$
\end{fact}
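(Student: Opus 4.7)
The plan is to verify the three assertions in order, since each follows straightforwardly from Definition~\ref{def:gamma} once the previous has been established. First I would compute $\Gamma(1)$ directly by substituting $z = 1$ into the defining integral: $\Gamma(1) = \int_0^\infty e^{-x}\,dx = [-e^{-x}]_0^\infty = 1$. This is just an elementary antiderivative calculation, with no subtleties.

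Next, for the recurrence $\Gamma(z+1) = z\Gamma(z)$, my approach would be integration by parts applied to $\Gamma(z+1) = \int_0^\infty x^z e^{-x}\,dx$, taking $u = x^z$ and $dv = e^{-x}\,dx$, so that $du = z x^{z-1}\,dx$ and $v = -e^{-x}$. The formula then gives $\Gamma(z+1) = [-x^z e^{-x}]_0^\infty + z\int_0^\infty x^{z-1} e^{-x}\,dx$. The boundary term vanishes at the upper limit because the exponential dominates every polynomial, and at the lower limit because $x^z \to 0$ as $x \to 0^+$ (using that $\Re(z) > 0$ in the integral representation, which is the regime where the integral converges absolutely). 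The remaining integral is exactly $z\Gamma(z)$, proving the recurrence.

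Finally, the factorial identity $\Gamma(n+1) = n!$ for positive integers follows by a one-line induction on $n$, using the two facts just established. The base case $n = 0$ is $\Gamma(1) = 1 = 0!$ (or $n=1$ gives $\Gamma(2) = 1 \cdot \Gamma(1) = 1 = 1!$), and the inductive step applies the recurrence: $\Gamma(n+1) = n\Gamma(n) = n \cdot (n-1)! = n!$.

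There is essentially no hard step here; the only minor subtlety is the validity of integration by parts, which requires $\Re(z) > 0$ in order for the integral definition to apply and for the boundary term at $0$ to vanish. The extension of the recurrence to all $z \neq 0, -1, -2, \ldots$ would follow by analytic continuation, since both sides are meromorphic and agree on a set with a limit point, but this is not needed for the positive integer case asserted in the fact.
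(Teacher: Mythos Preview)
Your proof is correct and is the standard textbook argument. The paper itself does not give a proof of this fact at all; it is stated as a ``recall'' of well-known properties of the gamma function, so there is nothing to compare against.
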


\begin{definition}[Beta Function]
    The  beta function $\beta(a,b)$ is defined for complex numbers $a$
    and $b$ with $\RePart(a)>0,\RePart(b)>0$ by
    \[ \beta(a,b) = \int_0^1 s^{a-1} (1-s)^{b-1} ds. \]
\end{definition}
Clearly, $\beta(a,b) = \beta(b,a)$.
Setting $s = u/(u+1)$  gives the following alternate form.
\[ \beta(a,b) =  \int_0^\infty u^{a-1} \left(\frac{1}{1+u}\right)^{a+b} du \]
\begin{lemma} \label{lem:betagamma} (Theorem 1.1.4 in~\cite{andrews1999special})
    The beta function can be expressed in terms of the gamma function using the following
    identity:
    \[\beta(a,b)  = \frac{\Gamma(a)\Gamma(b)}{\Gamma(a+b) }\]
\end{lemma}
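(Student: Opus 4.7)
The plan is to prove the identity by writing $\Gamma(a)\Gamma(b)$ as a double integral and performing a change of variables that separates it into a product involving $\Gamma(a+b)$ and $\beta(a,b)$. This is the standard textbook proof, and for the parameter range $\RePart(a), \RePart(b) > 0$ there are no convergence subtleties.

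First, I would use Definition~\ref{def:gamma} to write
\[
\Gamma(a)\Gamma(b) = \int_0^\infty \int_0^\infty x^{a-1} y^{b-1} e^{-(x+y)} \, dx \, dy,
\]
with the integrand absolutely integrable thanks to the assumption $\RePart(a), \RePart(b) > 0$, so Fubini applies and justifies the later change of order.

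The key step is the substitution $x = s t$, $y = s(1-t)$ with $s \in [0, \infty)$ and $t \in [0,1]$. This is a bijection from $(0,\infty) \times (0,1)$ onto $(0,\infty)^2$ (excluding a measure-zero set), and a direct computation gives a Jacobian determinant of $s$. Substituting and using $x + y = s$ yields
\[
\Gamma(a)\Gamma(b) = \int_0^\infty \int_0^1 (st)^{a-1} (s(1-t))^{b-1} e^{-s} \, s \, dt \, ds
= \int_0^\infty s^{a+b-1} e^{-s} \, ds \cdot \int_0^1 t^{a-1} (1-t)^{b-1} \, dt.
\]
By Definition~\ref{def:gamma} the first factor equals $\Gamma(a+b)$, and by the definition of the beta function the second factor equals $\beta(a,b)$. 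Dividing through by $\Gamma(a+b)$ (which is nonzero for $\RePart(a+b) > 0$) gives the claimed identity.

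There is no real obstacle here; the only mildly delicate point is verifying that the change of variables is valid (the map is smooth, bijective off a null set, and the Jacobian is $s$), but this is routine. The result then extends by analytic continuation to all $a, b$ for which neither $\Gamma(a)$, $\Gamma(b)$, nor $\Gamma(a+b)$ has a pole, though in our applications we only need the case of positive real parameters.
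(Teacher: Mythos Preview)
Your proof is correct; this is the standard argument via the substitution $x = st$, $y = s(1-t)$ that separates the double integral into $\Gamma(a+b)\beta(a,b)$. The paper itself does not prove this lemma at all---it simply cites it as Theorem~1.1.4 in Andrews, Askey, and Roy's \emph{Special Functions}---so there is nothing to compare; you have supplied the textbook proof that the reference would contain.
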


We will use the following very useful fact.
\begin{lemma} \label{l:trig_beta}
( Exercise 2.2 in~\cite{BW10})
    \[ \int_{0}^{\pi/2} \sin^{a-1}\theta \cos^{b-1} \theta d\theta = \frac12 \beta(a/2,b/2)  \]
\end{lemma}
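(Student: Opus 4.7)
The plan is to reduce the trigonometric integral to the standard beta function integral by a change of variables. Recall that
\[
\beta(a/2, b/2) = \int_0^1 s^{a/2 - 1}(1-s)^{b/2 - 1}\, ds,
\]
which is exactly the definition given just above the lemma statement. The natural substitution to try is $s = \sin^2\theta$ (equivalently $1-s = \cos^2\theta$), since this moves the interval $[0,1]$ to $[0, \pi/2]$ and converts the integrand into a product of powers of $\sin\theta$ and $\cos\theta$.

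With $s = \sin^2\theta$, I have $ds = 2\sin\theta\cos\theta\, d\theta$, and the endpoints transform as $\theta = 0 \mapsto s = 0$ and $\theta = \pi/2 \mapsto s = 1$. Substituting, the integrand $s^{a/2 - 1}(1-s)^{b/2-1}$ becomes $\sin^{a-2}\theta \cos^{b-2}\theta$, and multiplying by the Jacobian $2\sin\theta\cos\theta$ yields $2\sin^{a-1}\theta \cos^{b-1}\theta$. So
\[
\beta(a/2, b/2) = 2 \int_0^{\pi/2} \sin^{a-1}\theta\, \cos^{b-1}\theta\, d\theta,
\]
and dividing by $2$ gives the claimed identity.

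There is really no main obstacle here beyond verifying the substitution and the endpoints; the computation is a one-line change of variables. The only point requiring a small comment is that the integrals converge for $\mathrm{Re}(a), \mathrm{Re}(b) > 0$, which is exactly the regime where the beta function itself is defined by the integral on the right-hand side, so the manipulation is fully justified in that range; the identity then extends by analytic continuation to the full domain where both sides make sense.
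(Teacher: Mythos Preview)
Your proof is correct. The paper does not give a standalone proof of this lemma (it is cited as an exercise in \cite{BW10}), but it does prove the incomplete-beta generalization (Lemma~\ref{l:incomp_trig}) by exactly the same substitution $s = \sin^2\theta$ (carried out in two steps, $t = \sin\theta$ followed by $s = t^2$); your argument is the $\phi = \pi/2$ special case of that computation.
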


The next family of functions we utilize are the hypergeometric
functions.

\begin{definition}[Hypergeometric Function]
    \label{def:hypergeom}
    The {\em hypergeometric function}  $\pFq{p}{q}{a_1,\dots,a_p}{b_1,\dots,b_q}{z}$
       is defined as
\[ 
\pFq{p}{q}{a_1,\dots,a_p}{b_1,\dots,b_q}{z}
\eqdef \sum_{n=0}^\infty \frac{ (a_1)_n \cdots (a_p)_n}{(b_1)_n \cdot (b_q)_n} \frac{z^n}{n!} \]
where the  {\em Pochhammer symbol} (rising factorial) is defined
inductively as  \[(a)_n :=  a(a+1) \cdots(a+n-1) \text{ and }
(a)_0=1.\]
\end{definition}

A very simple but useful way to write the binomial theorem using the Pochhammer symbol is
\[ (1-x)^{-a} = \sum_{n=0}^\infty  \frac{(a)_n}{n!}x^n.\]

The Pochhammer symbol also satisfies the formula $(a)_n = \Gamma(a+n)/\Gamma(a)$.

A useful connection between the hypergoemetric function ${}_2F_1$ and the gamma function is given in the following lemma.
\begin{lemma}[Euler's Integral Representation][Theorem~2.2.1 in~\cite{andrews1999special}]
    \label{lem:euler-rep}
    If $\RePart(c) > \RePart(b) > 0$ then
    \[ \pFq{2}{1}{a,b}{c}{x} = \frac{\Gamma(c)}{\Gamma(b) \Gamma(c-b)} \int_{0}^1 t^{b-1} (1-s)^{c-b-1} (1-xs)^{-a} ds \]
    where we assume that $(1-xs)^{-a}$ takes its principal value.
\end{lemma}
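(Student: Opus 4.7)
The plan is to prove the integral representation by expanding the factor $(1-xs)^{-a}$ as its generalized binomial series, swapping the sum with the integral, and recognizing each term of the resulting series as a beta function integral. This reduces the identity to a purely algebraic manipulation of gamma functions and Pochhammer symbols, matching the hypergeometric series defining ${}_2F_1$.

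Concretely, I would first assume $|x| < 1$ so that the generalized binomial theorem noted after Definition~\ref{def:hypergeom} gives the absolutely convergent expansion
\[
(1-xs)^{-a} = \sum_{n=0}^{\infty} \frac{(a)_n}{n!}\, (xs)^n,
\]
valid uniformly for $s \in [0,1-\delta]$ and any $\delta > 0$. Multiplying by $s^{b-1}(1-s)^{c-b-1}$, the hypothesis $\RePart(b) > 0$ controls the endpoint $s=0$ and $\RePart(c-b) > 0$ controls $s=1$, so the integrand is absolutely integrable and the termwise integration is justified (via dominated convergence, or by splitting off a neighborhood of $s=1$). Each resulting integral is a beta function in disguise:
\[
\int_0^1 s^{b+n-1}(1-s)^{c-b-1}\, ds \;=\; \beta(b+n,\, c-b) \;=\; \frac{\Gamma(b+n)\,\Gamma(c-b)}{\Gamma(c+n)},
\]
where the second equality is exactly Lemma~\ref{lem:betagamma}. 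Using the Pochhammer identity $\Gamma(b+n)/\Gamma(b) = (b)_n$ and $\Gamma(c)/\Gamma(c+n) = 1/(c)_n$ (a direct consequence of Fact~\ref{fact:gamma}), the series becomes
\[
\frac{\Gamma(b)\,\Gamma(c-b)}{\Gamma(c)} \sum_{n=0}^{\infty} \frac{(a)_n (b)_n}{(c)_n}\, \frac{x^n}{n!},
\]
which is exactly $\frac{\Gamma(b)\,\Gamma(c-b)}{\Gamma(c)}\cdot {}_2F_1(a,b;c;x)$ by Definition~\ref{def:hypergeom}. Rearranging yields the claimed identity.

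The only genuinely subtle point, and the step I expect to require the most care, is the justification of the termwise integration and the extension of the identity from the disc $|x|<1$ to the principal branch of $(1-xs)^{-a}$ on the cut plane. For $|x|<1$, uniform convergence of the binomial series on compact subsets of $[0,1)$ combined with an integrable dominating function near $s=1$ settles the exchange. For general $x$ with $(1-xs)^{-a}$ interpreted via its principal value, both sides of the identity are holomorphic in $x$ on $\mathbb{C} \setminus [1,\infty)$ (the left side by the standard domain of analyticity of ${}_2F_1$, the right side by differentiation under the integral sign using the hypotheses $\RePart(c)>\RePart(b)>0$), so the identity extends by analytic continuation. This completes the proof without any new computational ingredients beyond the binomial series and the beta–gamma identity.
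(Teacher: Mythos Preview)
The paper does not actually prove this lemma; it merely states it with a citation to \cite{andrews1999special} (Theorem~2.2.1) and uses it as a black box in the primer on special functions. So there is no ``paper's own proof'' to compare against.

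That said, your argument is the standard textbook proof (and is essentially the one given in \cite{andrews1999special}): expand $(1-xs)^{-a}$ by the generalized binomial theorem, interchange sum and integral, evaluate each term as a beta integral via Lemma~\ref{lem:betagamma}, convert the gamma ratios to Pochhammer symbols, and identify the hypergeometric series. Your treatment of the analytic continuation to the cut plane is also the standard one. The proof is correct.
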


\begin{definition}
The {\em incomplete beta integral} is defined as
\[ \beta_x(a,b)= \int_0^x t^{a-1} (1-t)^{b-1} dt, \]
and is well-defined for $\RePart(a)>0$ and $x \notin [1,\infty)$.
\end{definition}

\Cref{l:trig_beta} easily extends to the incomplete
beta integral too, as captured in the following lemma.

\begin{lemma}
    \label{l:incomp_trig}
    \[ \int_{0}^{\phi} \sin^{a-1}\theta \cos^{b-1} \theta d\theta = \frac12 \beta_{\sin^2 \phi}(a/2,b/2)  \]
\end{lemma}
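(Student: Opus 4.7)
The plan is to prove this identity by a single change of variables in the defining integral for the incomplete beta function. Specifically, I would start from the right-hand side
\[
\tfrac{1}{2}\beta_{\sin^2 \phi}(a/2,b/2) = \tfrac{1}{2}\int_0^{\sin^2 \phi} t^{a/2-1}(1-t)^{b/2-1}\,dt
\]
and apply the substitution $t = \sin^2 \theta$. For $\phi \in [0,\pi/2]$ this map is a smooth monotone bijection from $[0,\phi]$ onto $[0,\sin^2 \phi]$, so it is legitimate to use it.

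Under this substitution, the Jacobian is $dt = 2\sin\theta\cos\theta\,d\theta$, and using the Pythagorean identity $1-\sin^2\theta = \cos^2\theta$ the two factors in the integrand transform as $t^{a/2-1} = \sin^{a-2}\theta$ and $(1-t)^{b/2-1} = \cos^{b-2}\theta$. Multiplying these together with the Jacobian gives
\[
t^{a/2-1}(1-t)^{b/2-1}\,dt = 2\sin^{a-1}\theta\,\cos^{b-1}\theta\,d\theta,
\]
and the prefactor of $\tfrac{1}{2}$ cancels the $2$ from the Jacobian. The endpoints $t=0$ and $t=\sin^2 \phi$ correspond to $\theta=0$ and $\theta=\phi$ respectively, so the transformed integral is exactly $\int_0^{\phi}\sin^{a-1}\theta\,\cos^{b-1}\theta\,d\theta$, matching the left-hand side.

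There is no real obstacle in this argument; the whole proof is a one-line substitution followed by bookkeeping of exponents. The only point worth flagging is that the substitution requires $\phi \in [0,\pi/2]$ so that $\theta \mapsto \sin^2\theta$ is injective — this is the range in which the lemma is invoked (see Claim~\ref{l:rphi}). As a sanity check, taking $\phi = \pi/2$ recovers Lemma~\ref{l:trig_beta}, since $\beta_{1}(a/2,b/2) = \beta(a/2,b/2)$ by the definition of the (complete) beta function.
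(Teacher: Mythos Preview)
Your proof is correct and takes essentially the same approach as the paper: the paper goes from the left-hand side via the two-step substitution $t=\sin\theta$ followed by $s=t^2$, which composes to exactly your single substitution $t=\sin^2\theta$. The bookkeeping, the role of the factor $\tfrac12$, and the restriction to $\phi\in[0,\pi/2]$ all match.
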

\begin{proof}
Let $\sin \theta = t$, then  $\cos \theta = \sqrt{1-t^2}$ and $(\cos
\theta) d\theta = dt$, and we get
\[\int_{0}^{\phi} \sin^{a-1}\theta \cos^{b-1} \theta d\theta = \int_0^{\sin(\phi)} t^{a-1} (1-t^2)^{(b-2)/2} dt. \]
Setting $s=t^2$   gives 
\[
\int_0^{\sin^2(\phi)} (1/2) s^{(a-2)/2} (1-s)^{(b-2)/2} ds  = (1/2) \beta_{\sin^2 \phi}(a/2,b/2).
\]
This completes the proof.
\end{proof}

The following identity relates the incomplete beta integral to hypergeometric functions.
\begin{lemma}[Gauss's Identity][Exercise 8.7 in~\cite{BW10}]  \label{l:inc_beta_hyper}
\[ \beta_x(a,b) = \int_0^x t^{a-1} (1-t)^{b-1} dt = \frac{ x^a}{a}\cdot \pFq{2}{1}{a,1-b}{a+1}{x}\] 
\end{lemma}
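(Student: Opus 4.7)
\textbf{Proof proposal for Lemma~\ref{l:inc_beta_hyper}.} The plan is a direct series-manipulation argument: expand $(1-t)^{b-1}$ as a power series, integrate term by term, and recognize the resulting series as a ${}_2F_1$. Concretely, using the binomial expansion noted in the excerpt (following Definition~\ref{def:hypergeom}), $(1-t)^{-c} = \sum_{n \ge 0}\frac{(c)_n}{n!} t^n$, applied with $c = 1-b$, we get
\[
(1-t)^{b-1} = \sum_{n=0}^{\infty} \frac{(1-b)_n}{n!}\, t^n,
\]
which converges absolutely for $|t|<1$. First I would assume $|x|<1$ and $\RePart(a)>0$ so that the series converges uniformly on $[0,x]$, which lets me interchange sum and integral; the general case then follows by analytic continuation in $x$, $a$, and $b$.

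After interchanging, term-by-term integration yields
\[
\int_0^x t^{a-1}(1-t)^{b-1}\,dt = \sum_{n=0}^{\infty} \frac{(1-b)_n}{n!}\int_0^x t^{a+n-1}\,dt
= \sum_{n=0}^{\infty} \frac{(1-b)_n}{n!} \cdot \frac{x^{a+n}}{a+n}.
\]
Pulling out $x^a/a$, the remaining factor becomes $\frac{a}{a+n}\,x^n$ inside the sum, and the key algebraic identity to deploy is
\[
\frac{a}{a+n} = \frac{a(a+1)\cdots(a+n-1)}{(a+1)(a+2)\cdots(a+n)} = \frac{(a)_n}{(a+1)_n},
\]
which follows immediately from the definition of the Pochhammer symbol. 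Substituting,
\[
\beta_x(a,b) = \frac{x^a}{a}\sum_{n=0}^{\infty} \frac{(a)_n(1-b)_n}{(a+1)_n}\,\frac{x^n}{n!} = \frac{x^a}{a}\cdot \pFq{2}{1}{a,\,1-b}{a+1}{x},
\]
which is exactly the claimed identity.

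There is no real obstacle here: the only technical point is the justification of swapping sum and integral (trivial by uniform convergence on $[0,x]$ for $|x|<1$), and the only algebraic step of note is the Pochhammer ratio identity above. Everything else is routine. If one prefers to avoid analytic continuation, one can alternatively observe that both sides are analytic in $x$ on the slit disc $\mathbb{C}\setminus[1,\infty)$ and agree on $(0,1)$, hence everywhere by the identity theorem.
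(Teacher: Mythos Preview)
Your proof is correct, but it takes a different route from the paper's. The paper substitutes $t = sx$ to transform the incomplete beta integral into $x^a \int_0^1 s^{a-1}(1-sx)^{b-1}\,ds$, which is then recognized directly as an instance of Euler's integral representation (Lemma~\ref{lem:euler-rep}) with parameters $1-b$, $a$, $a+1$; the factor $\Gamma(a)\Gamma(1)/\Gamma(a+1) = 1/a$ and the symmetry of ${}_2F_1$ in its first two arguments then finish the argument in one line. Your approach instead expands $(1-t)^{b-1}$ as a binomial series, integrates term by term, and identifies the resulting coefficients via the Pochhammer ratio $(a)_n/(a+1)_n = a/(a+n)$. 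The paper's route is slightly slicker since it leverages a lemma already on hand and avoids any convergence discussion; your route is more self-contained and makes the combinatorial content of the identity explicit, at the modest cost of the uniform-convergence justification you correctly flag.
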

\begin{proof}
    It is natural to substitute $t = sx$, as we can now integrate from $s=0$ to $1$. This gives
    \[ \int_0^1  x^{a-1} s^{a-1} (1-sx)^{b-1} x ds  = x^a \int_0^1 s^{a-1} (1-sx)^{b-1} ds \]
    Using the integral form given in~\Cref{lem:euler-rep} with $1-b$
    in the place of $a$, $a$ in the place of $b$, and $a+1$ in the
    place of $c$, we get that the integral equals
    \[
        x^a \frac{\Gamma(a)\Gamma(1)}{\Gamma(a+1)} \cdot \pFq{2}{1}{1-b,a}{a+1}{x}   
    =
    \frac{x^a}{a} \cdot \pFq{2}{1}{1-b,a}{a+1}{x} 
    \]
    By the symmetry in the definition of ${}_2F_1$ with respect to  the  first two arguments, the result follows.
\end{proof}

\subsection{Proof of\texorpdfstring{~\Cref{thm:cut-prob}}{}}
First, we will prove the claim, which expresses the function $r(\phi)$ in terms of the incomplete
beta function.
\rphiclm*

\begin{proof}
    Recall $r(\phi)= \abs{F_{\theta}(e^{i \phi}) - F_{\theta}(1)}$. Furthermore, from~\Cref{lem:conformal-properties},
    we know that the conformal map maps the arc from $1$ to $i$ to an edge on the rhombus $\mathbb{S}$.
    Hence we can write $r(\phi)$ as an integral
    \begin{align*}
        r(\phi) &= \int_{0}^{\phi} |F_{\theta}'(e^{i \phi})| d\phi  = \int_{0}^{\phi} |f_{\theta}(e^{i \phi})| d\phi.  \\
                \intertext{Expanding $f_{\theta}$, and substituting
                  $a=\frac{\theta}{\pi}$ and $b=1-a$, we have}
                &= \int_{0}^{\phi} |(1-e^{2i\phi} )^{a-1}\cdot (1+e^{2i\phi})^{b-1}| d\phi.  \\
                \intertext{Expanding this in terms of  trigonometric
                  functions, and simplifying using double angle
                  formulas, we get}
                &= \int_{0}^{\phi} |(2 \cdot e^{i\phi}\cdot \sin\phi )^{a-1}\cdot (2\cdot e^{-i\phi}\cdot \cos \phi)^{b-1}| d\phi  \\
                &= \int_{0}^{\phi} |2^{a+b-2}|\cdot  |e^{i\phi(a-1)}| \cdot |\sin\phi^{a-1}|
                            \cdot |e^{-i\phi(b-1)}| \cdot| \cos \phi^{b-1}| d\phi.  \\
                \intertext{Since $|e^{i\phi(a-1)}| = |e^{i\phi(b-1)}| = 1$  and the remaining terms are positive, we drop the norms. }
                &= \int_0^\phi \frac{1}{2} (\sin \phi)^{a-1} (\cos  \phi)^{b-1} d\phi \\
                &= \frac{1}{4} \beta_{\sin^2 \phi} (a/2,b/2) && \text{by~\Cref{l:incomp_trig}.}
    \end{align*}
\end{proof}

By substituting $\phi=\pi/2$ we immediately get the following corollary:
\begin{corollary}
    \label{l:side-length} The length of the side of  rhombus is given by $r = r(\pi/2) = \nicefrac[]{1}{4}\cdot \beta(a/2,b/2)$.
\end{corollary}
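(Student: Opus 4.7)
The proof is essentially immediate from Claim~\ref{l:rphi}, which gives the closed form $r(\phi)=\tfrac{1}{4}\beta_{\sin^2\phi}(a/2,b/2)$ for all $\phi\in[0,\pi/2]$. My plan is simply to evaluate this expression at $\phi=\pi/2$ and then invoke the definition of the (complete) beta function to rewrite the result.

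In detail, I would first substitute $\phi=\pi/2$ into the formula from Claim~\ref{l:rphi}, using $\sin^2(\pi/2)=1$, to obtain
\[
r(\pi/2)\;=\;\tfrac{1}{4}\,\beta_{1}(a/2,b/2)\;=\;\tfrac{1}{4}\int_0^1 t^{a/2-1}(1-t)^{b/2-1}\,dt.
\]
By the definition of the beta function, the right-hand integral is exactly $\beta(a/2,b/2)$, yielding the claimed identity $r(\pi/2)=\tfrac{1}{4}\beta(a/2,b/2)$.

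For the interpretation of $r(\pi/2)$ as the side length of the rhombus, I would appeal to Lemma~\ref{lem:conformal-properties}: the conformal map sends $1\mapsto b_1$ and $i\mapsto b_2$, and the image of the arc from $1$ to $i$ on $\partial\mathbb{D}$ is exactly the side $[b_1,b_2]$ of $\mathbb{S}$ (up to the positive real scaling constant $c$ in Theorem~\ref{thm:conformal-a96}). Thus $|F_\theta(i)-F_\theta(1)|$ is, by definition of $r(\phi)$, the length of this side in the chosen normalization.

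There is no real obstacle: the content of the corollary is entirely encapsulated in Claim~\ref{l:rphi}, and the corollary merely records the $\phi=\pi/2$ endpoint for later use in computing the separation probability (where the ratio $r(\phi)/r(\pi/2)$ appears in Lemma~\ref{lem:prob-cut}).
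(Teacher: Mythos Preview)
Your proposal is correct and matches the paper's approach exactly: the paper simply states that the corollary follows by substituting $\phi=\pi/2$ into Claim~\ref{l:rphi}, which is precisely what you do (with the added observation that $\beta_1=\beta$). Your extra remark about the geometric interpretation via Lemma~\ref{lem:conformal-properties} is a reasonable elaboration but not strictly needed for the corollary as stated.
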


The claim below will characterize the integral of  the incomplete beta function which will be important for us later.

\rphiintclm*

\begin{proof}
  By~\Cref{l:incomp_trig}, the left hand side equals
\begin{align*}
    & \int_0^{\pi/2}  \beta_{\sin^2 \phi}\left(\frac{a}{2},\frac{b}{2}\right) d\phi  \\
            &= \int_0^{\pi/2}\frac{ 2\left(\sin^2 \phi\right)^{a/2}}{a}
            \pFq{2}{1}{\frac{a}{2},1-\frac{b}{2}}{\frac{a}{2}+1}{\sin^2\phi } d\phi
            &&\text{By~\Cref{l:inc_beta_hyper}} \\
            &= \int_0^{\pi/2}\frac{ 2 (\sin \phi)^{a}}{a}
           \pFq{2}{1}{\frac{a}{2},\frac{a+1}{2}}{\frac{a}{2}+1}{\sin^2\phi } d\phi
            && \text{Substituting $b=1-a$} \\
     &= \frac{2}{a} \int_0^{\pi/2} \Big(\sum_{n=0}^\infty \frac {(a/2)_n (a/2+1/2)_n}{(a/2+1)_n}
     \frac{(\sin\phi)^{2n+a}}{n!} \Big) d\phi
             && \text{ Expand using~\Cref{def:hypergeom}} \\
             &= \frac{2}{a} \sum_{n=0}^\infty   \Big( \int_0^{\pi/2} (\sin\phi)^{2n+a} d\phi \Big) \frac{(a/2)_n (a/2+1/2)_n}{(a/2+1)_n \cdot n!}  \tag{*} \label{eq:ga}
\end{align*}

We take a brief interlude to analyze the integral in the parenthesis
above:
\begin{align*}
    \int_0^{\pi/2}\left( \sin \phi \right)^{2n+a} d \phi &= \frac{1}{2} \beta(n+a/2+1/2,1/2) && \text{By~\Cref{l:trig_beta}} \\
                                      &= \frac{\Gamma(1/2)}{2}  \frac{\Gamma(n+a/2+1/2)}{\Gamma(n+a/2+1)}  &&\text{By~\Cref{lem:betagamma}} \\
                                      &= \frac{\Gamma(1/2)}{2} \frac{ (a+1/2)_n  \Gamma(a/2+1/2)}{(a/2+1)_n \Gamma(a+1)}\\
                                      &= \frac{\beta(a/2+1/2,1/2)}{2} \frac{ (a/2+1/2)_n }{(a/2+1)_n}.
\end{align*}
Going back and substituting the above result into the~\Cref{eq:ga}, we get
\begin{align*}
  &= \frac{\beta(a/2+1/2,1/2)}{a}  \left(\sum_{n=0}^\infty \frac {(a/2)_n (a/2+1/2)_n (a/2+1/2)_n}{n!(a/2+1)_n (a+1)_n} \right) \\
  &= \frac{\beta(a/2+1/2,1/2)}{a}  \cdot  \pFq{3}{2}{\frac{1+a}{2},\frac{1+a}{2},\frac{a}{2}}{\frac{a}{2}+1,\frac{a}{2}+1}{1}\\
\end{align*}
\end{proof}

Armed with Claim~\ref{lem:rphiint} and~\Cref{l:side-length}, we can prove~\Cref{thm:cut-prob}.
\cutprobthm*
\begin{proof}
    Substituting $r = r(\pi/2)$ below, by~\Cref{lem:prob-cut} we have
    that the probability of separating the vertices is
\[ \frac{2}{\pi} \int_{\phi=0}^{\pi/2} 1-\frac{r(\phi)}{r} d\phi, \]
or equivalently, the probability of \emph{not} separating them is
 \[\frac{2}{\pi} \int_{\phi=0}^{\pi/2} \frac{r(\phi)}{r} d\phi  \]

 \noindent Expanding the above using Claim~\ref{l:rphi}, we get that
 this equals
 \begin{align*}
     & \frac{2}{\pi} \frac{4}{\beta(a/2,b/2)}  \int_{\phi=0}^{\pi/2} \frac14 \beta_{\sin^2 \psi}(a/2,(1-a)/2) d\psi.  \\
     \intertext{Expanding the right hand side integral using Claim~\ref{lem:rphiint}, we get}
     &= \frac{2}{\pi} \frac{1}{\beta(a/2,b/2)} \cdot \frac{ \beta(a/2+1/2,1/2)}{a} \cdot {}_3F_2(1/2+a/2,1/2+a/2,a/2;a/2+1,a+1;1).
     \intertext{Using Lemma~\ref{lem:betagamma} and the fact that
       $\Gamma(1/2)^2 = \pi$ we can simplify this to}
     &= \frac{\Gamma(a+1/2)}{\Gamma(1/2-a/2)\Gamma(a/2+1)^2} \cdot {}_3F_2(1/2+a,1/2+a/2,a/2;a/2+1,a+1;1).
 \end{align*}
 \end{proof}

 \subsection{Proof of\texorpdfstring{~\Cref{th:pi}}{Theorem 5}}
First, we rewrite $r$ in a form that will be useful later. 
\begin{claim}
\label{l:ga-alt-form}
\[
2 \int_0^{\pi/2}  r(\phi) d\phi = \int_0^{\pi/2} \phi (\sin \phi)^{b-1}(\cos \phi)^{a-1} d\phi \]
\end{claim}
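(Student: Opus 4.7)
The plan is to unwind the incomplete-beta expression for $r(\phi)$, then swap the order of integration and apply a standard change of variable $\phi \mapsto \pi/2 - \phi$ to obtain the desired form.

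First I would recall from the proof of Claim~\ref{l:rphi} (or verify directly by substituting $t = \sin^2\psi$ in the definition of $\beta_{\sin^2\phi}(a/2, b/2)$) that
\[
2r(\phi) \;=\; \int_0^{\phi} (\sin\psi)^{a-1}(\cos\psi)^{b-1}\, d\psi.
\]
Plugging this into the left-hand side and swapping the order of integration via Fubini gives
\[
2\int_0^{\pi/2} r(\phi)\, d\phi
= \int_0^{\pi/2} \!\!\int_0^{\phi} (\sin\psi)^{a-1}(\cos\psi)^{b-1} d\psi\, d\phi
= \int_0^{\pi/2} \Bigl(\tfrac{\pi}{2}-\psi\Bigr) (\sin\psi)^{a-1}(\cos\psi)^{b-1} d\psi.
\]

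Next I would perform the change of variable $\psi \mapsto \pi/2 - \psi$, under which $\sin\psi$ and $\cos\psi$ swap and $\pi/2 - \psi$ becomes $\psi$. The integral transforms to
\[
\int_0^{\pi/2} \psi\, (\cos\psi)^{a-1}(\sin\psi)^{b-1} d\psi
\;=\; \int_0^{\pi/2} \phi\, (\sin\phi)^{b-1}(\cos\phi)^{a-1} d\phi,
\]
after relabeling, which is exactly the right-hand side of the claim.

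There is essentially no obstacle here: the identity is a short computation combining the closed form for $r(\phi)$ from Claim~\ref{l:rphi}, a Fubini swap, and a reflection symmetry of the trigonometric weight on $[0,\pi/2]$. The only thing worth being a little careful about is verifying the intermediate form $2r(\phi) = \int_0^{\phi} (\sin\psi)^{a-1}(\cos\psi)^{b-1} d\psi$, which follows either from the derivation inside the proof of Claim~\ref{l:rphi} or directly by differentiating the incomplete beta integral $\beta_{\sin^2 \phi}(a/2,b/2)$ with respect to $\phi$.
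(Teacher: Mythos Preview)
Your proof is correct and follows essentially the same route as the paper's: both start from the integral form $2r(\phi)=\int_0^{\phi}(\sin\psi)^{a-1}(\cos\psi)^{b-1}\,d\psi$, reduce the double integral to $\int_0^{\pi/2}(\tfrac{\pi}{2}-\psi)(\sin\psi)^{a-1}(\cos\psi)^{b-1}\,d\psi$, and then apply the reflection $\psi\mapsto \tfrac{\pi}{2}-\psi$. The only cosmetic difference is that the paper phrases the middle step as integration by parts (with $q=\tfrac{\pi}{2}-\phi$ and $p$ the inner integral) rather than as a Fubini swap; the two are equivalent here.
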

\begin{proof}
    The left hand side equation  can be written as
\[ 2 \int_0^{\pi/2}  r(\phi) d\phi = \int_0^{\pi/2} \frac{1}{2} \beta_{\sin^2{\phi}}(a/2,b/2) d\phi
    = \int_0^{\pi/2} \left( \int_0^{\phi}  (\sin \psi)^{a-1}(\cos\psi)^{b-1} d\psi \right)  d\phi \]
Applying integration by parts: $\int p dq = [pq] - \int q dp $ with $q=\pi/2-\phi$ and $p = \int_0^{\phi} (\sin \psi^{a-1})(\cos\psi)^{b-1} d\psi$ gives

 \[ \Big[(\pi/2 - \phi) \int_0^{\phi} (\sin \psi^{a-1})(\cos\psi)^{b-1} d\psi \Big]_0^{\pi/2} + \int_0^{\pi/2} (\pi/2-\phi) \frac{d}{d\phi} \int_0^{\phi} (\sin \psi^{a-1})(\cos\psi)^{b-1} d\psi \]

The first term is $0$, and using Fundamental Theorem of Calculus, the second term is
\[ \int_0^{\pi/2} (\pi/2-\phi) (\sin \phi)^{a-1}(\cos \phi)^{b-1} d\phi \]
Substituting $\phi$ for $\pi/2-\phi$ gives
    \[\int_0^{\pi/2} \phi (\sin \phi)^{b-1}(\cos \phi)^{a-1} d\phi. \]
\end{proof}

Next, we claim that
\begin{claim}
    \label{lem:epspinumeric}
    When $\theta = (1-\epsilon)\pi$, we can say
    \[
    2 \int_{0}^{\pi/2} r(\phi) d\phi \leq   2 \cdot \Big(1+O(\epsilon \log (1/\epsilon)) \Big)
    .\]
\end{claim}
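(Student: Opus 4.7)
I would start from the alternative form established in Claim~\ref{l:ga-alt-form}, which, after substituting $a=1-\epsilon$ and $b=\epsilon$, gives
\[
2\int_0^{\pi/2} r(\phi)\,d\phi \;=\; \int_0^{\pi/2} \phi(\sin\phi)^{\epsilon-1}(\cos\phi)^{-\epsilon}\,d\phi \;=\; \int_0^{\pi/2} \frac{\phi}{\sin\phi}\,(\tan\phi)^{\epsilon}\,d\phi.
\]
The factor $\phi/\sin\phi$ is bounded and continuous on $[0,\pi/2]$ (with limit $1$ at $0$ and value $\pi/2$ at $\pi/2$), so the only source of blow-up is the $(\tan\phi)^\epsilon$ factor near $\phi=\pi/2$. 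The key ingredient will be the classical identity $\int_0^{\pi/2}\phi/\sin\phi\,d\phi = 2G$, where $G$ is Catalan's constant; since $G = 1 - \tfrac19 + \tfrac1{25} - \ldots < 1-\tfrac{1}{9}+\tfrac{1}{25}<1$ by the alternating series estimate, one has $2G<2$, and this slack is what ultimately absorbs the $O(\epsilon\log(1/\epsilon))$ error.

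\textbf{Key steps.} First, I would split the integral at $\phi = \pi/2 - \epsilon$ into a main piece $I_{\text{main}}$ on $[0,\pi/2-\epsilon]$ and a tail $I_{\text{tail}}$ on $[\pi/2-\epsilon,\pi/2]$. For $I_{\text{main}}$, note that on this interval $\tan\phi \le \cot\epsilon \le \pi/(2\epsilon)$ (using Jordan's inequality $\sin\epsilon \ge 2\epsilon/\pi$ together with $\cos\epsilon\le 1$), hence
\[
(\tan\phi)^\epsilon \le \left(\tfrac{\pi}{2\epsilon}\right)^{\!\epsilon} = \exp\!\bigl(\epsilon\log(\pi/(2\epsilon))\bigr) = 1 + O(\epsilon\log(1/\epsilon)),
\]
so $I_{\text{main}} \le \bigl(1+O(\epsilon\log(1/\epsilon))\bigr)\int_0^{\pi/2}\phi/\sin\phi\,d\phi = 2G\bigl(1+O(\epsilon\log(1/\epsilon))\bigr)$. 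For $I_{\text{tail}}$, substitute $\psi = \pi/2-\phi$ to get
\[
I_{\text{tail}} = \int_0^\epsilon \frac{\pi/2-\psi}{\cos\psi}(\cot\psi)^\epsilon\,d\psi \le \pi \int_0^\epsilon \psi^{-\epsilon}\,d\psi = \frac{\pi\,\epsilon^{\,1-\epsilon}}{1-\epsilon} = O(\epsilon),
\]
where I used $\cos\psi\ge 1/2$ on $[0,\epsilon]$ (for $\epsilon$ small) and $\cot\psi\le 1/\psi$ (since $\tan\psi\ge\psi$). Finally, combining the two bounds,
\[
2\int_0^{\pi/2} r(\phi)\,d\phi \;\le\; 2G\bigl(1+O(\epsilon\log(1/\epsilon))\bigr) + O(\epsilon) \;\le\; 2\bigl(1+O(\epsilon\log(1/\epsilon))\bigr),
\]
as required.

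\textbf{Main obstacle.} The delicate step is choosing the split point correctly: pushing the cutoff closer to $\pi/2$ shrinks the tail but inflates $(\tan\phi)^\epsilon$ in the main region through the bound $(\cot\delta)^\epsilon$, while pulling the cutoff back makes the tail contribute more than $O(\epsilon)$ (since $\int_0^\delta \psi^{-\epsilon}\,d\psi \sim \delta$). The choice $\delta = \epsilon$ is the natural balance point that forces the tail contribution to be $O(\epsilon)$ while keeping the multiplicative error in the main region of size $O(\epsilon\log(1/\epsilon))$. A secondary subtlety is that the leading constant $2G < 2$ is strict, so the bound $I_{\text{main}} \le 2G(1+O(\epsilon\log(1/\epsilon)))$ comfortably fits under $2(1+O(\epsilon\log(1/\epsilon)))$ for small $\epsilon$ despite the multiplicative blow-up of the error term; alternatively, one can avoid invoking Catalan's constant and instead crudely bound $\int_0^{\pi/2}\phi/\sin\phi\,d\phi \le \int_0^{\pi/2} 1/(1-\phi^2/6)\,d\phi = \sqrt{6}\,\mathrm{arctanh}(\pi/(2\sqrt{6})) < 2$, using $\sin\phi \ge \phi - \phi^3/6$.
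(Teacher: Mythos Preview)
Your argument is correct and follows the same strategy as the paper: both start from Claim~\ref{l:ga-alt-form} to rewrite the quantity as $\int_0^{\pi/2}\frac{\phi}{\sin\phi}(\tan\phi)^{\epsilon}\,d\phi$, split near $\pi/2-\epsilon$ (the paper uses $\arctan(1/\epsilon)$, asymptotically the same cutoff), bound $(\tan\phi)^\epsilon\le 1+O(\epsilon\log(1/\epsilon))$ on the main interval, and show the tail contributes $O(\epsilon)$. Your handling of the leading constant via $\int_0^{\pi/2}\frac{\phi}{\sin\phi}\,d\phi=2G<2$ is in fact more careful than the paper's, which only invokes the pointwise bound $\phi/\sin\phi\le 2$ and so, read literally, would give leading term $\pi$ rather than $2$.
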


\begin{proof}

    Using Claim~\ref{l:ga-alt-form}, we can write
    \begin{align*}
    2 \int_{0}^{\pi/2}  r(\phi) d\phi  &= \int_0^{\pi/2} \phi (\sin \phi)^{b-1}(\cos \phi)^{a-1} d\phi \\
    &= \int_0^{\pi/2} \frac{ \phi }{\sin  \phi}  (\tan \phi)^{\epsilon} d\phi
    \end{align*}

    Since $\frac{x}{\sin(x)} \leq 2$ for $0 \le x \le
    \nicefrac[]{\pi}{2}$, to prove the claim it suffices to show that
\[ \int_0^{\pi/2} \left( (\tan \phi)^\epsilon -1 \right) d\phi = O(\epsilon \log (1/\epsilon)). \]
Let $\phi_0 = \arctan(1/\epsilon)$. We will break the above integral into two parts and deal with each separately:

\[ \int_0^{\pi/2} \left( (\tan \phi)^\epsilon -1 \right) d\phi =  \int_0^{\phi_0} \left( (\tan \phi)^\epsilon -1 \right)  +
\int_{\phi_0}^{\pi/2} \left( (\tan \phi)^\epsilon -1 \right).  \]

{\bf Case 1} for $\phi \leq \phi_0$,
\[ (\tan \phi)^\epsilon \leq  \left( \frac{1}{\epsilon} \right)^{\epsilon} = \exp(\epsilon \ln(1/\epsilon)) =   1 + O(\epsilon \log (1/\epsilon) ),\]
so  \[ \int_0^{\phi_0} \left((\tan \phi)^\epsilon -1  \right)d\phi = O(\epsilon \log (1/\epsilon) ). \]

{\bf Case 2}
For $\phi > \phi_0$,

\begin{align*}
    \int_{\phi_0}^{\pi/2} \left((\tan \phi)^\epsilon -1\right) d\phi
    &\leq \int_{\phi_0}^{\pi/2}
        1/(\cos \phi)^\epsilon  d\phi \\
    &= \int_0^{\pi/2 - \phi_0} (1/\sin \phi)^\epsilon d \phi && \text{ Since} \sin(x) = \cos(\pi/2-x) \\
    &\leq  \int_0^{\pi/2 -\phi_0} (2/ \phi)^\epsilon d \phi && \text{Since }1\leq \nicefrac[]{x}{\sin(x)} \leq 2 \\
    &\leq  2^{\epsilon}  \frac{(\pi/2-\phi_0)^{1-\epsilon}}{1-\epsilon}  \\
    &\leq (\pi/2-\phi_0) (1+O(\epsilon)).
\end{align*}
Finally, we note that $\pi/2-\phi_0 \leq \tan(\pi/2-\phi_0) = 1/\tan(\phi_0) = \epsilon$.
\end{proof}

\piepsthm*
\begin{proof}
Let $a = 1-\epsilon$ and $b=\epsilon$. 

As discussed in~\Cref{lem:prob-cut}, the non-separation probability is
\[ \frac{2}{\pi r} \int_{0}^{\pi/2} r(\phi) d\phi  \] where $r = r(\pi/2)$.
So we will compute the asymptotics of $r:=r(\pi/2)$ and $ \int_{0}^{\pi/2} 2 \cdot r(\phi) d\phi $ as $\eps \to 0$.

First we compute the asymptotics of $r$ as $\eps \to 0$.  Recall that
\begin{align*}
    r &= (1/4) \beta (a/2,b/2)   && \text{By~\Cref{l:side-length}} \\
      &= \frac{ \Gamma((1-\epsilon)/2)\Gamma(\epsilon/2)}{4\Gamma(1/2)} &&\text{By~\Cref{lem:betagamma}} \\
      &= \frac{ \Gamma((1-\epsilon)/2)\Gamma(1+\epsilon/2)}{2 \epsilon \Gamma(1/2)} &&
            \text{ Using }\Gamma(\frac{\epsilon}{2}) = \frac{2}{\epsilon} \Gamma(1+ \frac{\epsilon}{2}) \\
    \intertext{Using the  standard fact that $\Gamma(z+\epsilon) = \Gamma(z)(1+O(\epsilon))$ for fixed $z >  0 $}
      &= \frac{1}{2 \epsilon \Gamma(1/2)}  (\Gamma(1/2) + O(\epsilon))(\Gamma(1) + O(\epsilon))  \\
      &= \frac{1}{(2\epsilon)} + O(1)
\end{align*}
 which implies that
 \[ 1/r = 2\epsilon + O(\epsilon)^2 \]

Using Claim~\ref{lem:epspinumeric}, we know that $ \int_{0}^{\pi/2} 2 \cdot r(\phi) d\phi $
is at most  $2\cdot \Big(1+O(\epsilon \log (1/\epsilon) \Big) $.
Combining the two, we get the probability of non-separation is
$  \epsilon \frac{4}{\pi}  + O(\epsilon^2) \approx 1.27 \epsilon +O(\epsilon^2)$.
\end{proof}

\subsection{Other Missing Proofs}

\basictransform*
\begin{proof}
Part 1 is immediate from the continuity and linearity of the map $\mathbf{O}\cdot \bW^{-\nicefrac[]{1}{2}}$.

To prove part 2, observe that the $W^{\nicefrac[]{1}{2}}$ is given explicitly
by the matrix
\[
    \bW^{\frac{1}{2}} = \frac{1}{\sqrt{2}} \cdot
    \begin{bmatrix}
            \cos(\frac{\theta}{2}) + \sin(\frac{\theta}{2}) & \cos(\frac{\theta}{2}) - \sin(\frac{\theta}{2}) \\
            \cos(\frac{\theta}{2}) - \sin(\frac{\theta}{2}) & \cos(\frac{\theta}{2}) + \sin(\frac{\theta}{2})
    \end{bmatrix}.
\]
Taking, its inverse, we get the matrix
\[
    \bW^{-\frac{1}{2}} = \frac{1}{\sqrt{8}} \cdot
    \begin{bmatrix}
            \sec(\frac{\theta}{2}) + \csc(\frac{\theta}{2}) & \sec(\frac{\theta}{2}) - \csc(\frac{\theta}{2}) \\
            \sec(\frac{\theta}{2}) - \csc(\frac{\theta}{2}) & \sec(\frac{\theta}{2}) + \csc(\frac{\theta}{2})
    \end{bmatrix}.
\]
Since $\bW^{-\frac12}[-1,1]^2$ is the image of a parallelogram, it
must also be a parallelogram. Moreover, one can directly check that
the diagonals are orthogonal to each other, so it must be a rhombus.
It is easy to calculate the angle between the sides and see that it is
exactly $\theta$ at the image of $(1, -1)$ and $\pi-\theta$ at the image
of $(1,1)$.

Then part 3 follows from the previous parts: if $\bX_t$ is one a
side or a vertex of $[-1,1]$, then $\bY_t$ is on the corresponding
side or vertex of $\rhombus$.
\end{proof}

\section{Omitted Proofs from Section~\ref{sec:pde}}

Recall that, for $0\leq x\leq 1, 0\leq y\leq 1$, the function $u(x,y)$
denotes the probability the probability of a clause being satisfied
when the random walk walk begins with marginals $(x,y)$ and angle
$\theta$. Equivalently, $u(x,y)$ is the probability that the walk,
started at $(x,y)$, ends at one of the corners $(0,0)$, $(0,1)$ or
$(1,0)$. 

\boundary*
\begin{proof}

Recall that $\tau$ is the first time when $\bX_t$ hits the boundary of
$[0,1]^2$, and $\sigma$ is the first time when $\bX_t$ hits a vertex of
$[0,1]^2$. The function $\phi$ evaluates to $1$ at the vertices
$(0,0)$, $(0,1)$, and $(1,0)$, and to $0$ at $(1,1)$. Therefore, by
definition, $u(\bx) = \E^\bx[\phi(\bX_\sigma)]$. 

Let us first consider the case when $\bx$ is on the boundary of
$[0,1]^2$. Then one of the coordinates of $\bX_t$ remains fixed for
the entire process. Since $\phi$ is affine in each of its arguments,
and $\bX_t$ is a martingale, 
\[
\forall \bx \in \bd [0,1]^2: \ u(\bx) = \E^\bx[\phi(\bX_\sigma)]
= \phi(\E^\bx[\bX_\sigma])
= \phi(\bx). 
\]

When $\bx$ is in the interior of $[0,1]^2$, we have, by the law of
total expectation, 
\[
\forall \bx \in \Int [0,1]^2:\ 
u(\bx) = \E^\bx[\phi(\bX_\sigma)]
= \E^\bx[\E^{\bX_\tau}[\phi(\bX_\sigma)] ] 
= \E^\bx[\phi(\bX_\tau)].
\]
The final equality follows by the special case when the starting point
of the random walk is on the boundary of $[0,1]^2$. This proves the
lemma. 
\end{proof}

Before we give an outline of the proofs of  Lemma~\ref{lem:feasible_maximum} and Lemma~\ref{lem:ratio_maximum}, we prove Lemma~\ref{lem:feasibility}. This will allow us to give a completely analytic proof of $\frac34$.

\feasibility*
\begin{proof}
Clearly, the first set of the constraints are obvious. We focus on the second and the third constraint.
Recall that $\bv_i=x \bv_0 + \sqrt{x-x^2} \bw_i$ and $\bv_j= y\bv_0 + \sqrt{y-y^2} \bw_j$ where $\bw_i$ and $\bw_j$ are unit vectors orthogonal to $\bv_0$ with $\cos(\theta)=\bw_j\cdot \bw_j$. Thus we have
\begin{align*}
\bv_i\cdot \bv_j=xy+\cos(\theta)\sqrt{x-x^2}\sqrt{y-y^2}\\
\end{align*}

But then we have the following valid constraint from the SDP:
\begin{align*}
\bv_i\cdot \bv_j \geq 0
\end{align*}
which implies that
\begin{align*}
\cos(\theta)\geq -\sqrt{\frac{xy}{(1-x)(1-y)}}
\end{align*}
proving the second inequality.

For the other inequality, observe that we have $\bv_{-i}=(1-x)\bv_0- \sqrt{x-x^2} \bw_i$ and $\bv_{-j}= (1-y)\bv_0 - \sqrt{y-y^2} \bw_j$. Then we have
\begin{align*}
\bv_{-i}\cdot \bv_{-j}=(1-x)(1-y)+\cos(\theta)\sqrt{x-x^2}\sqrt{y-y^2}\\
\end{align*}

But then we have the following valid constraint from the SDP:
\begin{align*}
\bv_{-i}\cdot \bv_{-j} \geq 0
\end{align*}
which implies that
\begin{align*}
\cos(\theta)\geq -\sqrt{\frac{(1-x)(1-y)}{xy}}
\end{align*}
proving the third inequality.
\end{proof}

To ease the remainder of the presentation we first prove that the Brownian rounding algorithm achieves an approximation of $\nicefrac[]{3}{4}$ for \MtwoSAT via the maximum principle.
In order to achieve that we use the following two functions for different ranges of $\theta$.
\begin{itemize}
    \item $g_1 (x,y) = 1-xy-\cos(\theta) \sqrt{x-x^2} \sqrt{y-y^2}$.
    \item $f(x,y) = 1-xy$.
\end{itemize}
First consider the case when $0\leq \theta\leq\frac{\pi}{2}$.   In this case, we show $g_1$ satisfies the requirement of the Corollary~\ref{cor:maximum} as well as give an approximation factor of $1$. The last fact is trivially true since $g_1$ is exactly the SDP objective.

For conditions of the Corollary~\ref{cor:maximum}, we need to show that
\begin{align*}
 \frac{\partial ^2 g_1}{\partial x^2} + \frac{\partial ^2 g_1}{\partial y^2} + 2\cos (\theta) \frac{\partial ^2 g_1}{\partial x \partial y} &\geq  0  &\forall (x,y)\in \Int[0,1]^2\\
 g_1(x,y) &\leq (1-xy) &\forall (x,y)\in \partial[0,1]^2\\
 \end{align*}
Since $(x-x^2)(y-y^2)=0$ on $\bd[0,1]^2$, we obtain that $g_1(x,y)=1-xy$ on $\bd[0,1]^2$ as required.
It remains to show that
$$\frac{\partial^2}{\partial x^2} g_1(x,y) + \frac{\partial^2}{\partial y^2} g_1(x,y) +2 \cos{\theta}\frac{\partial^2}{\partial x \partial y} g_1(x,y) \geq 0$$
for all $ (x,y)\in (0,1)^2$.
Consider $$h(x,y):=\frac{\partial^2}{\partial x^2} g_1(x,y) + \frac{\partial^2}{\partial y^2} g_1(x,y) +2 \cos{\theta}\frac{\partial^2}{\partial x \partial y} g_1(x,y).$$
To show $h$ is non-negative, we do the following change of variables in $x=\frac{(1+\sin(a))}{2}$ and $y=\frac{(1+\sin(b))}{2}$ for some $|a|,|b|\leq \frac{\pi}{2}$.
Such $a$ and $b$ exist since $0\leq x,y\leq 1$. Now simplifying, we obtain:
\begin{eqnarray*}
 && h\left(\frac{1+\sin(a)}{2},\frac{1+\sin(b)}{2}\right)\\
   &=&2 \cos(\theta)\sec^3(a)\sec^3(b)\Big[\left(cos^2(a)-\cos^2(b)\right)^2    \\
   &&  + 2\cos^2(a)\cos^2(b)\left(1-\cos(\theta)\sin(a)\sin(b)-\cos(a)\cos(b) \right)\Big]
\end{eqnarray*}
Since $|a|,|b|\leq \frac{\pi}{2}$ and $0\leq \theta\leq \frac{\pi}{2}$, we have that $\sec(a),\sec(b),\cos(\theta)\geq 0$.
Thus, it enough  to show that
$$1-\cos(\theta)\sin(a)\sin(b)-\cos(a)\cos(b) \geq 0.$$
Since the above expression is linear in $\cos(\theta)$, it is enough to check for extreme values of $\cos(\theta)$ which takes value between $0$ and $1$.
It is clearly true when $\cos(\theta)=0$. For $\cos(\theta)=1$, it equals $1-\cos(a-b)$ and is thus non-negative.

Now consider $-1\leq \cos(\theta)\leq 0$. We show that $f(x,y)=1-xy$ satisfies the condition of Corollary~\ref{cor:maximum} and is at least $\frac34$ the value of the SDP objective for all feasible $(x,y,\theta)$. First let us focus on the condition of Corollary~\ref{cor:maximum}. Clearly, the boundary conditions are satisfied by construction. Note that $\frac{\partial^2 f(x,y)}{\partial x^2} =0 $, $\frac{\partial^2 f(x,y)}{\partial y^2}=0$, and that $ \frac{\partial^2 f(x,y)}{\partial x \partial y} =1$.
Thus

$$\L f= -\cos(\theta)\geq 0$$
since $\cos(\theta)\leq 0$ as desired.

It remains to show that $f$ provides an approximation guarantee of $\nicefrac[]{3}{4}$ in case $\cos(\theta)< 0$.
Recall that $SDP(x,y,\theta)=1-xy-\cos(\theta)\sqrt{x-x^2}\sqrt{y-y^2}$ is the contribution of a clause to the SDP's objective whose two variables $z_i$ and $z_j$ have marginal values of $x$ and $y$ respectively and that $\cos(\theta)=\bw _i \cdot \bw_j$.
We prove the following claim which would imply that we obtain a $\frac34$-approximation.

\begin{claim}
For any $x,y,\theta$ that satisfy the feasibility conditions in Lemma~\ref{lem:feasibility} and $\cos(\theta)<0$, we have
$$g(x,y)\geq \frac{3}{4} SDP(x,y,\theta).$$
\end{claim}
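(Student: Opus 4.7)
The plan is to reduce the claimed inequality to an elementary two-variable inequality, and then exploit the two feasibility bounds of Lemma~\ref{lem:feasibility} together with a symmetry/AM--GM argument.

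First I would rearrange. Since $SDP(x,y,\theta) = 1-xy - \cos(\theta)\sqrt{(x-x^2)(y-y^2)}$, the inequality $f(x,y) = 1-xy \ge \tfrac34 SDP(x,y,\theta)$ is equivalent, after clearing denominators, to
\[
(1-xy) + 3\cos(\theta)\sqrt{(x-x^2)(y-y^2)} \;\ge\; 0.
\]
Writing $c \eqdef -\cos(\theta) > 0$ (using the hypothesis $\cos(\theta) < 0$), this becomes
\[
1-xy \;\ge\; 3\, c \sqrt{(x-x^2)(y-y^2)}. \tag{$\star$}
\]

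Next I would invoke Lemma~\ref{lem:feasibility}, parts (2) and (3), which together give
\[
c \;\le\; \min\!\left\{\sqrt{\tfrac{xy}{(1-x)(1-y)}},\; \sqrt{\tfrac{(1-x)(1-y)}{xy}}\right\}.
\]
Multiplying each bound by $\sqrt{(x-x^2)(y-y^2)} = \sqrt{x(1-x)y(1-y)}$ collapses the square roots cleanly and yields the key estimate
\[
c\sqrt{(x-x^2)(y-y^2)} \;\le\; \min\bigl\{xy,\; (1-x)(1-y)\bigr\}.
\]
So it suffices to prove $1 - xy \ge 3\min\{xy,(1-x)(1-y)\}$.

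Finally I would split into two symmetric cases. If $x+y \le 1$, then by AM--GM $xy \le \left(\tfrac{x+y}{2}\right)^2 \le \tfrac{x+y}{4} \le \tfrac14$, hence $1-xy \ge \tfrac34 \ge 3xy$, and $(\star)$ follows from $\min\{xy,(1-x)(1-y)\} \le xy$. If $x+y > 1$, set $x' = 1-x,\ y' = 1-y$; then $x'+y' < 1$ and the same AM--GM gives $4x'y' \le (x'+y')^2 \le x'+y'$. Since $1 - xy = x'+y' - x'y'$, this rearranges to $1 - xy \ge 3x'y' = 3(1-x)(1-y)$, and again $(\star)$ follows from $\min\{xy,(1-x)(1-y)\} \le (1-x)(1-y)$.

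There is really no obstacle here; the only thing to be careful about is correctly collapsing the two feasibility inequalities against $\sqrt{(x-x^2)(y-y^2)}$ so that the square roots disappear, after which the problem becomes the purely polynomial inequality handled by the case split on $x+y \lessgtr 1$.
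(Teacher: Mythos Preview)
Your proof is correct and follows essentially the same strategy as the paper: rearrange to $(\star)$, use the two feasibility bounds from Lemma~\ref{lem:feasibility} to reduce to $1-xy \ge 3\min\{xy,(1-x)(1-y)\}$, and then split into two cases. The paper splits on $xy \lessgtr \tfrac14$ and handles the harder case by directly minimizing $-2+3x+3y-4xy$ over fixed $xy$; your split on $x+y \lessgtr 1$ together with the substitution $(x',y')=(1-x,1-y)$ is a bit cleaner, since it identifies exactly which branch of the minimum is active and reduces both cases to the same AM--GM estimate.
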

\begin{proof}
From Lemma~\ref{lem:feasibility}, we have $$-\cos(\theta)\leq \min\left\{\sqrt{\frac{xy}{(1-x)(1-y)}},\sqrt{\frac{(1-x)(1-y)}{xy}}\right\}.$$

Observe that we have $g(x,y)\geq \frac{3}{4} SDP(x,y,\theta)$, if
$$(1-xy)\geq -3 \cos(\theta) \sqrt{(x-x^2)(y-y^2)}.$$

First, suppose $xy\leq \frac{1}{4}$. Then

\begin{eqnarray*}
  -3 \cos(\theta) \sqrt{(x-x^2)(y-y^2)} &\leq & 3 \sqrt{\frac{xy}{(1-x)(1-y)}}\cdot \sqrt{(x-x^2)(y-y^2)} =3 xy \\
   &\leq & 1-xy
\end{eqnarray*}

Else, if $xy\geq \frac14$, then we have

\begin{eqnarray*}
  1-xy+3 \cos(\theta) \sqrt{(x-x^2)(y-y^2)} &\geq & 1-xy -3 \sqrt{\frac{(1-x)(1-y)}{xy}}\cdot \sqrt{(x-x^2)(y-y^2)}\\
  & =& -2+3x+3y -4xy
\end{eqnarray*}
 Over all $1\geq x\geq 0,1\geq y\geq 0$ with fixed $xy$, the quantity  $2+3x+3y -4xy$ is minimized when $x=y$. Since $xy\geq \frac14$, we must have $x\geq \frac12$. But then it becomes $-2(1-3x +2x^2)=-2(1-2x)(1-x)\geq 0$ since $\frac12\leq x\leq 1$.
 This proves the $\frac34$-approximation.
\end{proof}

We now give a brief outline of the proof of Lemma~\ref{lem:feasible_maximum} and Lemma~\ref{lem:ratio_maximum}. The complete proofs involve long sum of square expression that are available at \cite{code-repo}.
\feasiblemax*
\begin{proof}
$ $\\
\textbf{Feasibility of $g_1(x,y)$.} We already showed in the above proof of $\frac34$-approximation.

\noindent \textbf{Feasibility of $g_2(x,y)$.} Now we consider $g_2(x,y)=1-xy-2\cos(\theta)(x-x^2)(y-y^2)$. Since $(x-x^2)(y-y^2)=0$ on $\bd[0,1]^2$, we obtain that $g_2(x,y)=1-xy$ on $\bd[0,1]^2$ as required. It remains to show that

$$\L g_2=\frac{\partial^2}{\partial x^2} g_2(x,y) + \frac{\partial^2}{\partial y^2} g_2(x,y) +2 \cos{\theta}\frac{\partial^2}{\partial x \partial y} g_2(x,y) \geq 0$$
for all $ (x,y)\in (0,1)^2$ for any $0\geq \cos(\theta) \geq -\frac12$.
A simple calculation allows us to obtain that
$$\L g_2=-2\cos(\theta)\left(1+2x^2+2y^2+2\cos(\theta)-2x-2y-4y\cos(\theta)-4x\cos(\theta)+8xy\cos(\theta)\right).$$

Since $-2\cos(\theta)>0$, it is enough to show that for any $0\leq x\leq 1$ and $0\leq y\leq 1$,
$$h(x,y)=1+2x^2+2y^2+2\cos(\theta)-2x-2y-4y\cos(\theta)-4x\cos(\theta)+8xy\cos(\theta)\geq 0.$$

We now prove the above inequality. Since the above expression is linear in $\cos(\theta)$, for any fixed $x,y$ the minimum appears at either $\cos(\theta)=0$ or $\cos(\theta)=-\frac12$. First consider $\cos(\theta)=0$. In this case, we obtain
$$h(x,y)=1+2x^2+2y^2-2x-2y=\frac12\left(1-2x\right)^2 +\frac12 \left(1-2y\right)^2\geq 0$$
as required.

Now if $\cos(\theta)=-\frac12$, we obtain
$$h(x,y)=2x^2+2y^2-4xy=2 (x-y)^2\geq 0$$
as required. This proves $\L g_2\geq 0$.

\noindent \textbf{Feasibility of $g_3(x,y)$.} Now we consider $g_3(x,y)=1-xy-\frac12(1+5 \cos(\theta))(x-x^2)(y-y^2)(x+y)(2-x-y)$ on $\bd[0,1]^2$, we obtain that $g_2(x,y)=1-xy$ on $\bd[0,1]^2$ as required. It remains to show that

$$\L g_3=\frac{\partial^2}{\partial x^2} g_3(x,y) + \frac{\partial^2}{\partial y^2} g_3(x,y) +2 \cos{\theta}\frac{\partial^2}{\partial x \partial y} g_3(x,y) \geq 0$$
for all $ (x,y)\in (0,1)^2$ for any $-\frac12\geq \cos(\theta) \geq -1$.

To show $\L g_3\geq 0$, we consider $\L g_3=p(x,y,\cos(\theta))$ as a polynomial in $x,y$ and $\cos(\theta)$. Replacing $z=\cos (\theta)$, our aim is to show $p(x,y,z)\geq 0$ if $0\leq x,y\leq 1$ and $-\frac12\leq z\leq -1$. Equivalently, we need to show $p(x,y,z)\geq 0$ whenever $r_1(x,y,z):=x-x^2\geq 0$, $r_2(x,y,z):=y-y^2\geq 0$ and $r_3(x,y,z):=-(z+\frac12) \geq 0 $ and $r_4(x,y,z):=(z+1)\geq 0$. This we show by obtaining polynomials $q_i(x,y,z)$ for $i=0,1,2,3,4$ such that $q_i$ is a sum of square polynomial of fixed degree and we have
 $$p(x,y,z)=q_0(x,y,z)+\sum_{i=1}^4 q_i(x,y,z)r_i(x,y,z).$$
 Observe that above polynomial inequality shows the desired inequality.  Indeed evaluate the above identity for any $0\leq x,y\leq 1$ and $-\frac12\geq z\geq -1$. Clearly, the RHS is non-negative. Each $q_i$ is non-negative since it is a SoS and each $r_i$ is non-negative by construction. We mention that we obtain these proofs via solving a semi-definite program of fixed degree (4) for each of $q_i's$. We also remark that these SoS expressions are obtained with a small error of order $\delta<10^{-5}$. This, formally, implies that the approximation factors of slightly worse than $\frac78$.
\end{proof}

\ratiomax*
\begin{proof}
We prove the three inequalities.  We also remark that the SoS expressions below are obtained with a small error of order $\delta<10^{-5}$. This, formally, implies that the approximation factors of slightly worse than $\frac78$.
\begin{enumerate}
\item If $1\geq \cos(\theta) \geq 0$, then $g_1(x,y)\geq 1\cdot SDP(x,y,\theta)$. Observe that $g_1(x,y)=SDP(x,y,\theta)$ and inequality holds.
\item If $0\geq \cos(\theta) \geq -\frac12$, then $g_2(x,y)\geq 7/8 \cdot SDP(x,y,\theta)$.
We need to show that
\begin{align*}
1-xy-2\cos(\theta)(x-x^2)(y-y^2)\geq \MtwoRatioText \cdot \left(1-xy-\cos(\theta)\sqrt{x-x^2}\sqrt{y-y^2}\right)
\end{align*}
which holds if

\begin{align*}
1-xy-16\cos(\theta)(x-x^2)(y-y^2)\geq -7\cos(\theta)\sqrt{x-x^2}\sqrt{y-y^2}
\end{align*}

Since both sides are non-negative ($1-xy\geq 0$ and $\cos(\theta)\leq 0$), it is enough to show
\begin{align*}
\left(1-xy-16\cos(\theta)(x-x^2)(y-y^2)\right)^2 -49\cos^2(\theta)(x-x^2)(y-y^2)\geq 0
\end{align*}

subject to $r_1(x,y,\cos (\theta)):=x-x^2\geq 0$, $r_2(x,y,\cos (\theta)):=y-y^2\geq 0$, $r_3(x,y,\cos (\theta)):=-\cos(\theta)\geq 0$, ,$r_4(x,y,\cos (\theta)):=xy-(1-x)(1-y)\cos^2(\theta)\geq 0$,
$r_5(x,y,\cos (\theta)):=(1-x)(1-y)-xy\cos^2(\theta)\geq 0$ where the last two constraints follow from Lemma~\ref{lem:feasibility}. Thus again, we construct sum of squares polynomials $q_{i}(x,y,\cos(\theta))$ for $0\leq i\leq 5$ such that

\begin{align*}
&\left(1-xy-16\cos(\theta)(x-x^2)(y-y^2)\right)^2 -49\cos^2(\theta)(x-x^2)(y-y^2)\\
&= q_0(x,y,\cos(\theta))+\sum_{i=1}^5 q_i(x,y,\cos(\theta)) r_i(x,y,\cos(\theta))
\end{align*}

\item If $-\frac12\geq \cos(\theta) \geq -1$, then $g_3(x,y)\geq \frac{7}{8} SDP(x,y,\theta)$. The similar argument as above allows us to obtain SoS proofs. We omit the details.
\end{enumerate}

\end{proof} 

\section{Omitted Proofs from Section~\ref{sect:mc-constraints}}

\paragraph*{Baseline Approximation}

\simpleapprox*
\begin{proof}
  If the constraints specified by $\FF$ and $\bb$ are satisfiable, then
  surely the following linear program also has a solution.
  \begin{align*}
      &\sum_{j \in F_i}{x_j} = b_i &\forall i = 1, \ldots, k\\
      &0 \le x_j \le 1 &\forall j = 1, \ldots, k
  \end{align*}
  We compute a solution $\bx \in \Re^n$ to the program, and form
  a vector $\by\in \Re^n$ by defining $y_j = (1-\eps)x_j +
  \frac{\eps}{2}$ for all $j \in [n]$. The vector $\by$ still
  satisfies the constraints approximately, i.e.~for all $i \in [k]$ we have
  \begin{equation}
    \label{eq:approx-linsys-soln}
    \left|\sum_{j \in F_i}{x_j} - b_i\right| \le \frac{\eps n}{2}.
  \end{equation}
  We now apply standard randomized rounding to $\by$: we form a set $S$
  by independently including any $j \in [n]$ in $S$ with probability
  $y_j$. By \eqref{eq:approx-linsys-soln}, and a Hoeffding and a union bound, 
  \[
  \Pr\Biggl[\exists i: \Bigl|\sum_{j \in F_i}{x_j} - b_i\Bigr| > \eps  n\Biggr]
  \le 
  2ke^{-\eps^2 n /2}.
  \]
  By the assumption we made on $n$, the right hand side is at most
  $\frac{\eps}{4}$. 

  Next we analyze the weight of the cut edges $a(\delta(S))$. Any edge
  $e = (i,j)$ has probability 
  \[
  y_i + y_j - y_i y_j \ge 2\eps(1-\eps) \ge \eps.
  \]
  to be cut. Therefore, $\E[a(\delta(S))] \ge \eps a(E)$. By Markov's
  inequality applied to $a(E) - a(\delta(S))$, 
  \[
  \Pr\Bigl[a(\delta(S)) < \frac{\eps}{2}\Bigr] 
  < \frac{1-\eps}{1 - \frac{\eps}{2}}
  \le 1 - \frac{\eps}{2}.
  \]
  Therefore, the probability that $S$ satisfies every constraint up to
  an additive error of $\eps n$, and $a(\delta(S)) \ge \frac{\eps}{2}
  a(E)$ is at least $\frac{\eps}{4}$. We get the high probability
  guarantee by repeating the entire rounding procedure a sufficient
  number of times. 
\end{proof}

\paragraph*{Approximation Ratio Analysis}

\mcall*
\begin{proof}
  Let us denote by $\theta_{ij}$ the angle between the unit vectors
  $\bw_i$ and $\bw_j$, i.e.~$\theta_{ij} =
  \arccos(\bkt{{\bw}_i}{{\bw}_j})$. Recall that, for any $i$,
  $\|\bv_i\|^2 = x_i$, and $\bv_i = x_i \bv_0 + \sqrt{x_i^2 - x_i}
  \bw_i$, where $\bv_0$ and $\bw_i$ are orthogonal to each
  other. Therefore, for any pair $\{i,j\}$, $\|\bv_i - \bv_j\|$ is
  characterized entirely by the triple $(x_i,x_j,\theta)$, and is
  equal to
  \begin{equation}
    \label{eq:mcsc-obj}
    \norm{\bv_i - \bv_j}^2 = x_i + x_j -2\cos(\theta_{ij})
    \sqrt{x_i(1-x_i)x_j(1-x_j)}.
  \end{equation}
  We will refer to triples $(x,y,\theta)$ as configurations, and will
  denote the expression on the right hand side of~\eqref{eq:mcsc-obj}
  with $x_i = x$, $x_j = y$, and $\theta_{ij} = \theta$ by
  $\SDP(x,y,\theta)$.
    
  To calculate $\Pr[(\bX_{\tau_n})_i \neq (\bX_{\tau_n})_j)] $, we use
  the techniques introduced in~\cref{sec:pde}. More concretely, let
  \[ 
  u_{\theta}(x,y) = \Pr \big[(\bX_{\tau_n})_i \neq (\bX_{\tau_n})_j) \mid
  ((\bX_{0})_i,(\bX_{0})_j)=(x,y) \big]. 
  \] 
  As shown in ~\Cref{sec:pde}, the function $u_{\theta}$ is the unique
  solution to the partial differential equations
  \begin{align}
    \frac{\partial^2 u_\theta}{\partial x^2} + \frac{\partial^2 u_\theta}{\partial y^2} + 2\cos(\theta) \frac{\partial^2 u_\theta}{\partial x \partial y} &= 0 &\qquad \forall (x,y) \in \Int[0,1]^2 \label{eq:mcsc-pde}\\
    u_\theta(x,y) &=  x+y-2xy & \quad\forall (x,y) \in \bd[0,1]^2\label{eq:mcsc-pde-bd}
  \end{align}
  The above system is a Dirichlet problem and can be solved
  numerically for any configuration $(x,y,\theta)$.

  To calculate the worst case approximation ratio of the Sticky
  Brownian Motion algorithm, it suffices to evaluate
  $\min_{x,y,\theta} \frac{u_{\theta}(x,y)}{\SDP(x,y,\theta)}$.
  However, just taking a minimum over all $(x,y,\theta) \in [0,1]^2
  \times [0,\pi]$ is too pessimistic, since there are many
  configurations $(x,y,\theta)$ which never arise as solutions to
  $\SoS_{\ell}$ for any $\ell \geq 2$.  It is therefore necessary to
  consider only configurations that may arise as solutions to some
  instance. In particular, we know that any vectors $\bv_0, \bv_i,
  \bv_j$ in the SDP solution satisfy the triangle inequalities
  \eqref{mc-mult-tri1}--\eqref{mc-mult-tri3}. Translating these
  inequalities to inequalities involving $x_i, x_j$ and $\theta_{ij}$ gives
  \begin{align*}
    \cos(\theta) &\geq \max\left( 
      -\sqrt{\frac{(1-x_i)\cdot (1-x_j)}{x_i \cdot x_j}},
      -\sqrt{\frac{(x_i \cdot x_j )}{(1-x_i)(1-x_j)}}  \right)  \\
    \cos(\theta) &\leq \min \left( 
      \sqrt{\frac{x_i \cdot (1-x_j)}{ (1-x_i) \cdot x_j}},
      \sqrt{\frac{(1-x_i) \cdot x_j )}{x_i \cdot (1-x_j)}}  \right)  
  \end{align*}
  To compute the worst case approximation ratio, we use numerical
  methods.  In particular, we solve the Dirichlet problem
  \eqref{eq:mcsc-pde}--\eqref{eq:mcsc-pde-bd} for all configurations
  $(x,y,\theta)$ satisfying the inequalities above, with a granularity
  of $0.02$ in each coordinate. This numerical computation shows that
  the ratio $\frac{u_{\theta}(x,y)}{\SDP(x,y,\theta)}$ is at least
  $0.843$ for all valid configurations.
\end{proof}

\paragraph*{Hitting Time Analysis}

\hittingtime*
\begin{proof}
  We first make some observations about Brownian motion in $\Re$. 
  Let $Z_t$ be a standard one-dimensional Brownian motion started in
  $Z_0 = z \in [0,1]$, and let $\sigma = \inf\{t: Z_t \in
  \{0,1\}\}$ be the first time $Z_t$ exits the interval $[0,1]$. By
  Theorem~2.49.~in~\cite{MP10}, $\E[\sigma] = z(1-z) \le
  \frac14$. Therefore, by Markov's inequality, $\Pr[\sigma > 1] <
  \frac14$. Now observe that, by the Markov property of Brownian
  motion, for any integer $t \ge 0$ we have
  \[
  \Pr[\sigma > t]
  = 
  \prod_{r = 0}^{t-1}   \Pr\bigl[\forall s \in [r, r+1]: 0 < Z_s < 1
  \mid 0 < Z_r < 1 \bigr].
  \]
  But, conditional on $Z_r$, the process $\{Z_s\}_{s \ge r}$ is a
  Brownian motion started at $Z_r$, and, as we observed above, each of
  the conditional probabilities on the right hand side above is bounded by
  $\frac14$. Therefore, we have $\Pr[\sigma > t] < 4^{-t}$.

  To prove the lemma, we just notice that, until the first
  time $\sigma_i$ when $(\bX_t)_i$ reaches $\{0,1\}$, it is distributed like a
  one-dimensional Brownian motion started at $x_i$. This follows
  because, at any $t < \sigma_i$,  the variance per step of
  $(\bX_t)_i$ is $(\bW_t)_{i,i}  = \bW_{i,i} = 1$. Then, by
  observations above, $\Pr[\sigma_i > t] \le 4^{-t}$. 
\end{proof}

\section{Omitted Proofs from Section~\ref{sec:general}}

\paragraph*{Hitting Times Analysis}

\begin{lemma}
\label{lem:slowdown-stop}
    The expected hitting time $\E[\tau] $ for the diffusion process defined for
    Brownian Walk Algorithm with Slowdown  when the starting point $\bX_0 \in [-1+\delta, 1-\delta]^n$
    and $\alpha$ is a constant.
\end{lemma}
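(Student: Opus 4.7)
The plan is to reduce the analysis to a one-dimensional problem, coordinate by coordinate, and then solve an explicit ODE. Since $\bW_{ii} = 1$ and $[\mathbf{A}(\bX_t)]_{ii} = (1-(\bX_t)_i^2)^{\alpha/2}$, the $i$-th coordinate satisfies
\[
  d(\bX_t)_i = (1-(\bX_t)_i^2)^{\alpha/2}\, dM_t^i, \qquad M_t^i = \int_0^t \sum_j [\bW^{1/2}]_{ij}\, dB_s^j.
\]
The martingale $M^i$ has quadratic variation $\langle M^i\rangle_t = \bW_{ii}\, t = t$, so by L\'evy's characterization it is a standard one-dimensional Brownian motion (though not independent of the $M^k$'s for $k \neq i$). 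Define $\tau_i = \inf\{t : (\bX_t)_i \in \{-1,+1\}\}$; the freezing time of the whole process satisfies $\tau \le \sum_i \tau_i$, so it suffices to show that each $\E[\tau_i]$ is finite, uniformly over starting points in $[-1+\delta,1-\delta]$.

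For the scalar SDE $dY_t = (1-Y_t^2)^{\alpha/2}\, dW_t$ with hitting time $\tau_Y = \inf\{t : Y_t \in \{-1,+1\}\}$, I would define $u(y) = \E_y[\tau_Y]$ and appeal to the PDE characterization (the scalar version of Theorem~9.2.14 of \cite{Oksendal-SDE}) to identify $u$ as the unique bounded solution of
\[
  \tfrac{1}{2}(1-y^2)^\alpha u''(y) = -1, \qquad y \in (-1,1), \qquad u(\pm 1) = 0.
\]
Integrating twice gives $u(y) = -2\int_0^y \int_0^w (1-z^2)^{-\alpha}\, dz\, dw + C_1 y + C_0$, with $C_0, C_1$ fixed by the boundary conditions. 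The essential step is to verify that the iterated integral converges at the endpoints $\pm 1$. Near $z = 1$, $(1-z^2)^{-\alpha} \sim 2^{-\alpha}(1-z)^{-\alpha}$; the inner integral grows like $(1-w)^{1-\alpha}$ (or $-\log(1-w)$ if $\alpha = 1$), and integrating once more yields an $O((1-w)^{2-\alpha})$ term, which is integrable precisely when $\alpha < 2$. The same estimate applies near $z = -1$. Hence $u$ is continuous and bounded on $[-1,1]$, and $\sup_{y \in [-1+\delta,\,1-\delta]} u(y) < \infty$.

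Combining the two parts, $\E[\tau] \le \sum_{i=1}^n \E[\tau_i] \le n \cdot \sup_{y \in [-1+\delta,\,1-\delta]} u(y) < \infty$. The main technical point (and really the only subtlety) is justifying the PDE/Dynkin correspondence when the diffusion coefficient degenerates at $\pm 1$: I would handle this by a standard localization, applying It\^o's formula to $u$ stopped at $\tau_Y \wedge \inf\{t : |Y_t| \ge 1 - 1/k\}$ and letting $k \to \infty$, with the boundedness of $u$ together with $u(\pm 1) = 0$ supplying the uniform integrability required to pass to the limit.
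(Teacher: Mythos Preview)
Your proof is correct and follows the same core strategy as the paper: reduce to a one-dimensional diffusion, construct a function $f$ with $(1-y^2)^\alpha f'' = \text{const}$, and invoke Dynkin's formula with a localization. The execution differs in a few ways worth noting. The paper simply asserts ``without loss of generality the dimension is $1$'' and writes down an explicit $f$ in terms of a hypergeometric ${}_2F_1$ (with a separate logarithmic formula when $\alpha=1$), working on the truncated interval $[-1+\delta,1-\delta]$ so that the generator stays uniformly elliptic. You instead give an explicit reduction via L\'evy's characterization and the bound $\tau\le\sum_i\tau_i$, and you avoid special functions entirely by analyzing the iterated integral $\int_0^y\int_0^w(1-z^2)^{-\alpha}\,dz\,dw$ directly, showing it extends continuously to $\pm 1$ precisely when $\alpha<2$. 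Your localization at $|Y_t|\ge 1-1/k$ plays the same role as the paper's truncation $\mu_j=\min(j,\tau)$. Your route is slightly more self-contained and actually yields a mildly stronger conclusion (finite expected hitting time of $\{-1,+1\}$ itself, uniform over \emph{all} starting points, not just those $\delta$-away from the boundary), whereas the paper is content to bound the exit time from $[-1+\delta,1-\delta]$, which is all the discrete algorithm needs.
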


While the hitting time is only defined for the points away from the boundary,
this is the region where the discrete algorithm runs. Therefore, this is sufficient 
for the analysis of our algorithm. 

\begin{proof}[Proof Sketch]
    Without loss of generality, we assume the number of dimensions is $1$. In the one
    dimensional walk, the diffusion process satisfies the stochastic
    differential equation:
    \begin{equation}
        d\bX_t  = (1-\bX_t^2)^{\alpha/2}  dB_t  \label{eqn:sde-1d}
    \end{equation}
    To show this we use Dynkin's equation to compute stopping times which we present below
    specialised to the diffusion process at~\Cref{eqn:sde-1d}. 

    \noindent {\bf Dynkin's Equation} (Theorem~7.4.1 in~\cite{Oksendal-SDE})
    Let $f \in C^2\left([-1+\delta,1-\delta]\right)$. Suppose $\mu$ is a
    finite stopping time, then \[ \E^{x}[f(\bX_{\mu})] 
	= f(x) + \E^{x} \left[\int_{0}^{\mu} \left( (1-x^2)^{\alpha} \cdot \frac{\partial^2}{\partial x^2} f(\bX_{s}) \right) ds \right] \]

    Let $f(x)$ denote the function
\[ f(x) = x^2 \cdot \pFq{2}{1}{\frac12,\alpha}{\frac32}{x^2} - \frac{1-(1-x^2)^{1-\alpha}}{2(\alpha-1)} +C_1 \cdot x + C_2 \]
    where $C_1$ and $C_2$ are chosen so that $f(1-\delta) = f(-1+\delta) = 1$.
    Observe that for a fixed $\delta >0$, $f$ is well-defined and finite 
    in the domain $[-1+\delta,1+\delta]$ and satisfies $ - K_{\delta} \leq f(x) \leq K_{\delta}$ where $K_{\delta}$.  Furthermore, $f$ satisfies the
    differential equation\footnote{We verify this using Mathematica.} 
    $(1-x^2)^{\alpha} \frac{\partial^2 f}{\partial x^2} = 1$. 

	Let $\mu_j = \min(j,\tau)$  and applying Dynkin's equation we get that
	\[ \E^{x}[f(\bX_{\mu_j})] = f(x) + \E^{x}\left[\int_{0}^{\mu_j} 1 ds \right] = f(x) + \E^{x}[\mu_j] \]
Simplifying the above we get that $2K_{\delta} \geq \E^{x}[\mu_j]$ for all $j $. 
	Since we know that $\E^{x}[\tau] = \lim_{j \to \infty} \E^{x}[ \mu_j]$ almost surely,  we can bound
$2K_{\delta}\geq \E[\tau] $

    Observe that the proof does not work when $\alpha =1$. For this case, we simply change
    \[ f(x)=C_1\cdot x + C_2 + \frac{1}{2} \left[ (1+x) \log (1+x)   + (1-x) \log (1-x) \right] \]  and the argument
    goes through verbatim.
\end{proof}

\paragraph*{Remark about~\Cref{lem:dirichlet}} \label{rem:dirichlet}

The proof is largely similar to the one described in~\Cref{lem:boundary} and~\Cref{thm:oksendal-max-principle}
with two caveats:
\begin{enumerate}
     \item~\Cref{thm:oksendal-max-principle} is stated with a fixed $\mathbf{\Sigma}$. However we can handle
    the general case, where $\mathbf{\Sigma}$ is allowed to depend on the diffusion prcoess(i.e.~$\mathbf{\Sigma}(\bX_t)$), by 
    appealing to general Theorem 9.2.14 in~\cite{Oksendal-SDE}.

    \item To apply Theorem 9.2.14 from~\cite{Oksendal-SDE}, we need the
    resulting matrix  $\mathbf{\Sigma}(\bX_t) \mathbf{\Sigma}(\bX_t)^\top$to have eigenvalues bounded away
    from zero. In our case, $\mathbf{\Sigma}(\bX_t) \mathbf{\Sigma}(\bX_t)^\top$ can have zero rows and
    columns on the boundary.  To avoid this, we simply restrict our domain to be
    the hypercube scaled by a small value $[-1+\delta,1-\delta]$.  This is
    sufficient since our discrete algorithm will only run in this region. 

\end{enumerate}

\end{document}